\documentclass[sigconf,nonacm]{acmart}

\newcommand\vldbdoi{XX.XX/XXX.XX}
\newcommand\vldbpages{XXX-XXX}
\newcommand\vldbvolume{14}
\newcommand\vldbissue{1}
\newcommand\vldbyear{2020}
\newcommand\vldbavailabilityurl{https://github.com/cryptosdu/MESS}
\newcommand\vldbpagestyle{plain}

\usepackage{amsmath,amsthm,mathtools}
\usepackage{bm}
\usepackage{xspace}
\usepackage{xcolor,color,colortbl}
\usepackage{graphicx}
\usepackage{booktabs}
\usepackage{multirow}
\usepackage{adjustbox}
\usepackage{enumitem}
\usepackage{multicol}
\usepackage{float}
\usepackage{placeins}
\usepackage{stfloats}
\usepackage{pifont}
\usepackage{soul}
\usepackage{svg}
\usepackage{etoolbox}
\usepackage{threeparttable}

\usepackage{algorithm}
\usepackage{algpseudocode}
\algrenewcommand{\algorithmicrequire}{\textbf{Input:}}
\algrenewcommand{\algorithmicensure}{\textbf{Output:}}

\usepackage{header/protocolj}
\usepackage{header/game}
\usepackage{arydshln}

\graphicspath{{figures/}}

\usepackage{tikz}
\usepackage{pgfplots}
\usepgfplotslibrary{groupplots}
\pgfplotsset{compat=1.18}

\usepackage{subcaption}
\usepackage[normalem]{ulem}

\hypersetup{
  colorlinks=true,
  citecolor=black,
  urlcolor=black,
  linkcolor=black,
  anchorcolor=black
}
\usepackage{caption}
\usepackage{subcaption}    
\newcommand{\name}{\ensuremath{\textrm{MESS}}\xspace}

\newcommand{\cmark}{\ding{51}}
\newcommand{\xmark}{\ding{55}}

\makeatletter
\@ifundefined{definition}{}{}
\@ifundefined{theorem}{\newtheorem{theorem}{Theorem}}{}
\@ifundefined{lemma}{}{}
\@ifundefined{remark}{}{}
\@ifundefined{proposition}{\newtheorem{proposition}{Proposition}}{}
\@ifundefined{corollary}{\newtheorem{corollary}{Corollary}}{}
\makeatother

\definecolor{PaperBlue}{RGB}{78,121,167}
\definecolor{PaperOrange}{RGB}{242,142,43}
\definecolor{PaperGreen}{RGB}{89,161,79}
\definecolor{PaperRed}{RGB}{225,87,89}
\definecolor{PaperPurple}{RGB}{117,112,179}
\sethlcolor{yellow}

\pgfmathdeclarefunction{erf}{1}{%
  \begingroup
  \pgfmathparse{#1 < 0 ? -1 : 1}%
  \edef\sign{\pgfmathresult}%
  \pgfmathparse{abs(#1)}%
  \edef\x{\pgfmathresult}%
  \pgfmathparse{1/(1+0.3275911*\x)}%
  \edef\t{\pgfmathresult}%
  \pgfmathparse{%
    \sign*(1 - (((((1.061405429*\t - 1.453152027)*\t + 1.421413741)*\t - 0.284496736)*\t + 0.254829592)*\t)*exp(-\x*\x))%
  }%
  \pgfmathsmuggle\pgfmathresult
  \endgroup
}

\theoremstyle{plain}

\numberwithin{theorem}{section}

\begin{document}

\setcounter{page}{1}

\title{\name: Fast and Private Semantic Search on Multi-Graph HNSW}

\author{Haoyu Cui}
\email{cuihaoyu@mail.sdu.edu.cn}
\affiliation{
  \institution{Shandong University}
  \country{China}
}

\author{Zengpeng Li}
\email{lizengpeng@sdu.edu.cn}
\affiliation{
  \institution{Shandong University}
  \country{China}
}

\author{Tien Tuan Anh Dinh}
\email{anh.dinh@deakin.edu.au}
\affiliation{
  \institution{Deakin University}
  \country{Australia}
}

\author{Mei Wang}
\email{wangmeiz@sdu.edu.cn}
\affiliation{
  \institution{Shandong University}
  \country{China}
}

\begin{abstract}
Semantic search systems map data to a high-dimensional vector space and support retrieval of similar data via approximate nearest neighbor search. When the system is hosted by an untrusted cloud provider, there is no privacy for the data or the query. Our goal is to design a system with three properties: privacy, accuracy, and efficiency. Existing works adopt either homomorphic encryption (HE), oblivious RAM (ORAM), or a differential privacy (DP) approach. They fall short of achieving all three properties. 

In this paper, we present \name{}, a system that realizes our goal. It  maps the original vectors into binary codes, applies locality-sensitive hashing (LSH) and randomized response, and constructs a multi-graph Hierarchical Navigable Small World (HNSW)
index over the perturbed codes. \name{} ensures privacy of data, queries, and access patterns. It also ensures search pattern privacy via a two-phase query perturbation mechanism. The multi-graph index mitigates the impact of perturbation on result quality, thereby achieving accuracy. \name{} is efficient because search is performed directly over perturbed codes, without the overhead of homomorphic encryption or ORAM. We provide a formal analysis of the system's privacy and an extensive evaluation of its performance. The results show that \name{} achieves up to $15.08\times$ lower latency than state-of-the-art baselines.
\end{abstract}

\maketitle

\pagestyle{\vldbpagestyle}

\begingroup
\small
\noindent
\raggedright
\textbf{PVLDB Reference Format:}\\
Haoyu Cui, Zengpeng Li, Tien Tuan Anh Dinh, and Mei Wang.
\name: Fast and Private Semantic Search on Multi-Graph HNSW.
PVLDB, \vldbvolume(\vldbissue): \vldbpages, \vldbyear.\\
\href{https://doi.org/\vldbdoi}{doi:\vldbdoi}
\endgroup

\begingroup
\renewcommand\thefootnote{}\footnote{\noindent
This work is licensed under the Creative Commons BY-NC-ND 4.0 International License.
Visit \url{https://creativecommons.org/licenses/by-nc-nd/4.0/} to view a copy of this license.
For any use beyond those covered by this license, obtain permission by emailing
\href{mailto:info@vldb.org}{info@vldb.org}.
Copyright is held by the owner/author(s).
Publication rights licensed to the VLDB Endowment.\\
\raggedright
Proceedings of the VLDB Endowment, Vol. \vldbvolume, No. \vldbissue\ %
ISSN 2150-8097.\\
\href{https://doi.org/\vldbdoi}{doi:\vldbdoi}
}
\addtocounter{footnote}{-1}
\endgroup

\ifdefempty{\vldbavailabilityurl}{}{
\begingroup
\small
\noindent
\raggedright
\textbf{PVLDB Artifact Availability:}\\
The source code and supporting artifacts will be made publicly available upon
acceptance.
\endgroup
}

\section{Introduction}\label{sec:intro}
Semantic search systems support a wide range of applications, from classic information retrieval, image matching~\cite{lowe2004distinctive,sivic2003video}, and recommendation~\cite{sarwar2001item} to pattern recognition~\cite{cover1967nearest,weinberger2009distance} and recent retrieval-augmented generation (RAG) applications. A typical semantic search system maps data and queries to high-dimensional vectors (or embeddings). Its main operation is approximate nearest neighbor (ANN) search, which finds the closest vectors to a given search vector using a distance metric. Recent RAG applications, driven by the success of large language models (LLMs), have led to renewed interest in ANN algorithms, indices, and vector database systems that enable efficient and scalable semantic search~\cite{hu2025hakes, sembench, wang2025leann}. As data and query volumes grow, semantic search systems are migrating to the cloud~\cite{googlevector,amazonsearch,weaviate}. Specifically, cloud-based semantic search services are attractive because they can be efficient, accurate, cost-effective, and scalable. However, they offer no privacy guarantees, as they operate on raw (or plaintext) embeddings and user queries, which can reveal sensitive information about the data and the user. As a consequence, they do not support applications that process sensitive data, such as private search over personal photos and
documents and  personal AI agents~\cite{kaviani2026opal}. 



Our goal is to design a semantic search system with three properties: privacy, accuracy, and efficiency. The first property covers privacy of data, queries, access patterns, and search patterns. The second property means that the search results have high recall, while the last property means low search overhead. The main challenge in achieving this goal is to satisfy all three properties simultaneously. 
%
%
Existing works on private semantic search adopt one of the following three approaches. First, the homomorphic encryption and secure-computation approach, for example, ~\cite{chen2020sanns,Riazi2019SubLinearPPNNS,Liu2025PPANNS,li2025panther}, computes vector distance directly on encrypted vectors. This approach achieves privacy but is inefficient due to the overhead of encrypted computation with high-dimensional vectors. Even when using clustering to reduce the search space, the search cost is still significant because of the large number of candidate vectors. Second, the ORAM approach, for example \cite{zhu2025compass,ahmad2021coeus}, achieves privacy by hiding the vectors being accessed during search. In particular, 
Compass~\cite{zhu2025compass} combines an efficient vector index, ORAM, and product quantization to
protect graph-access patterns~\cite{zhu2025compass}. However, it still fails to achieve efficiency because adaptive graph
traversal requires multiple ORAM operations, increasing communication overhead. Third, the differential privacy (DP) approach, for example \cite{Fernandes2021LSHXDP,Aumuller2020DPJaccard,chen2018differentially}, trades privacy for efficiency by reducing strict privacy to quantified leakage and allowing distance computation on perturbed vectors. However, existing works do not consider any index, thus suffering from high search overhead. They fail to achieve accuracy due to recall degradation caused by vector perturbations. 

We present \name{} that achieves our goal. It addresses the above challenge via a novel combination of differential privacy and a multi-graph index. In particular, it adopts the hierarchical navigable small world (HNSW) graph index for ANN search. It applies locality-sensitive hashing (LSH), followed by an LSH randomized response mechanism, to the original embeddings before building the graph. The DP mechanism ensures privacy of data and access patterns while enabling efficient computations over perturbed vectors. \name{} overcomes the accuracy challenge of the DP-based approach by building and searching over multiple graphs, which reduces rank distortion and candidate set expansion. \name{} introduces a two-phase query-perturbation process that restricts leakage of search patterns. 
\autoref{tab:related} compares \name{} with other works providing private keyword and semantic search. In summary, we make the following contributions.
\begin{itemize}[leftmargin=1.35em,itemsep=0.05em,topsep=0.1em]
\item We propose an efficient HNSW search over perturbed binary embeddings. The client maps data and query vectors to binary codes, then applies an LSH randomized response mechanism to them. The server builds
multiple HNSW graphs over the perturbed codes. It performs a Hamming-distance search on the graphs and returns the results in a single round. 

\item We propose a multi-graph design that addresses search quality degradation caused by perturbations, and a two-stage randomization process to protect repeated queries against averaging and direct linkage attacks. In particular, each query is routed to multiple graph indices and their results are aggregated. Furthermore, the client applies a permanent and an instantaneous randomization to each query. 

\item We provide formal analysis of the privacy bounds for the stored index, query access patterns, and repeated-query search patterns. We evaluate representative cross-shard inference methods to measure practical leakage.

\item We evaluate \name{} on standard datasets and compare it against non-private, homomorphic-encryption, and ORAM baselines. The results show that \name achieves up to \(15.08\times\) lower latency and \(35.28\times\) lower communication overhead than Compass~\cite{zhu2025compass}. On the SIFT100M dataset, \name{} achieves 52.53 ms/query, demonstrating its scalability. 
\end{itemize}

\begin{table*}
\centering
\scriptsize
\caption{Comparison of private keyword and semantic search solutions.}
\label{table:pir_comparison}
\renewcommand{\arraystretch}{1.15}
\setlength{\tabcolsep}{2.5pt}

\resizebox{\textwidth}{!}{%
\begin{tabular}{lccccccccc}
\toprule
\multirow{3}{*}{\textbf{Scheme}}
& \multirow{3}{*}{\textbf{\begin{tabular}[c]{@{}c@{}}Data\\Setting\end{tabular}}}
& \multirow{3}{*}{\textbf{\begin{tabular}[c]{@{}c@{}}Search\\Scheme\end{tabular}}}
& \multirow{3}{*}{\textbf{\begin{tabular}[c]{@{}c@{}}Cryptographic\\Algorithm\end{tabular}}}
& \multirow{3}{*}{\textbf{\begin{tabular}[c]{@{}c@{}}Search\\Method\end{tabular}}}
& \multicolumn{3}{c}{\textbf{Privacy}}
& \textbf{Accuracy}
& \textbf{Efficiency}\\

\cmidrule(lr){6-8}
\cmidrule(lr){9-9}
\cmidrule(lr){10-10}

&
&
&
&
&
\textbf{\begin{tabular}[c]{@{}c@{}}Stored\\Data\end{tabular}}
& \textbf{\begin{tabular}[c]{@{}c@{}}Access\\Pattern\end{tabular}}
& \textbf{\begin{tabular}[c]{@{}c@{}}Search\\Pattern\end{tabular}}
& \textbf{\begin{tabular}[c]{@{}c@{}}Near-Plaintext\\Quality\end{tabular}}
& \textbf{\begin{tabular}[c]{@{}c@{}}Efficient Online\\Search\end{tabular}}\\
\midrule

CGKO06~\cite{curtmola2006searchable}
& Encrypted DB & Keyword
& SSE & Index-Based
& \cmark & \xmark & \xmark & -- & --\\

CLRZ18~\cite{chen2018differentially}
& Encrypted DB & Keyword
& SSE, DP & Index-Based
& \cmark & \cmark & \xmark & -- & --\\

SOPK21~\cite{Shang2021OSSE}
& Encrypted DB & Keyword
& SSE, DP & Index-Based
& \cmark & \cmark & \cmark & -- & --\\

\midrule

LZXL25~\cite{liu2025privacy}
& Encrypted DB & Embedding
& DCPE, DCE & Graph-Based
& \cmark & \xmark & \xmark & \cmark & \cmark\\

PANTHER~\cite{li2025panther}
& Encrypted DB & Embedding
& PIR, HE, SS, GC & $k$-means
& \cmark & \cmark & \cmark & \cmark & \xmark\\

Compass~\cite{zhu2025compass}
& Encrypted DB & Embedding
& ORAM & Graph-Based
& \cmark & \cmark & -- & \cmark & \xmark\\

Tiptoe~\cite{henzinger2023private}
& Server-Held DB & Embedding
& HE, PIR & $k$-means
& -- & \cmark & -- & \cmark & \xmark\\

Wally~\cite{asi2024scalable}
& Server-Held DB & Embedding
& HE, PIR, DP & $k$-means
& -- & \cmark & -- & \cmark & \xmark\\

PACMANN~\cite{zhou2024pacmann}
& Server-Held DB & Embedding
& PIR & Graph-Based
& -- & \cmark & -- & \cmark & \xmark\\
\hdashline
\textbf{Ours}
& Encrypted DB & Embedding
& LSHRR & LSH, Graph-Based
& \cmark & \cmark & \cmark & \cmark & \cmark\\

\bottomrule
\end{tabular}%
}

\vspace{1pt}
\begin{minipage}{\textwidth}
\footnotesize
\noindent
\textit{Abbreviations:}
DB: database; SSE: searchable symmetric encryption; DP: differential privacy;
DCPE: distance-comparison-preserving encryption; DCE: distance comparison
encryption; PIR: private information retrieval; HE: homomorphic encryption;
SS: secret sharing; GC: garbled circuit; ORAM: oblivious random access memory;
LSH: locality-sensitive hashing; and LSHRR: LSH randomized response.

\noindent
\textit{Notes:}
Encrypted DB denotes data encrypted before outsourcing, while Server-Held DB
denotes data owned by the server. \cmark indicates that the goal is supported;
\xmark indicates that it is not protected or achieved; and ``--'' denotes an
inapplicable or unreported guarantee. Near-plaintext quality means that the
reported retrieval quality approaches the corresponding plaintext ANN
baseline. Efficient online search means indexed, one-round retrieval without
HE-based similarity evaluation or ORAM-protected graph traversal.
\end{minipage}
\label{tab:related}
\end{table*}

The remainder of the paper is structured as follows. \autoref{sec:models} presents the system model and discusses the security goals. \autoref{sec:preliminaries} provides the relevant background, followed by \autoref{sec:scheme}, which describes \name{} in detail. \autoref{sec:analysis} presents the security analysis, followed by the performance evaluation in \autoref{sec:evaluation}. \autoref{sec:relatedwork} discusses related work, and \autoref{sec:conclusion} concludes. 
\section{System and Threat Models}\label{sec:models}

In this section, we first outline the architecture of \name{}. We then discuss the threat model, followed by the system and security goals. 
\subsection{System Overview}

\begin{figure}
    \centering
    \includegraphics[width=0.5\textwidth]{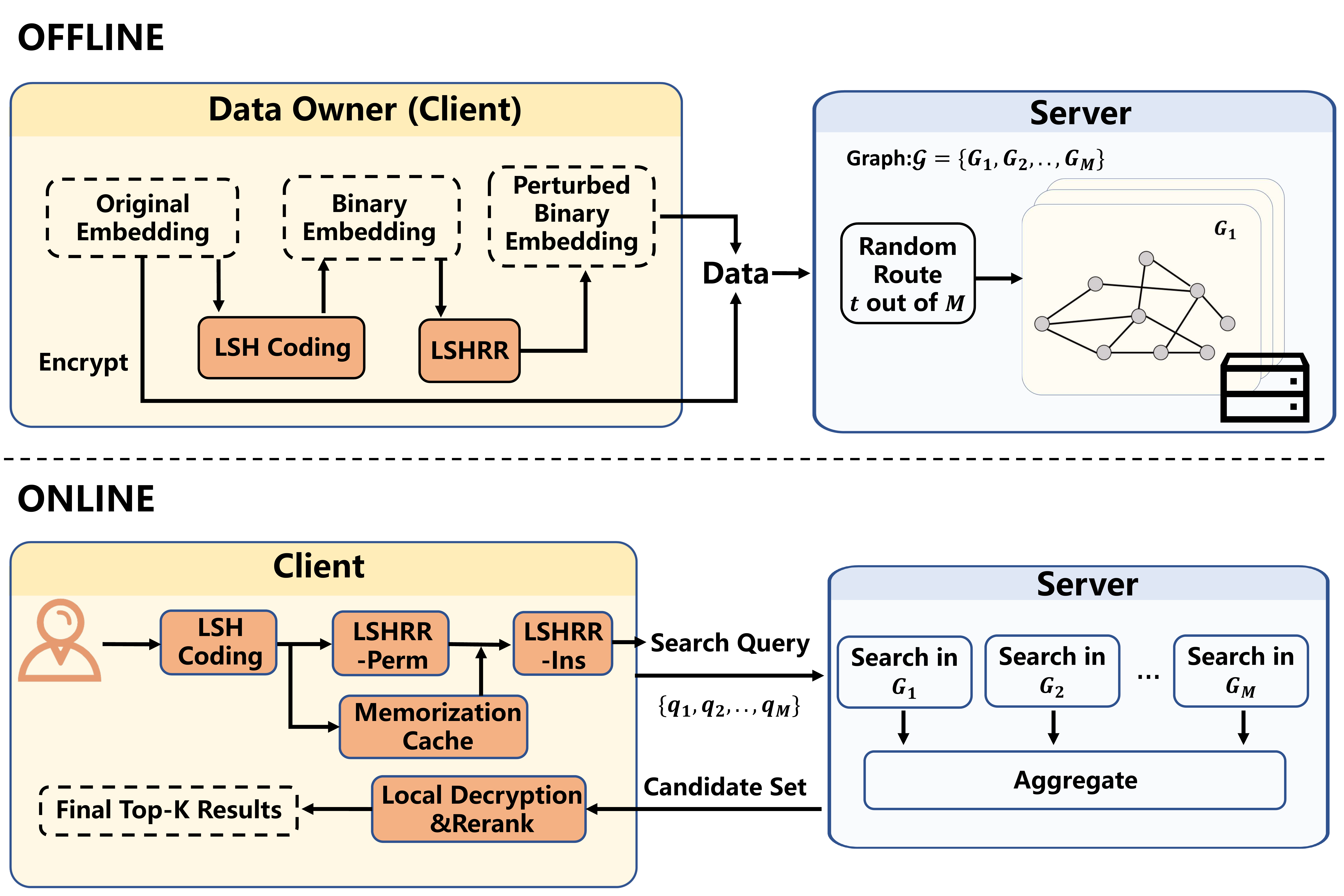}
    \caption{System architecture of \name.}
    \label{fig:overview}
\end{figure}
There are two entities in \name{}: a client $\mathcal{C}$ and a server $\mathcal{S}$. The client is the data owner and user; that is, it outsources the data to $S$ and issues search queries. The client's data is modeled as 
$\mathcal{D}=\{(\mathsf{id}_i,\bm{x}_i)\}_{i=1}^{N}$, where $id_i$ is the data item $i$ unique identifier, and $\bm{x}_i\in\mathcal{X}$ is its corresponding embedding vector. The client encrypts its data before sharing it with the server. 

\autoref{fig:overview} shows the system architecture of \name. It operates in two phases. In the offline phase, the client maps the original embeddings to binary codes using locality-sensitive hashing (LSH) and perturbs them using LSH randomized
response (LSHRR). The server maintains $M$ HNSW graph indexes (or {\em shards}). The client encrypts each data tuple and uploads it together with the perturbed codes to a subset of $M$. The server builds the HNSW indexes based on the perturbed codes.  In the online phase, the client hashes and perturbs the query embedding. The server performs a Hamming-distance search for nearest neighbors of the perturbed query embedding, across all the shards. It then aggregates the shard-level candidates and returns their encrypted data. The client decrypts the results, removes duplicates using the identifiers, and
reranks the candidates based on the decrypted embeddings.


\vspace{3pt}
\noindent \textbf{Scope.} 
\name{} supports semantic search over a database of vector embeddings, as opposed to directly over raw data (e.g., documents or images). We abstract away the data pre-processing pipeline that generates embeddings from raw data, and the post-processing pipeline for retrieving the raw data based on the search results. The embedding generation can be performed by off-the-shelf embedding models for different data modality ML models~\cite{izacard2021unsupervised}. \name{} can be deployed in a setting where the client keeps the original data locally and only uses the server to find similar data~\cite{zhu2025compass}. In this setting, the client uses the unique identifiers $id_i$ to retrieve the matching raw data from local storage. \name{} can also be integrated with other PIR schemes~\cite{simplepir} to retrieve raw data from the server without revealing which data items are being retrieved.

\subsection{Threat Model}
We consider a semi-honest server that follows the protocol but wants to learn information about the data embeddings and about the queries. It can infer any geometric and semantic information directly from the perturbed codes~\cite{Fernandes2021LSHXDP}. The server knows the LSH mapping. It can record
and analyze the complete view of the outsourced index and query execution. Specifically, 
during the offline phase, it can observe the encrypted data, the perturbed codes, shard assignment, and the HNSW graph structures. During the online phase, it has complete visibility into query execution, including graph traversal
traces, candidate sets, response sizes, and repeated ciphertexts. We assume that cryptographic primitives are secure, and that the server does not have access to client-side states. 

\subsection{Goals}
\name{} aims to achieve three properties: privacy, accuracy,
and efficiency. We further break down privacy into stored-data, access-pattern, and search-pattern privacy. We formally analyze privacy in~\autoref{sec:security-analysis}, and provide a detailed evaluation of the other two in~\autoref{sec:evaluation}.

\begin{itemize}[leftmargin=1.35em,itemsep=0.05em,topsep=0.1em]
\item \textbf{\textit{Stored-data privacy.}}
The server should not learn the contents of the original data tuple, consisting of the identifier or embedding, from the outsourced representation. Encrypted payloads should hide their plaintext contents, while the index incurs only well-defined leakage about the underlying embeddings.

\item 
\textbf{\textit{Access-pattern privacy.}}
The server's ability to distinguish two query embeddings based on their execution traces is bounded. In other words, the query execution leakage is bounded.  

\item 
\textbf{\textit{Search-pattern privacy.}}
Across multiple queries, the server should have limited ability to determine
whether different queries originate from the same query embedding value. In particular, it cannot identify repeated queries. Otherwise, it can group repeated submissions, observe their frequency and timing, and use auxiliary information to infer query values~\cite{blackstone2019revisiting,oya2021hiding}. 

\item 
\textbf{\textit{Accuracy.}}
The system enables high-recall similar search, despite LSH encoding and randomized perturbation. In other words, the search results are close to those of non-private ANN search. 

\item 
\textbf{\textit{Efficiency.}}
The search incurs small computation and communication overhead, resulting in end-to-end latency in practice. In other words, the search is performed without scanning the full datasets, without overhead of encrypted computation, and without multi-round communications. 
\end{itemize}
\vspace{3pt}
\noindent\textbf{Discussion.} 
Access-pattern privacy concerns leakage from a single execution, whereas search-pattern privacy concerns leakage across multiple executions. These two properties together provide a similar guarantee to query privacy in other private information retrieval (PIR) systems~\cite{simplepir,albab2022batched}.

\section{Preliminaries}\label{sec:preliminaries}

We use calligraphic uppercase letters for sets, spaces, and randomized
mechanisms. Bold lowercase and uppercase letters denote vectors and matrices,
respectively. Scalars are written in italic font. Let
$\mathcal X\subseteq\mathbb R^d$ denote the original embedding space and
$\mathcal D=\{(\mathsf{id}_i,\bm x_i)\}_{i=1}^{N}$ the original data, where
$\bm x_i\in\mathcal X$. A query embedding is denoted by
$\bm q\in\mathcal X$. We use $\|\bm x\|_2$ for the Euclidean norm and
$\langle\cdot,\cdot\rangle$ for the inner product.
$\theta_{\bm x,\bm x'}$ denotes the angle between two non-zero embeddings, and
$d_\theta(\bm x,\bm x')=\theta_{\bm x,\bm x'}/\pi$ denotes their normalized
angular distance.

$\mathcal H$ represents a LSH function family, and $h\in\mathcal H$ represents one function. Each graph index shard $s$ uses a fixed $\kappa$-bit mapping
$H_s:\mathcal X\rightarrow\mathcal V$, where
$\mathcal V=\{0,1\}^{\kappa}$ is the binary embedding space. We write
$\bm v_{i,s}=H_s(\bm x_i)$ and $\bm v_q^{(s)}=H_s(\bm q)$ for the binary
embeddings of a data item and query. A perturbed binary embedding is denoted
by $\widetilde{\bm v}$, and $d_H(\cdot,\cdot)$ denotes the Hamming distance.

\subsection{Locality-Sensitive Hashing under Angular Distance}

\noindent
LSH~\cite{indyk1998approximate} maps
high-dimensional vectors into compact binary codes such that vectors close to each other
are more likely to share hash values. In semantic search, vector similarity
is commonly measured by cosine similarity: 
$
\mathrm{sim}_{\cos}(\bm x,\bm x')
=
\frac{\langle\bm x,\bm x'\rangle}
{\|\bm x\|_2\|\bm x'\|_2}.
$

\smallskip\noindent
\textbf{Isotropic Hashing (IsoHash).}
IsoHash~\cite{kong2012isotropic} overcomes the problem of imbalanced or redundant hash bits when the vector distribution is anisotropic by learning an orthogonal transformation. In \name, IsoHash is trained separately for each graph shard. The learned
parameters are fixed and reused for both index construction and
query generation. We define IsoHash formally as 
\(\Pi_{\mathrm{IsoHash}}=(\mathrm{PCA},\mathrm{Rotate},\mathrm{Hash})\).

\begin{itemize}[leftmargin=*] 
\item
$(\mathbf W,\Lambda)\leftarrow
\Pi_{\mathrm{IsoHash}}.\mathrm{PCA}
(\mathbf X\in\mathbb R^{d\times n},\kappa)$:
Randomly select \(n\) samples to form the mean-centered training matrix
\(\mathbf X\), where \(\kappa\) denotes the hash length of each graph shard.
PCA returns the projection matrix
\(\mathbf W\in\mathbb R^{d\times\kappa}\), formed by the eigenvectors
corresponding to the largest \(\kappa\) eigenvalues, and the covariance matrix
of the projected dimensions:
$
\Lambda
=
\mathbf W^T\mathbf X\mathbf X^T\mathbf W
=
\mathrm{diag}(\lambda_1,\lambda_2,\ldots,\lambda_\kappa).
$

\noindent 
\item $\mathbf Q\leftarrow \Pi_{\mathrm{IsoHash}}.\mathrm{Rotate}( \Lambda\in\mathbb{R}^{\kappa\times \kappa})$: Given the covariance matrix of the projected dimensions $ \Lambda$, output an isotropic orthogonal rotation matrix $\mathbf Q\in\mathbb{R}^{\kappa\times \kappa}$ satisfying $\mathbf Q^T\mathbf Q=I$, such that the diagonal elements of $\mathbf Q^T \Lambda \mathbf Q$ are all equal, i.e., $ [\mathbf Q^T \Lambda \mathbf Q]_{11}=[\mathbf Q^T \Lambda \mathbf Q]_{22}=\cdots=[\mathbf Q^T \Lambda \mathbf Q]_{\kappa\kappa}=\bar{\lambda}. $ Since an orthogonal transformation preserves the trace of a matrix, the target mean variance is necessarily $ \bar{\lambda}=\frac{1}{\kappa}\sum_{i=1}^{\kappa}\lambda_i. $ This optimization is typically solved by Lift-and-Projection (LP) or Gradient Flow (GF) algorithms. 

\noindent 
\item $h\leftarrow \Pi_{\mathrm{IsoHash}}.\mathrm{Hash}(\bm x\in\mathbb{R}^d,\mathbf W,\mathbf Q)$: Given an input data point $\bm x$ together with matrices $\mathbf W$ and $\mathbf Q$, output the final $\kappa$-bit binary embedding $h$, defined by $ h=\mathrm{sgn}(\mathbf Q^T\mathbf W^T\bm x). $ 
\end{itemize}

\subsection{LSHRR-Based Embedding Perturbation}

Traditional $\epsilon$-differential privacy defines neighboring inputs through a binary adjacency relation. This is less suitable for semantic search, where the protected inputs are continuous embeddings and privacy loss should depend on their distance. We therefore use extended differential privacy (XDP), which is based on the distance induced by a fixed binary hash mapping. 
Each graph shard learns an independent hash mapping. Let
$H:\mathcal X\rightarrow\{0,1\}^{\kappa}$, where
$H(\bm x)=\bigl(h_1(\bm x),\ldots,h_{\kappa}(\bm x)\bigr)$.
For two embeddings $\bm x,\bm x'\in\mathcal X$, we define the induced Hamming
pseudometric as
$
d_H(\bm x,\bm x')
=
\sum_{i=1}^{\kappa}
\mathbf{1}[h_i(\bm x)\ne h_i(\bm x')].
$
LSHRR~\cite{Fernandes2021LSHXDP} applies randomized response independently to
each bit of $H(\bm x)$. Given a per-bit privacy parameter $\epsilon$, the
bit-flip probability is
$
p=1/(e^\epsilon+1).
$
The resulting mechanism $Q_H$ defines a distribution over
$\{0,1\}^{\kappa}$. For any output
$\bm y\in\{0,1\}^{\kappa}$,
$
\Pr[Q_H(\bm x)=\bm y]
=
\prod_{i=1}^{\kappa}
p^{|y_i-h_i(\bm x)|}
(1-p)^{1-|y_i-h_i(\bm x)|}.
$
It follows that  bitwise randomized response under a fixed binary hash mapping satisfies, for every
$\bm x,\bm x'\in\mathcal X$ and
$S\subseteq\{0,1\}^{\kappa}$,
$
\Pr[Q_H(\bm x)\in S]
\le
e^{\epsilon d_H(\bm x,\bm x')}
\Pr[Q_H(\bm x')\in S]
$.
In other words, $Q_H$ satisfies $(\epsilon d_H,0)$-XDP under the fixed hash-induced Hamming pseudometric. As a result, embeddings that differ in a few fixed-hash coordinates induce similar output distributions.

\subsection{Hierarchical Navigable Small World}

\noindent
The Hierarchical Navigable Small World (HNSW)~\cite{malkov2018efficient}
is a widely used vector index that achieves high search recall. It combines skip lists with navigable proximity graphs. Let $\mathcal{G}=\{G_0,G_1,\dots,G_L\}$ denote the $L+1$ graph layers. The graph construction is guided by four parameters: $M_{\mathrm{hnsw}}$ that limits the number of connections per node, with the bottom layer typically allowing about
$2M_{\mathrm{hnsw}}$ connections; $ef_{\mathrm{construction}}$ that controls the
candidate search during index construction;
$ef_{\mathrm{search}}$ that determines the recall--latency trade-off during
search; and $m_L$ that controls
the exponentially decreasing node distribution across layers, typically set to $1/\ln(M_{\mathrm{hnsw}})$.

The core operations are defined as follows. 
\begin{itemize}[leftmargin=*]
    \item
    $(\mathcal{G},ep_L)\leftarrow
    \Pi_{\mathrm{HNSW}}.\mathrm{InitGraph}()$:
    Initialize an empty hierarchical graph, its layers, and the entry
    point used by insertion and search.

    \item
    $\mathcal{G}'\leftarrow
    \Pi_{\mathrm{HNSW}}.\mathrm{Insert}
    (\mathcal{G},\bm v\in\mathcal V)$:
    Insert $\bm v$ and output the updated graph $\mathcal{G}'$. The algorithm
    samples its maximum level, greedily routes through the upper layers,
    searches local candidates using $ef_{\mathrm{construction}}$, selects
    neighbors bounded by $M_{\mathrm{hnsw}}$ (approximately
    $2M_{\mathrm{hnsw}}$ at the bottom layer), and creates bidirectional edges.

    \item
    $\mathcal{R}\leftarrow
    \Pi_{\mathrm{HNSW}}.\mathrm{Search}
    (\mathcal{G},\bm v_q\in\mathcal V,K)$:
    Given a query $\bm v_q$, the algorithm greedily starts
    from the top layer, searches the bottom layer with candidate budget
    $ef_{\mathrm{search}}$, and returns the approximate top-$K$ nearest-neighbor
    set $\mathcal{R}$.
\end{itemize}

\section{Design of \name}\label{sec:scheme}
In this section, we describe the design of \name{}, which achieves privacy, accuracy, and efficiency. \name{} adopts differential privacy, allowing it to bound the privacy leakage and avoid the overheads of other cryptographic approaches. However, searching over perturbed codes degrades accuracy because the distances between codes do not match those between the original embeddings. The system can improve accuracy by returning larger candidate sets to the client, but doing so increases communication and computational overhead. Below, we elaborate on the two challenges related to accuracy and discuss our approach. 

\smallskip\noindent
\textbf{Challenge 1: Candidate-set expansion under perturbation.}
High bit-flip probabilities provide strong privacy, but distort nearest-neighbor rankings. Our experiments confirm this effect: with a single perturbed graph, SIFT achieves only \(56.65\%\) recall using 1,500 candidates, while LAION achieves only \(69.6\%\) recall using 700 candidates. To improve recall, the server can return a larger candidate set for client-side reranking, but this incurs higher communication and computational overhead. Reducing the bit-flip probability could also improve recall, but the codes expose more semantic information, thereby reducing privacy. 

\noindent\textit{Our approach.}
\name employs multiple index graphs (or shards) to reduce competition among candidates within each shard. Randomized response may cause non-target points to overtake a true neighbor in Hamming space. Let \(N(d)\) be the number of points at distance \(d\), and let \(P_{\mathrm{overtake}}(d)\) be the probability that
such a point overtakes the true neighbor after perturbation. The expected
overtaking count in a single graph is
\(\mu_{\mathrm{single}}=\sum_{d=0}^{\kappa}
N(d)P_{\mathrm{overtake}}(d)\). When the codes are replicated uniformly in $t$ out of $M$ shards, 
\(\mu_{\mathrm{shard}}=(t/M)\mu_{\mathrm{single}}\). Each shard 
uses a small candidate set, and aggregation over all the shards 
gives the true neighbors multiple opportunities to be recovered.

\smallskip\noindent
\textbf{Challenge 2: Search instability under perturbation.}
Random bit flipping may severely displace the true neighbors of certain embeddings. A single graph may perform well for most queries, but require a prohibitively large candidate set for queries with severely displaced neighbors.

\noindent\textit{Our approach.}
\name builds multiple shards, each with an independent IsoHash mapping and
independently sampled randomized-response coins. A true neighbor that is poorly
positioned in one shard may remain close to the query in another. Let
\(P_{\mathrm{hit}}(b)\) be the recovery probability in one shard with
local candidate budget \(b\), under an approximate independence
assumption, the recovery probability from at least one of the \(t\) shards is
\(P_{\mathrm{multi}}(b)\approx
1-(1-P_{\mathrm{hit}}(b))^t\). 

\subsection{Multi-Graph Construction}

\noindent
Let
\(\mathcal D=\{(\mathsf{id}_i,\bm x_i)\}_{i=1}^{N}\subseteq\mathcal X\)
denote the original embeddings. The system maintains \(M\) logically separated HNSW graph shards,
denoted by
\(\mathbb G=\{\mathcal G^{(1)},\ldots,\mathcal G^{(M)}\}\). Each shard
\(\mathcal G^{(j)}\) stores only the perturbed index binary embeddings (or perturbed codes) and
the associated encrypted payloads.

~\autoref{fig:graphbuildprotocol} illustrate the multi-graph construction. 
For each embedding $i$, the client assigns it to a set of shards 
\(\mathcal S_i\leftarrow\mathsf{Route}(i,M,t)\). We instantiate
\(\mathsf{Route}\) as a data-independent pseudorandom function that
permutes \([M]\) and selects the first \(t\) shards. For every shard \(j\in\mathcal S_i\), the client computes the
shard-level binary embedding
\(\bm v_{i,j}\leftarrow
\Pi_{\mathrm{IsoHash}}.\mathsf{Hash}
(\bm x_i,\allowbreak\bm\mu_j,\allowbreak
\mathbf W_j,\allowbreak\mathbf Q_j)\), then applies LSHRR to obtain
\(\widetilde{\bm v}_{i,j}\leftarrow
\Pi_{\mathrm{LSHRR}}(\bm v_{i,j};p)\).
It encrypts the payload 
\(m_i=\mathsf{id}_i\Vert\bm x_i\) with the shard key to obtain 
\(\mathsf{ct}_{i,j}\leftarrow
\Pi_{\mathrm{Enc}}.\mathsf{Enc}_{sk_j}(m_i)\). The client also generates a shard-local label \(\ell_{i,j}\), used later for deletion. The resulting
tuple 
\((\widetilde{\bm v}_{i,j},\ell_{i,j},\mathsf{ct}_{i,j})\)
is inserted into \(\mathcal G^{(j)}\). 

In an ideal setting, the \(M\) graph shards are maintained by
\(M\) non-colluding servers. In this case, an attacker controlling a single server can observe a given embedding with probability \(t/M\). Its local view
can therefore exhibit a subsampling effect under the standard conditions for
differential-privacy amplification. \name{} assumes a single-server setting, in which the attacker has the complete view of all the embeddings. Therefore, we do not use \(t/M\) as a subsampling probability in the privacy analysis.

\subsection{Single-Server Multi-Graph}

\noindent
The server maintains all graph shards. \name employs several mechanisms to reduce cross-shard linkage.

\begin{itemize}[leftmargin=*]

\item \textit{Shard-specific projection functions.}
Each shard uses different IsoHash parameters. The same original embedding
\(\bm x_i\) is therefore mapped to different shard-level binary embeddings,
making bit coordinates across different shards not directly comparable.

\item \textit{Independent perturbation.}
Randomized response uses fresh coins for every shard. Even when the
same embedding appears in multiple shards, its binary embeddings are
independently perturbed, preventing direct equality comparison and reducing
the effectiveness of averaging attacks.

\item \textit{Randomized shard assignment.}
Each embedding is assigned to only \(t\) pseudorandomly selected shards instead of
all \(M\) shards. This limits the number of server-visible entries associated
with the same item and avoids fixed shard co-occurrence patterns.

\item \textit{Shard-level labels and encryption keys.}
Labels are generated independently for different shards, while payloads are
encrypted with shard-level keys and fresh randomness.
As a result, identifier or ciphertext equality cannot be
used to link embeddings across different shards. 

\end{itemize}
\smallskip
These mechanisms make cross-shard linkage more difficult for the attacker, but do not eliminate it.  Our formal analysis (~\autoref{sec:analysis}) accounts for all \(t\) perturbed codes originating from the same embedding. We also evaluate representative cross-shard attacks to measure whether the server can link embeddings across shards. 

\begin{figure}
	\centering
	\setlength{\fboxrule}{0.25pt}
	\setlength{\fboxsep}{4pt}
	\noindent\fbox{%
		\begin{minipage}{\dimexpr\columnwidth-2\fboxsep-2\fboxrule\relax}
			\vspace{1.5mm}
			
			\noindent \textbf{Algorithmic Primitives:}
			Isotropic hashing $\Pi_{\mathrm{IsoHash}}$, locality-sensitive hashing
			randomized response $\Pi_{\mathrm{LSHRR}}$, symmetric encryption
			$\Pi_{\mathrm{Enc}}$, HNSW indexing $\Pi_{\mathrm{HNSW}}$.\\
			\textbf{Input:}
			Plaintext embedding database
            {$\mathcal D=\{(\mathsf{id}_i,\bm x_i)\}_{i=1}^{N}$ with
            $\bm x_i\in\mathcal X$, }
			shard-level binary embedding length $\kappa$, total number of graph
			shards $M$, routing multiplicity $t$, randomized-response flip
			probability $p$, and HNSW parameters.\\
			\textbf{Output:}
			Public configuration $\mathsf{pp}$, client-side state
			$\mathsf{st}_{\mathcal C}$, and server-side multi-graph
			index $\mathsf{st}_{\mathcal S}$.
			
			\vspace{1.5mm}
			\hrule
			\vspace{1.5mm}
			
			\noindent \textbf{[Phase 1: Parameter Setup]}
			\begin{enumerate}[label=\arabic*:, leftmargin=0.6cm, itemsep=0.7mm, topsep=1mm]
				\item The client samples representative training data from
				$\mathcal D$ and trains shard-specific IsoHash parameters
				$\mathsf{st}_{\mathrm{hash}}=\{(\bm\mu_j,\mathbf W_j,\mathbf Q_j)\}_{j=1}^{M}$.
				The public configuration is set as
				$\mathsf{pp}=(\kappa,M,t,p,\mathsf{params}_{\mathrm{HNSW}})$.
			\end{enumerate}
			
			\vspace{1mm}
			\noindent \textbf{[Phase 2: Multi-Graph Initialization]}
			\begin{enumerate}[label=\arabic*:, leftmargin=0.6cm, itemsep=0.7mm, topsep=1mm, start=2]
				\item The client prepares $M$ independent symmetric encryption
				keys and forms the shard key set $\mathcal K=\{sk_1,\dots,sk_M\}$.
				
				\item The cloud server initializes $M$ logically separated HNSW graph
				shards. The outsourced multi-graph index is denoted by
				$\mathbb G=\{\mathcal G^{(1)},\ldots,\mathcal G^{(M)}\}$, where each
				shard $\mathcal G^{(j)}=\{G^{(j)}_0,\ldots,G^{(j)}_{L_j}\}$ is an
				independent multi-layer HNSW index.
			\end{enumerate}
			
			\vspace{1mm}
\noindent \textbf{[Phase 3: Perturbation and Insertion]}
\begin{enumerate}[label=\arabic*:, leftmargin=0.6cm, itemsep=0.7mm, topsep=1mm, start=4]
    \item For each data point $(id_i,\bm x_i)$, the client obtains its selected
    shard set by calling $S_i \leftarrow Route(id_i,M,t)$.
    
    \item For each selected shard $j\in\mathcal S_i$, the client
    computes the shard-level binary embedding
    $\bm v_{i,j}\leftarrow\Pi_{\mathrm{IsoHash}}.\mathrm{Hash}
    (\bm x_i,\bm\mu_j,\mathbf W_j,\mathbf Q_j)$, and perturbs it as
    $\widetilde{\bm v}_{i,j}\leftarrow
    \Pi_{\mathrm{LSHRR}}(\bm v_{i,j};p)$.
    
    \item The client generates a shard-local label $\ell_{i,j}$ and 
    {forms the encrypted payload
    $m_i \leftarrow \mathsf{id}_i \Vert \bm x_i$, where $\mathsf{id}_i$
    is used for duplicate removal and result recovery, and $\bm x_i$ is
    used for client-side exact reranking. The client encrypts $m_i$
    under $sk_j$ with fresh encryption randomness, yielding
    $\mathsf{ct}_{i,j}\leftarrow
    \Pi_{\mathrm{Enc}}.\mathrm{Enc}_{sk_j}(m_i)$.}
    
    \item The tuple $(\widetilde{\bm v}_{i,j},\ell_{i,j},\mathsf{ct}_{i,j})$ is
    inserted into the HNSW graph shard $\mathcal G^{(j)}$
    by running
    $\mathcal G^{(j)}\leftarrow
    \Pi_{\mathrm{HNSW}}.\mathrm{Insert}
    (\mathcal G^{(j)},\widetilde{\bm v}_{i,j},\ell_{i,j},\mathsf{ct}_{i,j})$.
\end{enumerate}
			
			\vspace{1mm}
			\noindent \textbf{[Phase 4: State Finalization]}
			\begin{enumerate}[label=\arabic*:, leftmargin=0.6cm, itemsep=0.7mm, topsep=1mm, start=8]
				\item The protocol outputs
				$\mathsf{st}_{\mathcal C}=(\mathsf{st}_{\mathrm{hash}},\mathcal K)$
				and $\mathsf{st}_{\mathcal S}=\mathbb G$.
			\end{enumerate}
			\vspace{1mm}
		\end{minipage}%
	}
	\caption{Multi-graph index construction protocol 
	($\Pi_{\mathsf{GraphBuild}}$).}
	\label{fig:graphbuildprotocol}
\end{figure}

\begin{figure}
    \centering
    \includegraphics[width=0.45\textwidth]{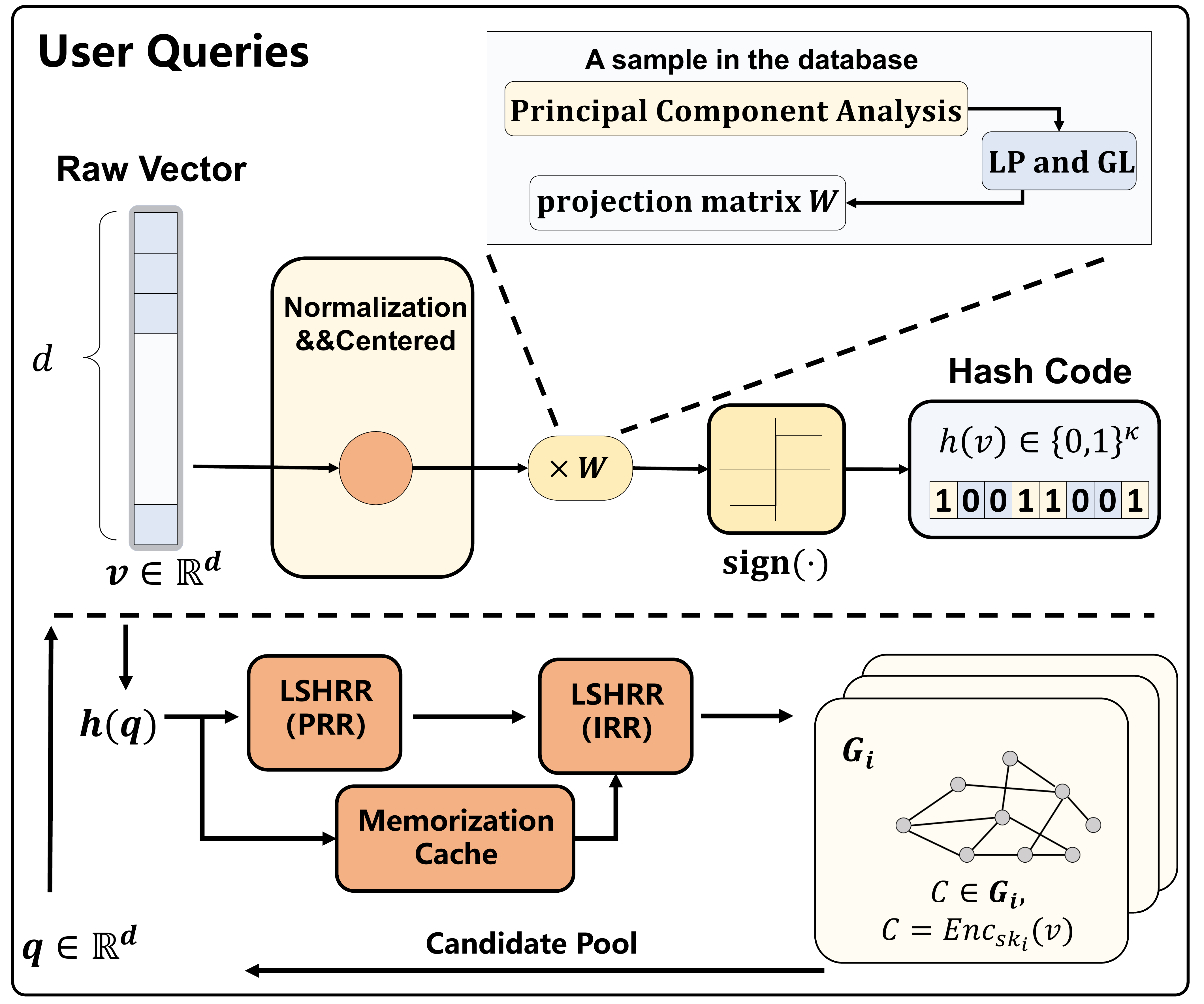}
    \caption{Systematic illustration of the query flow.}
    \label{fig:query}
\end{figure}

\subsection{Processing User Queries}

\noindent
Repeated queries may reveal the underlying value if each query is independently perturbed. By collecting many versions of the same query, the attacker can average out the
randomness and infer the original binary embedding. To mitigate
this averaging attack, \name adopts a two-phase LSHRR mechanism, inspired by the memoization technique proposed by RAPPOR~\cite{erlingsson2014rappor}. \autoref{fig:query} illustrates the query workflow.

\smallskip\noindent\textbf{Permanent LSHRR. }
For a query $\bm q$, let $\bm v_{\bm q}$ denote its unperturbed binary
embedding, and let $p_{\mathrm{PRR}}$ denote the permanent flip probability. In
this phase, the client first checks its local cache. If $\bm q$ is
issued for the first time, the client samples
\[
\widetilde{\bm v}_{\bm q}^{(\mathrm{PRR})}
\leftarrow
\Pi_{\mathrm{LSHRR}}(\bm v_{\bm q},p_{\mathrm{PRR}}),
\]
and writes 
$\widetilde{\bm v}_{\bm q}^{(\mathrm{PRR})}$ to the cache. If $\bm q$ is in the cache, $\widetilde{\bm v}_{\bm q}^{(\mathrm{PRR})}$ is reused. As a result,
repeated submissions do not have independent perturbations of the
original query binary embedding.

\smallskip\noindent\textbf{Instantaneous LSHRR. }Before sending the query, the client applies an instantaneous perturbation
\[
\widetilde{\bm v}_{\bm q}^{(\mathrm{IRR})}
\leftarrow
\Pi_{\mathrm{LSHRR}}
\bigl(\widetilde{\bm v}_{\bm q}^{(\mathrm{PRR})},p_{\mathrm{IRR}}\bigr),
\]
where $p_{\mathrm{IRR}}$ is the instantaneous flip probability.  The server uses
$\widetilde{\bm v}_{\bm q}^{(\mathrm{IRR})}$ as input to the search. Since $\widetilde{\bm v}_{\bm q}^{(\mathrm{IRR})}$ varies across repeated queries, the server cannot link them. 

\subsection{\name Protocol}

We now combine the graph construction and query protocol above into the complete protocol.
\name{} consists of four key operations: initialization, secure query generation, search, and
decryption and reranking. It also supports update operations, i.e., insertion and deletion.

\begin{enumerate}[leftmargin=*]

\item \emph{Initialization.}
Given $\mathcal D=\{(\mathsf{id}_i,\bm x_i)\}_{i=1}^{N}$, the client trains shard-specific hashing parameters, generates shard keys, and initializes
$\mathbb G=\{\mathcal G^{(1)},\ldots,\mathcal G^{(M)}\}$. Each item is assigned to a set of shards 
$\mathcal S_i\leftarrow
\mathsf{Route}(\mathsf{id}_i,M,t)$.
For every $j\in\mathcal S_i$, the client computes and perturbs the shard-level binary embedding, encrypts $m_i=\mathsf{id}_i\Vert\bm x_i$ under $sk_j$, and sends the resulting values to the server. The server inserts these values into the corresponding HNSW shards. The resulting states are $\mathsf{st}_{\mathcal C}=(\mathsf{st}_{\mathrm{hash}},\mathcal K)$ and
$\mathsf{st}_{\mathcal S}=\mathbb G$.

\item \emph{Secure query generation.}
Given a query embedding $\bm q$, the client derives a query binary embedding for each shard. It applies permanent perturbation to prevent averaging across repeated submissions, followed by fresh instantaneous perturbation for each query. The resulting perturbed query embeddings are sent to the server.

\item \emph{Search.}
The server searches the HNSW shards using Hamming distance between the perturbed query and index nodes. It aggregates the shard-level
candidate sets and returns their encrypted payloads. 

\item \emph{Decryption and reranking.}
The client decrypts the returned payloads, removes duplicates using the identifiers, and
reranks the remaining candidates using the original query and data 
embeddings. It then outputs the final top-$K$ results.

\item \emph{Insertion.}
To insert a new tuple $(\mathsf{id}_i,\bm x_i)$, the client reuses the existing hashing parameters and shard keys. It selects $\mathcal S_i$, generates independently perturbed binary embeddings and encrypted payloads for the selected shards, and computes the shard-local label $\ell_{i,j}$. The server inserts the entries into the corresponding HNSW shards, while the client stores
$\{(j,\ell_{i,j})\}_{j\in\mathcal S_i}$ for future deletion. Insertion does not require rebuilding the index.

\item \emph{Deletion.}
The client retrieves the shard-local labels associated with
$\mathsf{id}_i$ and sends deletion requests to the server. The
server marks the corresponding index nodes as inactive and excludes them from the search. The server performs periodic index rebuilding to remove inactive nodes.
\end{enumerate}



\section{Theoretical Analysis}\label{sec:analysis}
\subsection{Correctness Analysis}

We characterize how perturbation affects shard-level ranks and candidate
recovery.

\begin{proposition}[Approximate Candidate Recovery]
\label{prop:candidate-correctness}
\normalfont
Consider a target item in shard \(s\) at clean Hamming distance \(d_s^*\)
from the query, and let \(N_s(d)\) be the number of background items at
distance \(d\). The effective query flip probability is
\(p_q=p_{\mathrm{PRR}}(1-p_{\mathrm{IRR}})
+p_{\mathrm{IRR}}(1-p_{\mathrm{PRR}})\), and its combination with the
storage-side perturbation is
\(p_{\oplus}=p_D(1-p_q)+p_q(1-p_D)\).
A background item at distance \(d\) overtakes the target with approximate
probability
\[
P_{\mathrm{overtake},s}(d)
\approx
\Phi\!\left(
\frac{(d_s^*-d)(1-2p_{\oplus})}
{\sqrt{2\kappa p_{\oplus}(1-p_{\oplus})}}
\right).
\]
Under independent overtaking events, their count is Poisson binomial with mean
\(\mu_s=\sum_d N_s(d)P_{\mathrm{overtake},s}(d)\) and variance
\(\sigma_s^2=\sum_d N_s(d)P_{\mathrm{overtake},s}(d)
(1-P_{\mathrm{overtake},s}(d))\).
When sufficiently many items contribute, the target rank satisfies
\(Rank_s\approx\mathcal N(1+\mu_s,\sigma_s^2)\). For shard-level candidate
budget \(P\), its inclusion probability is
\[
P_{\mathrm{hit},s}(P)
\approx
\Phi\!\left(
\frac{P+\frac12-(1+\mu_s)}{\sigma_s}
\right).
\]
Across conditionally independent selected shards, the aggregate recovery
probability is approximately
\(1-\prod_{s\in\mathcal S_i}(1-P_{\mathrm{hit},s}(P))\).
If the aggregate candidate set contains the true Top-\(K\) items, client-side
exact reranking returns the plaintext Top-\(K\) result.
\end{proposition}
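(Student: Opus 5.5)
Within a fixed shard $s$, I reduce everything to per-bit independent flips and then make normal approximations at the count level. First I compose the perturbations. A query bit passes through two independent randomized responses with probabilities $p_{\mathrm{PRR}}$ and $p_{\mathrm{IRR}}$. It is flipped overall exactly when one of them flips, which gives $p_q$. For a given coordinate, the comparison between a stored code and the query depends only on whether their relative bit is flipped. The stored flip ($p_D$) and the query flip ($p_q$) are independent, so the relative bit flips with probability $p_\oplus$, by the same XOR identity.

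Next I treat the perturbed distances. Fix the clean query code and an item at clean distance $d$. The perturbed distance is $\tilde d = \mathrm{Bin}(d,1-p_\oplus)+\mathrm{Bin}(\kappa-d,p_\oplus)$, so
\[
\mathbb E[\tilde d]=p_\oplus\kappa+d(1-2p_\oplus), \qquad \mathrm{Var}[\tilde d]=\kappa p_\oplus(1-p_\oplus).
\]
A background item overtakes the target when $\tilde d_{\mathrm{bg}}<\tilde d_{\mathrm{target}}$. I approximate the difference $\tilde d_{\mathrm{target}}-\tilde d_{\mathrm{bg}}$ by a Gaussian with mean $(d_s^*-d)(1-2p_\oplus)$ and variance $2\kappa p_\oplus(1-p_\oplus)$. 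This gives the stated $\Phi$ expression.

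The main obstacle sits in this variance step, and I will flag it openly. The two distances share the query's perturbation, so they are not independent. The shared query noise contributes a covariance on the coordinates where the two items agree. I will argue heuristically that the stated variance is exact if the query-side noise is absorbed, or that it is an upper-order approximation. I will also note that the $\Phi$ expression ignores ties, continuity corrections, and this covariance. This is why the statement says "approximate".

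The remaining steps are routine. Assuming the overtaking indicators are independent, their sum is Poisson binomial by definition, and its mean and variance are the stated sums over $d$ weighted by $N_s(d)$. The target's rank is $1$ plus this count. A Lyapunov CLT applies when many items have non-degenerate overtaking probabilities, so $\sigma_s^2\to\infty$, and this yields $Rank_s\approx\mathcal N(1+\mu_s,\sigma_s^2)$. The inclusion event is $Rank_s\le P$. With a half-integer continuity correction this gives $P_{\mathrm{hit},s}(P)$.

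For the aggregate, shards use independent IsoHash parameters and fresh coins. Under the stated conditional independence assumption, the probability of missing the target in every selected shard is $\prod_{s\in\mathcal S_i}(1-P_{\mathrm{hit},s}(P))$, and taking the complement gives the aggregate recovery probability.

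Finally, reranking is exact. The client decrypts each $\bm x_i$, removes duplicates by identifier, and computes exact distances to $\bm q$. If the candidate set contains the true Top-$K$, every item outside it is no closer than the $K$-th true neighbor. Sorting the candidates therefore returns the plaintext Top-$K$ result, with ties broken consistently.
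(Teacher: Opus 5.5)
Your proposal follows the paper's proof sketch essentially step for step. It uses the same $\operatorname{Binomial}(d,1-p_{\oplus})+\operatorname{Binomial}(\kappa-d,p_{\oplus})$ decomposition, the Gaussian approximation of the distance difference under an independence assumption, the Poisson-binomial rank with a $1/2$ continuity correction, conditional independence across shards, and the exact-reranking argument.

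Your caveat about shared query noise is a fair refinement of what the paper calls the ``standard independence approximation,'' but the description needs one correction. The induced covariance is present on every coordinate: positive where the target and background clean bits agree, and negative where they differ. The true variance is therefore $2\kappa p_{\oplus}(1-p_{\oplus})-2(\kappa-2d_{tb})(1-2p_D)^2p_q(1-p_q)$, where $d_{tb}$ is the clean target--background distance. So the stated variance is exact when $p_q=0$, and for $p_q>0$ it overestimates precisely when $d_{tb}<\kappa/2$.
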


\noindent\textit{Proof Sketch.}
Among the \(d\) coordinates where the clean embeddings differ, the perturbed
bits remain different with probability \(1-p_{\oplus}\), so their mismatch
count follows
\(X_d\sim\operatorname{Binomial}(d,1-p_{\oplus})\).
Among the remaining \(\kappa-d\) coordinates, perturbation creates a mismatch
with probability \(p_{\oplus}\), giving
\(Y_d\sim\operatorname{Binomial}(\kappa-d,p_{\oplus})\).
Independence across bits gives
\(\widetilde D_d=X_d+Y_d\), with mean
\(\kappa p_{\oplus}+d(1-2p_{\oplus})\) and variance
\(\kappa p_{\oplus}(1-p_{\oplus})\).

Under the standard independence approximation, the difference between the
perturbed distances of a background item and the target has mean
\((d-d_s^*)(1-2p_{\oplus})\) and variance
\(2\kappa p_{\oplus}(1-p_{\oplus})\). Its Gaussian approximation gives
\(P_{\mathrm{overtake},s}(d)\).

For each background item \(i\) at distance \(d\), let
\(I_{i,d}=\mathbf 1\{\widetilde D_{i,d}<\widetilde D_{d_s^*}\}\).
Then
\(Rank_s=1+\sum_d\sum_{i=1}^{N_s(d)}I_{i,d}\).
The indicators have distance-dependent Bernoulli parameters, so their sum is
Poisson binomial with the stated mean and variance. Its normal approximation,
with the \(1/2\) continuity correction, gives
\(P_{\mathrm{hit},s}(P)\). Conditional independence across shards gives the
aggregate recovery probability. Finally, exact distance computation and
reranking return the plaintext Top-\(K\) result whenever all true Top-\(K\)
items are present.
\hfill\(\square\)

This approximation captures the effects of perturbation, candidate budget,
and multi-shard recovery. The detailed derivation of the shard-level rank
distribution and its connection to Recall and MRR are provided in
Appendix~\ref{app:rank-derivation}.

\subsection{Security and Privacy of the Server View}
\label{sec:security-analysis}

The complete server view contains all graph shards, perturbed index binary embeddings, shard-local HNSW structures, perturbed query reports, and search
traces. We protect stored embedding values, submitted query values, and the reuse relation among repeated queries, corresponding to stored-data,
access-pattern, and search-pattern privacy. The term \emph{complete server view} specifies the adversarial observation rather than a new privacy definition.

Our security statement is hybrid. Differential privacy protects the randomized index and query reports together with information derived from
them, while randomized authenticated encryption protects payload contents. These guarantees do not depend on the success of a concrete inference method.

\subsubsection{Scope and Fixed ISO-LSH Mappings}

For stored data, we use substitution adjacency: two databases contain the same identifiers and differ only in the embedding associated with one identifier. Thus, the guarantee protects the embedding value rather than the presence of its identifier.
Each shard $G_s$ uses a pretrained mapping
$H_s:\mathcal X\rightarrow\{0,1\}^{\kappa_s}$ fixed before the protected database is instantiated. The mappings may be correlated; the proofs require fresh randomized-response coins for each record, shard, and coordinate. If ISO-LSH is trained on the protected database, its training requires a separate privacy analysis. We therefore assume that the training data are public, disjoint from the protected database, or excluded from the
neighboring relation.

A mechanism $\mathcal M$ satisfies $(\xi,\delta)$-XDP if, for every pair
$x,x'$ and measurable event $E$,
\[
\Pr[\mathcal M(x)\in E]
\leq e^{\xi(x,x')}\Pr[\mathcal M(x')\in E]+\delta(x,x').
\]
Our guarantee uses the Hamming pseudometric induced by the fixed ISO-LSH mappings. It differs from angular-XDP averaged over freshly sampled random-LSH functions~\cite{Fernandes2021LSHXDP}; the collision distribution from that sampling is not applied to the pretrained mappings.

\subsubsection{Stored Multi-Graph Index}

Let $p_D\in(0,1/2)$ be the storage-side bit-flip probability; define
$\eta_D=\ln((1-p_D)/p_D)$, and let $d_{H_s}(\bm x,\bm x')=d_H(H_s(\bm x),H_s(\bm x'))$.

\begin{proposition}[Single-Shard XDP]
\label{prop:single-shard-privacy}
\normalfont
Conditioned on a record being assigned to $G_s$, bitwise randomized response
satisfies $\eta_D$-XDP with respect to $d_{H_s}$. A fixed pair
$q=(\bm x_0,\bm x_1)$ therefore has shard-level parameter
$\varepsilon_{s,q}=\eta_Dd_s(q)$, where
$d_s(q)=d_H(H_s(\bm x_0),H_s(\bm x_1))$.
{
Thus, if two embeddings differ in $d$ fixed-hash coordinates, the shard-local observation differs by at most a factor of $e^{\eta_D d}$. A smaller hash distance results in stronger privacy.}
\end{proposition}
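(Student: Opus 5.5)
The plan is a direct likelihood-ratio computation for the bitwise randomized-response channel, conditioned on the event that the record is routed to $G_s$. Routing uses a data-independent pseudorandom function of the identifier. Under substitution adjacency the identifier, and hence the event $s\in\mathcal S_i$, is the same in both neighboring databases. Conditioning on it therefore changes nothing about the comparison. The only randomness that depends on the embedding value is the fresh per-coordinate coins applied to the fixed code $H_s(\bm x)$. The mapping $H_s$ is pretrained and fixed before the database is instantiated, so $H_s(\bm x_0)$ and $H_s(\bm x_1)$ are deterministic $\kappa_s$-bit strings.

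First, fix any output $\bm y\in\{0,1\}^{\kappa_s}$. I will write the output probability using the product formula for $Q_H$ from the preliminaries. I then take the ratio $\Pr[Q_{H_s}(\bm x_0)=\bm y]/\Pr[Q_{H_s}(\bm x_1)=\bm y]$. Coordinates where $h_i(\bm x_0)=h_i(\bm x_1)$ contribute identical factors and cancel. On each of the $d_s(q)$ disagreeing coordinates, one numerator factor is $1-p_D$ and the denominator factor is $p_D$, or the reverse. Each such coordinate's ratio therefore lies in $[e^{-\eta_D},e^{\eta_D}]$, using $p_D<1/2$ so that $\eta_D>0$. Multiplying the coordinates gives a pointwise bound of $e^{\eta_D d_s(q)}$. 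Second, I extend this from points to events: summing over $\bm y\in E$ for any $E\subseteq\{0,1\}^{\kappa_s}$ gives $\Pr[\cdot\in E]\le e^{\eta_D d_{H_s}(\bm x_0,\bm x_1)}\Pr[\cdot'\in E]$ with $\delta=0$. This is exactly $\eta_D$-XDP with respect to the pseudometric $d_{H_s}$ in the sense of the XDP definition above, so the shard-level parameter is $\varepsilon_{s,q}=\eta_D d_s(q)$.

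Third, I need to justify that the \emph{shard-local observation}, not just the code, inherits the bound. The other components of the shard-local view are the HNSW structure built on the perturbed codes, the label $\ell_{i,j}$, and the ciphertext. The label is generated independently of $\bm x$. The HNSW structure is a (possibly randomized) function of the perturbed codes alone, with randomness independent of $\bm x$. Post-processing therefore preserves the bound. Because other records' codes are identical under substitution adjacency and their coins are independent, the joint view factorizes, and the ratio is governed solely by the differing record. The ciphertext is excluded from the DP statement and handled by encryption security, consistent with the hybrid statement in the paper. The monotonicity remark, that a smaller $d$ gives stronger privacy, follows immediately because $e^{\eta_D d}$ is increasing in $d$.

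The main obstacle is this third step: stating precisely that routing and the graph-building randomness are independent of the protected embedding value, so that conditioning on assignment and applying post-processing are both legitimate. The per-bit calculation itself is routine.
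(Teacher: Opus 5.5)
Your proposal is correct and follows essentially the same argument as the paper: equal-bit factors cancel, each differing coordinate contributes a likelihood ratio of at most $(1-p_D)/p_D=e^{\eta_D}$, independence across coordinates multiplies these into $e^{\eta_D d_{H_s}(\bm x_0,\bm x_1)}$, and summing over $\bm y\in E$ gives the event-level bound. Your two-sided per-coordinate bound $[e^{-\eta_D},e^{\eta_D}]$ plays the role of the paper's ``exchange $\bm x$ and $\bm x'$'' step. Your extra third step, on routing independence and on treating the HNSW structure as post-processing, is not part of the paper's proof of this proposition; the paper defers those points to the proof of Theorem~\ref{thm:complete-view-storage}, and your treatment is consistent with it.
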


\begin{proof}[Skeched Proof]
Equal clean bits induce identical output distributions. For a differing bit,
the likelihood ratio is at most
$(1-p_D)/p_D=e^{\eta_D}$. Independence across coordinates gives
\[
\frac{\Pr[\bm Y_s(\bm x)=\bm y]}
     {\Pr[\bm Y_s(\bm x')=\bm y]}
\leq e^{\eta_Dd_{H_s}(\bm x,\bm x')}.
\]
Summing over outputs in $E$ proves the claim; exchanging $\bm x$ and
$\bm x'$ proves the reverse direction.
\end{proof}

Each record is assigned to exactly $t$ of the $M$ shards independently of its embedding. Let $\mathfrak R$ contain the $t$-shard sets that may be selected;
for identifier-fixed assignment, it contains only the realized set. Define
$
D_D(\bm x,\bm x')
=\max_{\mathcal R\in\mathfrak R}
\sum_{s\in\mathcal R}d_{H_s}(\bm x,\bm x').
$

\begin{theorem}[Complete Stored-Index Privacy]
\label{thm:complete-view-storage}

For every measurable event $E$,
\[
\Pr[\mathsf V_D(\bm x)\in E]
\leq
e^{\eta_DD_D(\bm x,\bm x')}
\Pr[\mathsf V_D(\bm x')\in E],
\]
with the symmetric inequality obtained by exchanging $\bm x$ and $\bm x'$.
Hence, the stored multi-graph index is $\eta_D$-XDP with respect to $D_D$.
If every shard has $\kappa$ bits, it also satisfies
$(t\kappa\eta_D,0)$-DP under unrestricted substitution adjacency.
{
In other words, for two stored embeddings $\bm x$ and $\bm x'$, the complete multi-graph observation differs by at most a
factor of $e^{\eta_DD_D(\bm x,\bm x')}$, where $D_D$ aggregates the
differing hash coordinates across the selected shards.
}
\end{theorem}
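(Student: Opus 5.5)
The plan is to write the complete stored view $\mathsf V_D(\bm x)$ as a random shard assignment followed by independent per-shard randomized-response reports, and then apply Proposition~\ref{prop:single-shard-privacy} shard by shard. We formalize the view of the protected record as the pair $(\mathcal R,(\bm Y_s(\bm x))_{s\in\mathcal R})$. Here $\mathcal R\in\mathfrak R$ is the selected $t$-shard set. Its distribution $\pi(\mathcal R)$ does not depend on the embedding, because $\mathsf{Route}$ is a data-independent pseudorandom function.

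The remaining parts of the view are treated as follows. The other records' reports are identically distributed under substitution adjacency and independent of the protected record's coins. The HNSW structures are a deterministic (or independently randomized) function of all perturbed codes, so they can be dropped by post-processing. The encrypted payloads are treated separately by the hybrid argument, so we only need to handle the code-dependent part; by post-processing it suffices to bound the joint distribution of the codes.

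The main step is a pointwise likelihood-ratio bound. Fix $\mathcal R$ and an output tuple $\bm y=(\bm y_s)_{s\in\mathcal R}$. Fresh coins are used for each record, shard and coordinate, so the reports are conditionally independent given $\mathcal R$:
\[
\frac{\Pr[\mathsf V_D(\bm x)=(\mathcal R,\bm y)]}{\Pr[\mathsf V_D(\bm x')=(\mathcal R,\bm y)]}
=\prod_{s\in\mathcal R}\frac{\Pr[\bm Y_s(\bm x)=\bm y_s]}{\Pr[\bm Y_s(\bm x')=\bm y_s]}
\le e^{\eta_D\sum_{s\in\mathcal R}d_{H_s}(\bm x,\bm x')}
\le e^{\eta_D D_D(\bm x,\bm x')},
\]
where the factor $\pi(\mathcal R)$ cancels. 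The first inequality is Proposition~\ref{prop:single-shard-privacy} and the second is the definition of $D_D$ as a maximum over $\mathfrak R$. Because the mappings $H_s$ are fixed, correlation between them does not matter: only the coins need to be independent.

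Next we multiply by $\Pr[\mathsf V_D(\bm x')=(\mathcal R,\bm y)]$ and sum over all $(\mathcal R,\bm y)\in E$. The view space is finite for the protected coordinates, and the other records contribute a common factor, so the measurable event case reduces to a sum. This yields the stated inequality; exchanging $\bm x,\bm x'$ gives the symmetric one.

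For the DP corollary, note that $d_{H_s}\le\kappa$ for every $s$. Hence $D_D\le t\kappa$ for all pairs, so $e^{\eta_D D_D}\le e^{t\kappa\eta_D}$ uniformly, which is $(t\kappa\eta_D,0)$-DP under unrestricted substitution.

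The main obstacle is justifying that the assignment cancels. If $\mathcal R$ were observed only implicitly, or were correlated with the embedding, then the ratio would involve mixtures over $\mathcal R$. I would handle this by showing that $\mathcal R$ is identified by which shards contain the record's label, and in any case is embedding-independent. For a mixture, the ratio of sums is bounded by the maximum termwise ratio, which again gives the $D_D$ bound. A related subtlety is that the server cannot pick out the record across shards. Giving it the grouping for free only strengthens the adversary, and the bound still holds.
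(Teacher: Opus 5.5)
Your proposal is correct and follows essentially the same route as the paper's proof. Both arguments condition on the embedding-independent selected set $\mathcal R$, compose Proposition~\ref{prop:single-shard-privacy} over the $t$ independently perturbed shard-local reports, average the conditional bounds and bound each sum by $D_D$ (your mediant/termwise-maximum step), treat the HNSW topology as world-independent post-processing, and get the $(t\kappa\eta_D,0)$-DP corollary from $d_{H_s}\le\kappa$. Your remark that granting the server $\mathcal R$ and the cross-shard grouping only strengthens the adversary matches the paper's conservative perfect-association framing.
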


\begin{proof}[Skeched Proof]
Condition on a selected set $\mathcal R$. Only the $t$ corresponding shard-local entries depend on the substituted embedding.
Proposition~\ref{prop:single-shard-privacy} and composition give
$\eta_D\sum_{s\in\mathcal R}d_{H_s}(\bm x,\bm x')$.
For randomized assignment, its distribution is identical in both worlds; averaging the conditional inequalities and bounding each sum by $D_D$ proves the result. The fixed-assignment case follows directly.

Each HNSW shard is constructed from perturbed codes, public parameters, and world-independent randomness. Its topology and subsequent inference outputs are therefore post-processing and incur no additional privacy cost.
\end{proof}

The theorem conservatively grants the server the selected shard set and perfect association of the changed record's shard-local entries. Subsampling amplification is not applied to the complete view: although one prespecified shard contains the record with probability $t/M$, the complete server observes exactly $t$ releases.

\smallskip\noindent\textbf{Pair-Specific XDP Accounting.}
For pair $q=(\bm x_0,\bm x_1)$ and selected set $\mathcal R$, let
$u_{\mathcal R}(q)=\sum_{s\in\mathcal R}d_s(q)$. Its route-conditioned pure-XDP value is $\xi_{\mathcal R}(q)=\eta_Du_{\mathcal R}(q)$. For randomized assignment, define
$w_q(u)=\Pr_{\mathcal R}[u_{\mathcal R}(q)=u]$; for fixed assignment,
$w_q$ is a point mass.

Conditioned on $u$ differing coordinates,
$Z\sim\operatorname{Binomial}(u,1-p_D)$ and
$L_u=(2Z-u)\eta_D$. The corresponding privacy profile is
\[
\delta_u^{\mathrm{RR}}(\varepsilon)
=
\sum_{z=0}^{u}
\binom{u}{z}(1-p_D)^zp_D^{u-z}
\left[1-e^{\varepsilon-(2z-u)\eta_D}\right]_+ .
\]

\begin{corollary}[Pair-Specific XDP Profile and Upper Tail]
\label{cor:stored-pld-tail}
\normalfont
For target $\delta_0$, the evaluated pair has approximate-XDP value
$
\xi_{D,q}^{\mathrm{PLD}}(\delta_0)
=\inf\{\varepsilon:
\sum_u w_q(u)\delta_u^{\mathrm{RR}}(\varepsilon)\leq\delta_0\}.
$
Moreover,
\[
\Pr[L_{D,q}^{\mathrm{impl}}>\varepsilon]
\leq
\min\!\left\{1,\,
\inf_{0\leq\alpha<\varepsilon}
\frac{\delta_{D,q}^{\mathrm{PLD}}(\alpha)}
     {1-e^{\alpha-\varepsilon}}
\right\},
\]
where
$\delta_{D,q}^{\mathrm{PLD}}(\alpha)
=\sum_u w_q(u)\delta_u^{\mathrm{RR}}(\alpha)$.

\end{corollary}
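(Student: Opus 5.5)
The plan is to read the statement as a privacy-loss-distribution (PLD) computation for a mixture of randomized-response releases, and then to derive the tail bound from the standard relation between the hockey-stick divergence and the privacy-loss random variable.

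\textbf{Step 1: reduce to a single binomial privacy loss.} Condition on the selected set $\mathcal R$. By the argument of Theorem~\ref{thm:complete-view-storage}, only the $t$ shard-local entries of the substituted record depend on the world. HNSW topology and inference outputs are post-processing, so they can be dropped.

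\begin{itemize}
\item On coordinates where $H_s(\bm x_0)$ and $H_s(\bm x_1)$ agree, the output distributions coincide and contribute zero loss.
\item On each of the $u=u_{\mathcal R}(q)$ differing coordinates, the observed bit equals the world-$0$ clean bit with probability $1-p_D$ under world $0$. Its log-likelihood ratio is then $+\eta_D$; otherwise it is $-\eta_D$.
\item By independence of the fresh coins (Proposition~\ref{prop:single-shard-privacy}), the total loss is $L_u=(2Z-u)\eta_D$ with $Z\sim\operatorname{Binomial}(u,1-p_D)$.
\item The hockey-stick divergence $E_{e^\varepsilon}(P_0\|P_1)=\mathbb E_{P_0}[(1-e^{\varepsilon-L})_+]$ therefore equals $\delta_u^{\mathrm{RR}}(\varepsilon)$ exactly.
\end{itemize}

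\textbf{Step 2: mix over routes.} The route distribution is world-independent, so I treat the route as released to the server; this only weakens the guarantee, consistent with the theorem's conservative convention. The joint view then has loss whose law is the $w_q$-mixture of the $L_u$. The divergence of a route-revealed mixture is the $w_q$-average of the conditional divergences, giving $\delta_{D,q}^{\mathrm{PLD}}(\varepsilon)=\sum_u w_q(u)\delta_u^{\mathrm{RR}}(\varepsilon)$.

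Since $E_{e^\varepsilon}$ characterizes the tight $\delta$, the mechanism is $(\varepsilon,\delta_{D,q}^{\mathrm{PLD}}(\varepsilon))$-XDP for the pair. The profile is nonincreasing and right-continuous in $\varepsilon$, so the stated infimum defines the smallest valid $\varepsilon$ for target $\delta_0$. The fixed-assignment case is the point-mass specialization.

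\textbf{Step 3: the tail bound.} Here $L^{\mathrm{impl}}_{D,q}$ is the realized privacy loss under world $0$. For $0\le\alpha<\varepsilon$, on the event $\{L>\varepsilon\}$ we have
\[
1-e^{\alpha-L}\ \ge\ 1-e^{\alpha-\varepsilon}\ >\ 0 .
\]
Since the integrand $(1-e^{\alpha-L})_+$ is nonnegative everywhere, a Markov-type bound gives
\[
\delta^{\mathrm{PLD}}_{D,q}(\alpha)=\mathbb E[(1-e^{\alpha-L})_+]\ \ge\ (1-e^{\alpha-\varepsilon})\Pr[L>\varepsilon].
\]
Dividing, taking the infimum over $\alpha$, and capping at $1$ yields the claim.

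\textbf{Main obstacle.} I expect the delicate step to be Step 2: justifying that averaging the route-conditioned profiles is legitimate. Hockey-stick divergence is jointly convex, so mixing without revealing the route gives only an upper bound, not equality. I would resolve this by defining the view to include $\mathcal R$, as the theorem already grants the server, which makes the decomposition an equality. Under that convention, $\xi^{\mathrm{PLD}}$ is an exact profile for the augmented view and a valid upper bound for the actual view.
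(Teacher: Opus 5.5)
Your proposal is correct and follows the paper's own proof essentially step for step: augment the release so that the route-conditioned loss is exactly $(2Z-u)\eta_D$, average the exact binomial hockey-stick profiles over the world-independent weights $w_q$, remove the augmentation by post-processing, and derive the tail bound from $[1-e^{\alpha-L}]_+\geq 1-e^{\alpha-\varepsilon}$ on $\{L>\varepsilon\}$. Two points need to be stated explicitly. First, the augmentation must reveal not only $\mathcal R$ but also which entry in each selected shard is the target's, which is the paper's ``ground-truth correspondence''; without it, the target codes are mixed among background entries and the conditional profile is only bounded by $\delta_u^{\mathrm{RR}}$ rather than equal to it. Second, the reverse direction $E_{e^\varepsilon}(P_1\|P_0)$ has the same profile by randomized-response symmetry and the symmetry of $u_{\mathcal R}(q)$, and you need this for a two-sided XDP value.
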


\begin{proof}[Skeched Proof]
Consider an augmented release that also reveals the challenge record, its selected shards, and their ground-truth correspondence. Conditioned on $u$, the binomial expression is its exact hockey-stick profile. The assignment has the same distribution in both worlds, so averaging over $w_q$ yields the augmented profile. Removing the additional correspondence information is post-processing and cannot increase hockey-stick divergence.

For $0\leq\alpha<\varepsilon$, $L>\varepsilon$ implies
$[1-e^{\alpha-L}]_+\geq1-e^{\alpha-\varepsilon}$. Taking expectations and optimizing over $\alpha$ proves the tail bound; randomized-response symmetry gives the reverse direction.
\end{proof}

We evaluate $\xi_{D,q}^{\mathrm{PLD}}(\delta_0)$ using the actual shard-specific distances of real neighboring pairs. The maximum over a finite challenge set applies only to those pairs; \autoref{thm:complete-view-storage} remains the unrestricted XDP guarantee.

\smallskip\noindent\textbf{Cross-Shard Linkage Evaluation.}
The formal analysis assumes perfect association of all $t$ shard-local entries belonging to the changed record. Our experiments test how much of this XDP evidence can be organized using MDS, Topology, GSGM, and their cycle-consistent joint combination.

For method $\mathcal A$, let $C_{\mathcal A}(\mathsf V)$ be its predicted cross-shard cluster and let $v_s^\star$ denote the target entry in $G_s$.
Ground truth is used only after inference. The distance of correctly associated target entries is
\[
U_{\mathcal A,q}(\mathsf V)
=
\sum_{(s,v)\in C_{\mathcal A}(\mathsf V)}
\mathbf 1[v=v_s^\star]\,d_s(q),
\]
giving recovered pure-XDP evidence
$\xi_{\mathcal A,q}^{\mathrm{rec}}
=\eta_DU_{\mathcal A,q}$. Recovering all $t$ target entries reaches the
route-conditioned complete-view value, whereas retaining one anchor gives
only its single-shard value. Failed and false matches contribute zero direct
target distance.

For each challenge pair, we construct two neighboring indexes that differ only
in one stored embedding. Each method receives the server-visible view and fixed
auxiliary seeds, and outputs a predicted cross-shard cluster and a
distinguishing score. Because the inferred cluster is data-dependent, we
evaluate matching accuracy, AUC, distinguishing advantage, and empirical
hockey-stick profiles. Inverting the profile yields an empirical pair-specific
XDP value for each evaluated method without replacing the formal guarantee.
Complete procedures, parameters, and results are provided in
Appendix~\ref{app:empirical-leakage-evaluation}.

\smallskip\noindent\textbf{Payload Confidentiality.}
Identifiers and original embeddings are stored in equal-length payloads using randomized authenticated encryption. Under IND-CPA security, fresh nonces and independent shard keys prevent ciphertext equality from directly associating entries of the same record across shards, assuming no common identifier or equivalent metadata is exposed.

\subsubsection{Access-Pattern Privacy}

For query-value privacy, the released index, client identity, submission schedule, PRR-reuse pattern, and search configuration are fixed, while a single logical query value is substituted. The server observes the perturbed query reports and complete HNSW access transcript. Our goal is to bound the query information revealed by these visible accesses, rather than to provide ORAM-style access-pattern obliviousness.

Let $a=p_{\mathrm{PRR}}$ and $b=p_{\mathrm{IRR}}$, where
$a,b\in(0,1/2)$. A permanent report is generated once for each logical-query coordinate and reused, whereas every submission applies fresh IRR coins.
Randomness is independent across coordinates and shards. For one clean bit $v$ submitted $R$ times, let $n_v(\bm y)$ be the number of reports in $\bm y=(y_1,\ldots,y_R)$ equal to $v$. The sequence probability and maximum symmetric log-likelihood ratio are
\[
P_v^{(R)}(\bm y)
=(1-a)(1-b)^{n_v}b^{R-n_v}
  +ab^{n_v}(1-b)^{R-n_v}\]
\[\eta_R
=\max_{\bm y}
\left|\ln\frac{P_v^{(R)}(\bm y)}
                 {P_{1-v}^{(R)}(\bm y)}\right|
=\ln\frac{(1-a)(1-b)^R+ab^R}
           {a(1-b)^R+(1-a)b^R}.
\]
The maximum occurs when all $R$ reports agree. Hence, $\eta_R$ jointly accounts for the memoized PRR state and all IRR reports, rather than applying basic composition across submissions.

For a fixed contacted shard set $\mathcal S_Q$, define
$D_Q(\bm q,\bm q')=
\sum_{s\in\mathcal S_Q}d_H$ $(H_s(\bm q),H_s(\bm q'))$.
If the set is sampled independently of the clean query, the pure-XDP distance is the maximum of this sum over all possible contacted sets. Routing based on
an unperturbed query is outside the theorem's scope.

\begin{theorem}[Complete Access-Transcript Privacy]
\label{thm:query-transcript}
\normalfont
For every measurable event $E$,
\[
\Pr[\mathsf V_{\mathrm{AP}}(\bm q)\in E]
\leq
e^{\eta_RD_Q(\bm q,\bm q')}
\Pr[\mathsf V_{\mathrm{AP}}(\bm q')\in E],
\]
with the reverse inequality obtained by exchanging $\bm q$ and $\bm q'$.
Thus, the complete access transcript is $\eta_R$-XDP with respect to $D_Q$.
{
Therefore, for any set of node-access traces, the probability of it being the trace of 
$\bm q$ is at most $e^{\eta_RD_Q(\bm q,\bm q')}$ times the probability of it being the trace of 
$\bm q'$. Observing the HNSW traversal thus reveals no more about the query
than the perturbed codes already do.}
\end{theorem}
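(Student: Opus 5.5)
The plan is to reduce the transcript to the perturbed query reports. First I bound the joint distribution of all reports for a single coordinate. Then I compose across coordinates and contacted shards. Finally, I treat the HNSW traversal as post-processing of the reports.

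\emph{Step 1: one coordinate.} Fix a shard $s\in\mathcal S_Q$ and a coordinate. Let $v$ and $v'$ be the clean bits of $H_s(\bm q)$ and $H_s(\bm q')$. The $R$ submissions share one memoized PRR bit and use fresh IRR coins, so their joint law is the mixture $P_v^{(R)}$ defined above. If $v=v'$, the laws coincide and the ratio is $1$.

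If $v\neq v'$, write $n=n_v(\bm y)$. The ratio $P_v^{(R)}(\bm y)/P_{1-v}^{(R)}(\bm y)$ becomes a function of $r=((1-b)/b)^{2n-R}$, namely
\[
\frac{(1-a)r+a}{ar+(1-a)}.
\]
This is a Möbius map that is monotone in $r$, because $1-2a>0$. Its extremes therefore occur at $n=R$ or $n=0$. This gives the stated closed form $\eta_R$, and symmetry handles the reverse direction. The pointwise bound is $e^{\eta_R\mathbf 1[v\neq v']}$.

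\emph{Step 2: composition.} PRR and IRR coins are independent across coordinates and shards. The joint report probability therefore factorizes over the coordinates of all $s\in\mathcal S_Q$. Multiplying the per-coordinate ratios gives a pointwise likelihood ratio of at most $e^{\eta_R D_Q(\bm q,\bm q')}$. Summing over outputs in any event proves the inequality for the report vector.

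If $\mathcal S_Q$ is sampled independently of the clean query, I condition on it. Its law is identical in both worlds, so averaging the conditional bounds gives the result, with $D_Q$ replaced by the maximum over possible contacted sets. The fixed-set case is immediate.

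\emph{Step 3: post-processing.} The rest of the server view consists of the released index, the HNSW parameters and entry points, any search-side tie-breaking randomness, the schedule, and the reuse pattern. By the setup, all of these are either fixed or drawn independently of the clean query, and they have the same distribution in both worlds.

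The node-access trace, candidate sets, response sizes and returned ciphertexts are then produced by a (possibly randomized) map $F$ applied to the reports together with this query-independent state. For any event $E$, $\Pr[\mathsf V_{\mathrm{AP}}(\bm q)\in E]$ equals $\mathbb E[\Pr[F(\bm Y(\bm q),\cdot)\in E]]$. Writing this as a sum over report values and applying the pointwise ratio from Step 2 term by term yields the claim.

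The IND-CPA ciphertext contents add one subtlety: I treat the returned ciphertexts as fresh encryptions of index-fixed payloads selected by $F$. They depend on the query only through the selected nodes, so they are still post-processing. The storage-side randomness is already fixed by conditioning on the released index.

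\emph{Main obstacle.} The main obstacle is making Step 3 rigorous, i.e., checking that nothing in the traversal reads the clean query. This requires that routing and the choice of $\mathcal S_Q$ do not use unperturbed codes, which is exactly why such routing is excluded from the theorem's scope. It also requires that the client-side cache lookup affects only the reuse pattern, which is fixed across the two worlds.

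The monotonicity argument in Step 1 is routine but needs care. It needs $a,b<1/2$, and $R$ is the same in both worlds because the schedule is fixed.
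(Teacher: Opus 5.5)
Your proposal is correct and follows the paper's proof: a per-coordinate bound of $\eta_R$ (your M\"obius-monotonicity argument justifies the ``all reports agree'' maximum, which the paper only asserts), composition over the differing coordinates of the contacted shards to obtain $\eta_R D_Q$, and the HNSW trace treated as post-processing of the reports and the fixed index. One small correction: the returned ciphertexts are not fresh encryptions but the stored $C_{i,m}$ already in the fixed index, so that part of Step~3 is even more immediate than you make it.
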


\begin{proof}[Skeched Proof]
Equal clean bits induce identical sequence distributions, while each differing coordinate contributes at most $\eta_R$. Composition over the
differing coordinates and contacted shards gives
$\eta_RD_Q(\bm q,\bm q')$ for the complete query reports. Given the fixed index, HNSW receives only these reports. Visited nodes, traversal order,
candidate evolution, returned handles, response sizes, and cross-shard correspondences inferred from the index and traces are therefore post-processing and incur no additional privacy cost. Since the reports are
included in the server view and recoverable by projection, the complete access transcript and report mechanism have the same privacy profile.
\end{proof}

Accordingly, no separate access-pattern experiment is required: its XDP budget is computed directly as
$\xi_{\mathrm{AP}}(\bm q,\bm q')=\eta_RD_Q(\bm q,\bm q')$.
The guarantee requires every trace field to depend only on the perturbed reports, the fixed index, public parameters, and query-independent randomness. Clean-query routing, clean-distance calculations, stable query
identifiers, or external side channels must be removed or analyzed separately.

\subsubsection{Search-Pattern Privacy}

Search-pattern adjacency compares workloads with the same number and timing of submissions, contacted shards, and search configuration, but different
query-reuse relations. For $R\geq2$, all submissions in
$W_0=(\bm q,\ldots,\bm q)$ share one permanent report. In $W_1=(\bm q,\ldots,\bm q,\bm q')$, the final submission belongs to a new logical query and uses an independently sampled permanent report.

For one coordinate, let $r_v(z)$ be the PRR probability of permanent bit $z$ given clean bit $v$, and let $i_z(y)$ be the IRR probability of report $y$
given $z$. The reuse and split distributions are
\[
\begin{aligned}
P_{\mathrm{reuse}}^v(\bm y)
&=\sum_z r_v(z)\prod_{j=1}^{R}i_z(y_j),\\
P_{\mathrm{split}}^{v,v'}(\bm y)
&=\left(\sum_z r_v(z)\prod_{j=1}^{R-1}i_z(y_j)\right)
  \left(\sum_{z'}r_{v'}(z')i_{z'}(y_R)\right).
\end{aligned}
\]
Let $\gamma_{=}$ be their maximum symmetric log-likelihood ratio over $v=v'$ and all $\bm y$, and define $\gamma_{\ne}$ analogously for $v\neq v'$. Both values are computed exactly from $a$, $b$, and $R$.

\begin{theorem}[Complete Search-Pattern XDP]
\label{thm:search-pattern-transcript}
\normalfont
Let $K_Q=\sum_{s\in\mathcal S_Q}\kappa_s$ and
$u=D_Q(\bm q,\bm q')$. Define
\[
\xi_{\mathrm{SP}}(W_0,W_1)
=
(K_Q-u)\gamma_{=}+u\gamma_{\ne}.
\]
For every measurable event $E$,
\[
\Pr[\mathsf V_{\mathrm{SP}}(W_0)\in E]
\leq
e^{\xi_{\mathrm{SP}}(W_0,W_1)}
\Pr[\mathsf V_{\mathrm{SP}}(W_1)\in E],
\]
with the reverse inequality obtained by exchanging $W_0$ and $W_1$. Therefore, the complete transcript satisfies $(\xi_{\mathrm{SP}},0)$-XDP over the declared workload adjacency. A uniform ordinary-DP bound is
$\bigl(K_Q\max\{\gamma_{=},\gamma_{\ne}\},0\bigr)$.
\end{theorem}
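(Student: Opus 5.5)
The plan mirrors the proof of Theorem~\ref{thm:query-transcript}: factor the report distribution over coordinates, bound each coordinate's likelihood ratio, multiply, and then pass from the reports to the full transcript by post-processing. The index, schedule, contacted shard set $\mathcal S_Q$ and search configuration are identical under $W_0$ and $W_1$. The only world-dependent object is therefore the collection of perturbed query reports: $R$ submissions per coordinate of every contacted shard.

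\textbf{Step 1: factor over coordinates.} PRR and IRR coins are independent across coordinates and shards. Hence the joint probability of the report array $(\bm y_{s,c})_{s\in\mathcal S_Q,\,c\le\kappa_s}$ factors as a product over the $K_Q$ coordinates.
\begin{itemize}
\item Under $W_0$, coordinate $(s,c)$ contributes $P_{\mathrm{reuse}}^{v}(\bm y_{s,c})$ with $v=h_{s,c}(\bm q)$.
\item Under $W_1$, the first $R-1$ submissions share one PRR bit drawn from $v$. The last submission uses an independent PRR bit drawn from $v'=h_{s,c}(\bm q')$. This gives exactly $P_{\mathrm{split}}^{v,v'}(\bm y_{s,c})$.
\end{itemize}
I will verify the split form by conditioning on the two independent permanent bits $z,z'$ and summing them out separately.

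\textbf{Step 2: bound each coordinate's ratio.} Coordinates fall into two classes.
\begin{itemize}
\item There are $K_Q-u$ coordinates with $v=v'$. For each, the ratio $P_{\mathrm{reuse}}^v/P_{\mathrm{split}}^{v,v}$ is bounded by $e^{\gamma_=}$ by definition of $\gamma_=$ as a maximum over $\bm y$.
\item There are $u$ coordinates with $v\neq v'$. Each ratio is bounded by $e^{\gamma_{\ne}}$.
\end{itemize}
Both $\gamma$ values are symmetric maxima of $|\ln(\cdot)|$, so the same bounds hold for the inverse ratios.

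\textbf{Step 3: combine and post-process.} Multiplying the per-coordinate bounds gives a pointwise bound $e^{(K_Q-u)\gamma_=+u\gamma_{\ne}}$ on the report likelihood ratio. Summing over the reports in $E$ yields the inequality for the reports. The transcript consists of HNSW traversal, candidates, handles and sizes. Each of these is a function of the reports, the fixed index, public parameters and world-independent randomness. It is therefore post-processing, exactly as in Theorem~\ref{thm:query-transcript}, and the bound transfers to $\mathsf V_{\mathrm{SP}}$. Exchanging the roles of $W_0$ and $W_1$ gives the reverse inequality. The uniform DP bound follows from $(K_Q-u)\gamma_=+u\gamma_{\ne}\le K_Q\max\{\gamma_=,\gamma_{\ne}\}$.

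\textbf{Main obstacle.} The delicate point is that $\gamma_=$ and $\gamma_{\ne}$ must be finite, and equal-bit coordinates genuinely contribute $\gamma_=>0$. Unlike the value-privacy setting, equal clean bits do not give identical distributions, because reuse and split differ in correlation structure. Finiteness holds because $a,b\in(0,1/2)$ makes every $\bm y$ have positive probability under both distributions. The maxima are over the finite set $\{0,1\}^R$, so they are attained and computable from $a$, $b$ and $R$.

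A second subtlety is that the contacted shard set, the workload timing, and any routing must be query-independent and common to both workloads. Otherwise the post-processing step fails. I would state this explicitly as an assumption inherited from the declared adjacency.
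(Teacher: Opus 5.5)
Your proposal is correct and follows essentially the same route as the paper. The paper bounds each of the $u$ differing coordinates by $\gamma_{\ne}$ and each of the $K_Q-u$ agreeing coordinates by $\gamma_{=}$, combines them through independence across coordinates and shards, and treats HNSW trajectories and results as post-processing of the report sequences and the fixed index. Your additional checks spell out what the paper's sketch leaves implicit: you derive the split form by summing out the two independent permanent bits, and you note that the maxima are finite and attained over $\{0,1\}^R$ for $a,b\in(0,1/2)$.
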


\begin{proof}[Skeched Proof]
Each of the $u$ differing coordinates contributes at most
$\gamma_{\ne}$. On the remaining $K_Q-u$ coordinates, the clean bits agree, but replacing a shared permanent state with an independent state changes cross-submission correlation by at most $\gamma_{=}$. Independence across coordinates and composition give $\xi_{\mathrm{SP}}(W_0,W_1)$. Search trajectories, returned results, and cross-shard correspondences are
post-processing of the report sequences and fixed index and incur no additional privacy cost.
\end{proof}

The generally nonzero $(K_Q-u)\gamma_{=}$ term is necessary because search-pattern privacy protects the query-reuse relation, not only the query value. Thus, $\xi_{\mathrm{SP}}$ is an XDP parameter over the declared
workload adjacency but cannot be expressed solely using $D_Q$. Since $\gamma_{=}$ and $\gamma_{\ne}$ are calculated directly from PRR/IRR, no separate search-pattern experiment is required. The theorem assumes that no stable query identifier or memoization key is exposed.

\subsubsection{Discussion}

The stored-index theorem quantifies stored-data privacy. The access-pattern
theorem bounds leakage from visible HNSW traversals, while the search-pattern
theorem quantifies repeated-query leakage. The stored-index theorem
conservatively assumes perfect cross-shard association.
Appendix~\ref{app:empirical-leakage-evaluation} evaluates whether geometry-,
topology-, and graph-signal-based methods can recover such correspondences in
practice.


\section{Evaluation}\label{sec:evaluation}

In this section, we evaluate the performance of \name{}. We structure the results around two research questions: 
\begin{enumerate}[leftmargin=*]
    \item How does \name~compare with the baselines in search quality, index construction cost, latency, and communication cost?
    \item How does \name{} scale to support a hundred-million-vector index?
\end{enumerate}

\smallskip\noindent\textbf{Baselines.}
We compare \name~against four baselines.
\begin{itemize}[leftmargin=*]
\item \noindent{\textbf{Plaintext-HNSW.}} This baseline performs non-private semantic search. It builds an HNSW index over the original embeddings using
\texttt{hnswlib} and answers queries directly in plaintext.  We use this to evaluate the privacy overhead of \name{}.

\item \noindent{\textbf{Compass.}}
This baseline combines HNSW search with ORAM~\cite{zhu2025compass}. It hides the logical graph nodes accessed during adaptive search, but requires multiple encrypted block transfers and
client--server interactions. We use the Compass configurations corresponding
to the evaluated datasets and apply the same network settings to its
reported communication cost.

\item \noindent{\textbf{HE-Cluster.}}
This baseline adopts homomorphic encryption for semantic search. It is a variant of Tiptoe~\cite{henzinger2023private} and is also used as a baseline in Compass~\cite{zhu2025compass}. It partitions \(N\) items into approximately
\(\sqrt{N}\) padded semantic clusters, computes encrypted similarities within
the selected cluster, and returns encrypted scores for client-side ranking.

\item \noindent{\textbf{No-Noise.}}
This baseline is the same as \name{} but without storage- and query-side LSHRR. We use this for ablation: the gap between Plaintext-HNSW and
No-Noise captures the effect of binary multi-graph retrieval, while the gap
between No-Noise and \name~captures the impact of randomized
perturbation.
\end{itemize}

\smallskip\noindent\textbf{Datasets and metrics.}
We evaluate SIFT, LAION, TripClick, and MS MARCO, covering dense-vector,
multimodal, and text-retrieval workloads. For SIFT and LAION, we use
Recall@\(K\), which measures the overlap between the returned and ground-truth
Top-\(K\) results. For TripClick and MS MARCO, we use MRR@10, which emphasizes
highly ranked relevant results. Communication is measured by total request and
response bytes and the number of messages. Dataset-specific parameters, query
counts, and candidate budgets are reported in
Appendix~\ref{app:experimental-parameters}.

\smallskip\noindent\textbf{Experiment settings.}
All experiments run on a dual-socket machine with Hygon C86 7375 32-core processors at 2.0\,GHz and 512\,GB of memory. We use Linux traffic control to emulate a LAN with 3\,Gbps bandwidth and 1\,ms RTT and a WAN with 400\,Mbps bandwidth and 80\,ms RTT. 

\subsection{Performance Against Baselines}

\begin{figure*}
\centering
\caption{Search quality versus latency. Hollow and filled markers indicate LAN
and WAN; TripClick and MS MARCO report MRR@10. HE-Cluster is included only
where reported by Compass and is not always quality-matched.}
\label{fig:recall-latency-four}

\begin{adjustbox}{max width=\textwidth,center}
\begin{tikzpicture}
\begin{groupplot}[
    group style={
        group size=4 by 1,
        horizontal sep=0.7cm
    },
    scale only axis,
    width=0.228\textwidth,
    height=0.1283\textwidth,
    xmin=0.01,
    xmax=1200,
    xmode=log,
    log basis x={10},
    xlabel={Latency (ms)},
    xtick={0.01,0.1,1,10,100,1000,10000,100000},
    grid=major,
    grid style={gray!25, line width=0.25pt},
    tick label style={font=\footnotesize},
    label style={font=\footnotesize},
    title style={font=\small},
    every axis plot/.append style={only marks},
]

\nextgroupplot[
    title={Sift},
    ylabel={Recall},
    ymin=0.4,
    ymax=1.02,
    ytick={0.4,0.5,0.6,0.7,0.8,0.9,1.0},
    xmode=log,
    xmin=1e-2,
    xmax=1e5,
    xtick={1e-2,1e-1,1,1e1,1e2,1e3,1e4,1e5}
]

\addplot[
    mark=*,
    mark size=2.0pt,
    color=PaperBlue,
    mark options={fill=white, draw=PaperBlue, line width=0.75pt}
]
coordinates {(0.03,0.98)};

\addplot[
    mark=square*,
    mark size=2.0pt,
    color=PaperOrange,
    mark options={fill=white, draw=PaperOrange, line width=0.75pt}
]
coordinates {(3.40,0.967)};

\addplot[
    mark=triangle*,
    mark size=2.1pt,
    color=PaperGreen,
    mark options={fill=white, draw=PaperGreen, line width=0.75pt}
]
coordinates {(65.02,0.947)};

\addplot[
    mark=diamond*,
    mark size=2.1pt,
    color=PaperRed,
    mark options={fill=white, draw=PaperRed, line width=0.75pt}
]
coordinates {(9.09,0.9583)};

\addplot[
    mark=pentagon*,
    mark size=2.1pt,
    color=PaperPurple,
    mark options={fill=white, draw=PaperPurple, line width=0.75pt}
]
coordinates {(67939.52,0.47)};

\addplot[
    mark=*,
    mark size=2.0pt,
    color=PaperBlue,
    mark options={fill=PaperBlue, draw=PaperBlue, line width=0.75pt}
]
coordinates {(0.85,0.987)};

\addplot[
    mark=square*,
    mark size=2.0pt,
    color=PaperOrange,
    mark options={fill=PaperOrange, draw=PaperOrange, line width=0.75pt}
]
coordinates {(82.79,0.9738)};

\addplot[
    mark=triangle*,
    mark size=2.1pt,
    color=PaperGreen,
    mark options={fill=PaperGreen, draw=PaperGreen, line width=0.75pt}
]
coordinates {(961,0.947)};

\addplot[
    mark=diamond*,
    mark size=2.1pt,
    color=PaperRed,
    mark options={fill=PaperRed, draw=PaperRed, line width=0.75pt}
]
coordinates {(326.11,0.9573)};

\addplot[
    mark=pentagon*,
    mark size=2.1pt,
    color=PaperPurple,
    mark options={fill=PaperPurple, draw=PaperPurple, line width=0.75pt}
]
coordinates {(87421.42,0.47)};

\nextgroupplot[
    title={Laion},
    ymin=0.6,
    ymax=1.02,
    ytick={0.6,0.7,0.8,0.9,1.0},
    xmode=log,
    xmin=1e-2,
    xmax=1e5,
    xtick={1e-2,1e-1,1,1e1,1e2,1e3,1e4,1e5}
    ]

\addplot[
    mark=*,
    mark size=2.0pt,
    color=PaperBlue,
    mark options={fill=white, draw=PaperBlue, line width=0.75pt}
]
coordinates {(0.07,0.9854)};

\addplot[
    mark=square*,
    mark size=2.0pt,
    color=PaperOrange,
    mark options={fill=white, draw=PaperOrange, line width=0.75pt}
]
coordinates {(4.86,0.9756)};

\addplot[
    mark=triangle*,
    mark size=2.1pt,
    color=PaperGreen,
    mark options={fill=white, draw=PaperGreen, line width=0.75pt}
]
coordinates {(42.93,0.974)};

\addplot[
    mark=diamond*,
    mark size=2.1pt,
    color=PaperRed,
    mark options={fill=white, draw=PaperRed, line width=0.75pt}
]
coordinates {(10.81,0.965)};

\addplot[
    mark=pentagon*,
    mark size=2.1pt,
    color=PaperPurple,
    mark options={fill=white, draw=PaperPurple, line width=0.75pt}
]
coordinates {(20238.37,0.9897)};

\addplot[
    mark=*,
    mark size=2.0pt,
    color=PaperBlue,
    mark options={fill=PaperBlue, draw=PaperBlue, line width=0.75pt}
]
coordinates {(4.11,0.9854)};

\addplot[
    mark=square*,
    mark size=2.0pt,
    color=PaperOrange,
    mark options={fill=PaperOrange, draw=PaperOrange, line width=0.75pt}
]
coordinates {(165.50,0.9756)};

\addplot[
    mark=triangle*,
    mark size=2.1pt,
    color=PaperGreen,
    mark options={fill=PaperGreen, draw=PaperGreen, line width=0.75pt}
]
coordinates {(1082.58,0.974)};

\addplot[
    mark=diamond*,
    mark size=2.1pt,
    color=PaperRed,
    mark options={fill=PaperRed, draw=PaperRed, line width=0.75pt}
]
coordinates {(565.81,0.965)};

\addplot[
    mark=pentagon*,
    mark size=2.1pt,
    color=PaperPurple,
    mark options={fill=PaperPurple, draw=PaperPurple, line width=0.75pt}
]
coordinates {(26665.74, 0.9897)};

\nextgroupplot[
    title={TripClick},
    ymin=0.12,
    ymax=0.32,
    ytick={0.1,0.15,0.20,0.25,0.30,0.35},
    xmode=log,
    xmin=1e-2,
    xmax=1e6,
    xtick={1e-2,1e-1,1,1e1,1e2,1e3,1e4,1e5,1e6}
]

\addplot[
    mark=*,
    mark size=2.0pt,
    color=PaperBlue,
    mark options={fill=white, draw=PaperBlue, line width=0.75pt}
]
coordinates {(0.1885,0.2934)};

\addplot[
    mark=square*,
    mark size=2.0pt,
    color=PaperOrange,
    mark options={fill=white, draw=PaperOrange, line width=0.75pt}
]
coordinates {(24.24,0.2925)};

\addplot[
    mark=triangle*,
    mark size=2.1pt,
    color=PaperGreen,
    mark options={fill=white, draw=PaperGreen, line width=0.75pt}
]
coordinates {(340.71,0.2909)};

\addplot[
    mark=diamond*,
    mark size=2.1pt,
    color=PaperRed,
    mark options={fill=white, draw=PaperRed, line width=0.75pt}
]
coordinates {(22.60,0.2898)};

\addplot[
    mark=pentagon*,
    mark size=2.1pt,
    color=PaperPurple,
    mark options={fill=white, draw=PaperPurple, line width=0.75pt}
]
coordinates {(373157.00,0.14)};

\addplot[
    mark=*,
    mark size=2.0pt,
    color=PaperBlue,
    mark options={fill=PaperBlue, draw=PaperBlue, line width=0.75pt}
]
coordinates {(5.0435,0.2934)};

\addplot[
    mark=square*,
    mark size=2.0pt,
    color=PaperOrange,
    mark options={fill=PaperOrange, draw=PaperOrange, line width=0.75pt}
]
coordinates {(1211.10,0.2925)};

\addplot[
    mark=triangle*,
    mark size=2.1pt,
    color=PaperGreen,
    mark options={fill=PaperGreen, draw=PaperGreen, line width=0.75pt}
]
coordinates {(5041.67,0.2909)};

\addplot[
    mark=diamond*,
    mark size=2.1pt,
    color=PaperRed,
    mark options={fill=PaperRed, draw=PaperRed, line width=0.75pt}
]
coordinates {(3571.19,0.2898)};

\addplot[
    mark=pentagon*,
    mark size=2.1pt,
    color=PaperPurple,
    mark options={fill=PaperPurple, draw=PaperPurple, line width=0.75pt}
]
coordinates {(363590.00,0.14)};

\nextgroupplot[
    title={MS MARCO},
    xmax=10000,
    xtick={0.01,0.1,1,10,100,1000,10000},
    ymin=0.30,
    ymax=0.42,
    ytick={0.30,0.34,0.38,0.42}
]

\addplot[
    mark=*,
    mark size=2.0pt,
    color=PaperBlue,
    mark options={fill=white, draw=PaperBlue, line width=0.75pt}
]
coordinates {(0.4119,0.3805)};

\addplot[
    mark=square*,
    mark size=2.0pt,
    color=PaperOrange,
    mark options={fill=white, draw=PaperOrange, line width=0.75pt}
]
coordinates {(36.87,0.3667)};

\addplot[
    mark=triangle*,
    mark size=2.1pt,
    color=PaperGreen,
    mark options={fill=white, draw=PaperGreen, line width=0.75pt}
]
coordinates {(607.75,0.3560)};

\addplot[
    mark=diamond*,
    mark size=2.1pt,
    color=PaperRed,
    mark options={fill=white, draw=PaperRed, line width=0.75pt}
]
coordinates {(130.29,0.3617)};

\addplot[
    mark=*,
    mark size=2.0pt,
    color=PaperBlue,
    mark options={fill=PaperBlue, draw=PaperBlue, line width=0.75pt}
]
coordinates {(5.2708,0.3805)};

\addplot[
    mark=square*,
    mark size=2.0pt,
    color=PaperOrange,
    mark options={fill=PaperOrange, draw=PaperOrange, line width=0.75pt}
]
coordinates {(1801.31,0.3667)};

\addplot[
    mark=triangle*,
    mark size=2.1pt,
    color=PaperGreen,
    mark options={fill=PaperGreen, draw=PaperGreen, line width=0.75pt}
]
coordinates {(7763.51,0.3560)};

\addplot[
    mark=diamond*,
    mark size=2.1pt,
    color=PaperRed,
    mark options={fill=PaperRed, draw=PaperRed, line width=0.75pt}
]
coordinates {(6002.67,0.3617)};

\end{groupplot}

\node[
    anchor=south,
    inner sep=0pt,
    text width=0.98\textwidth,
    align=center
] at ($(group c2r1.north)!0.5!(group c3r1.north)+(0,0.84cm)$) {%
{\footnotesize
\setlength{\tabcolsep}{2pt}
\begin{tabular*}{\linewidth}{@{\extracolsep{\fill}}lccccc@{}}
{\(\circ\) LAN;\ \(\bullet\) WAN}
&
\tikz[baseline=-0.5ex]{\draw[PaperBlue, line width=0.75pt, fill=white]
(0,0) circle (1.8pt);}
\hspace{-1pt}
\tikz[baseline=-0.5ex]{\draw[PaperBlue, line width=0.75pt, fill=PaperBlue]
(0,0) circle (1.8pt);}
~HNSW
&
\tikz[baseline=-0.5ex]{\draw[PaperOrange, line width=0.75pt, fill=white]
(-1.8pt,-1.8pt) rectangle (1.8pt,1.8pt);}
\hspace{-1pt}
\tikz[baseline=-0.5ex]{\draw[PaperOrange, line width=0.75pt, fill=PaperOrange]
(-1.8pt,-1.8pt) rectangle (1.8pt,1.8pt);}
~No-Noise
&
\tikz[baseline=-0.5ex]{\draw[PaperPurple, line width=0.75pt, fill=white]
(90:2.2pt)--(18:2.2pt)--(-54:2.2pt)--(-126:2.2pt)--(162:2.2pt)--cycle;}
\hspace{-1pt}
\tikz[baseline=-0.5ex]{\draw[PaperPurple, line width=0.75pt, fill=PaperPurple]
(90:2.2pt)--(18:2.2pt)--(-54:2.2pt)--(-126:2.2pt)--(162:2.2pt)--cycle;}
~HE-Cluster
&
\tikz[baseline=-0.5ex]{\draw[PaperGreen, line width=0.75pt, fill=white]
(90:2.35pt)--(210:2.35pt)--(330:2.35pt)--cycle;}
\hspace{-1pt}
\tikz[baseline=-0.5ex]{\draw[PaperGreen, line width=0.75pt, fill=PaperGreen]
(90:2.35pt)--(210:2.35pt)--(330:2.35pt)--cycle;}
~Compass
&
\tikz[baseline=-0.5ex]{\draw[PaperRed, line width=0.75pt, fill=white]
(0,2.45pt)--(2.45pt,0)--(0,-2.45pt)--(-2.45pt,0)--cycle;}
\hspace{-1pt}
\tikz[baseline=-0.5ex]{\draw[PaperRed, line width=0.75pt, fill=PaperRed]
(0,2.45pt)--(2.45pt,0)--(0,-2.45pt)--(-2.45pt,0)--cycle;}
~MESS
\end{tabular*}
}
};

\end{tikzpicture}
\end{adjustbox}
\end{figure*}

\begin{table*}
\centering
	\caption{Latency and communication cost. HNSW is non-private; No-Noise disables
perturbation; HE-Cluster and Compass use HE and ORAM, respectively; Ours denotes
\name. Two messages form one request--response round.}
	\label{comprehensiveeval}
	\resizebox{\textwidth}{!}{%
		\begin{tabular}{@{}ll|c|ccccc|c|c@{}}
			\toprule
			\multirow{2}{*}{\textbf{Dataset}} 
			& \multirow{2}{*}{\textbf{Scheme}} 
			& \multicolumn{1}{c|}{\textbf{Offline Setup Phase}} 
			& \multicolumn{5}{c|}{\textbf{Online Search Latency (ms/query)}} 
            & \multirow{2}{*}{\textbf{\begin{tabular}[c]{@{}c@{}}Messages\\ per Query\end{tabular}}}
			& \multirow{2}{*}{\textbf{\begin{tabular}[c]{@{}c@{}}Comm. Overhead\\ (KB/query)\end{tabular}}} \\ 
			\cmidrule(lr){3-3} 
			\cmidrule(lr){4-8}
			&  
			& \textbf{Build Time (s)} 
			& \textbf{Mode}
			& \textbf{Server Eval.} 
			& \textbf{Trans. Lat.}
			& \textbf{Client Eval.} 
			& \textbf{Total Time} 
			& \\ 
			\midrule
			
			\multirow{8}{*}{\begin{tabular}[c]{@{}l@{}}\textbf{SIFT}\\ $(10\mathrm{K},128d)$\end{tabular}} 
			& \multirow{2}{*}{HNSW \cite{malkov2018efficient}} 
			& \multirow{2}{*}{$44.42$} 
			& WAN & $0.02$ & $0.83$ & -- & $0.85$ & \multirow{2}{*}{$2$} & \multirow{2}{*}{$0.64$} \\
			&  &  & LAN & $0.02$ & $0.01$ & -- & $0.03$ & \\

			& \multirow{2}{*}{No-Noise } 
			& \multirow{2}{*}{$435.40$} 
			& WAN & $1.08$ & $80.73$ & $0.97$ & $82.78$ & \multirow{2}{*}{$2$} & \multirow{2}{*}{$138.07$} \\
			&  &  & LAN & $0.99$ & $1.52$ & $0.89$ & $3.40$ & \\

            & \multirow{2}{*}{{HE-Cluster}} 
			& \multirow{2}{*}{$-$} 
			& WAN & $55621.65$ & $24083.69$ & $7716.07$ & $87421.42$ & \multirow{2}{*}{$2$} & \multirow{2}{*}{$744096.35$} \\
            &  &  & LAN & $55980.91$ & $4250.11$ & $7708.48$ & $67939.52$ & \\

			& \multirow{2}{*}{\textbf{Compass} \cite{zhu2025compass}} 
			& \multirow{2}{*}{$\ge 44.42$} 
			& WAN & $10.03$ & $936.62$ & $11.70$ & $958.35$ & \multirow{2}{*}{$16.1048$} & \multirow{2}{*}{$11451.86$} \\
			&  &  & LAN & $13.74$ & $36.85$ & $14.43$ & $65.02$ & \\

			& \multirow{2}{*}{\textbf{Ours}} 
			& \multirow{2}{*}{$512.74$} 
			& WAN & $1.45$ & $323.32$ & $1.34$ & $\mathbf{326.11}$ & \multirow{2}{*}{$2$} & \multirow{2}{*}{$788.82$} \\
			&  &  & LAN & $1.16$ & $6.75$ & $1.18$ & $\mathbf{9.09}$ & \\
            
			\midrule

			\multirow{8}{*}{\begin{tabular}[c]{@{}l@{}}\textbf{LAION}\\ $(1\mathrm{M},512d)$\end{tabular}} 
			& \multirow{2}{*}{HNSW \cite{malkov2018efficient}} 
			& \multirow{2}{*}{$6.53$} 
			& WAN & $0.02$ & $4.11$ & -- & $4.13$ & \multirow{2}{*}{$2$} & \multirow{2}{*}{$2.14$} \\
			&  &  & LAN & $0.02$ & $0.07$ & -- & $0.09$ & \\

			& \multirow{2}{*}{No-Noise } 
			& \multirow{2}{*}{$25.64$} 
			& WAN & $1.20$ & $163 .11$ & $1.19$ & $165.50$ & \multirow{2}{*}{$2$} & \multirow{2}{*}{$391.82$} \\
			&  &  & LAN & $1.29$ & $2.54$ & $1.02$ & $4.85$ & \\

            & \multirow{2}{*}{{HE-Cluster}} 
			& \multirow{2}{*}{$-$} 
			& WAN & $16408.31$ & $8040.37$ & $2217.09$ & $26665.74$ & \multirow{2}{*}{$2$} & \multirow{2}{*}{$208941.03$} \\
            &  &  & LAN & $16408.31$ & $1601.69$ & $2228.37$ & $20238.37$ & \\

			& \multirow{2}{*}{\textbf{Compass} \cite{zhu2025compass}} 
			& \multirow{2}{*}{$\ge 6.53$} 
			& WAN & $2.74$ & $1068.67$ & $11.18$ & $1082.59$ & \multirow{2}{*}{$16.002$} & \multirow{2}{*}{$45989.36$} \\
			&  &  & LAN & $3.12$ & $28.35$ & $11.45$ & $42.92$ & \\

			& \multirow{2}{*}{\textbf{Ours}} 
			& \multirow{2}{*}{$44.33$} 
			& WAN & $1.12$ & $563.55$ & $1.14$ & $\mathbf{565.81}$ & \multirow{2}{*}{$2$} & \multirow{2}{*}{$1303.57$} \\
			&  &  & LAN & $0.92$ & $9.20$ & $0.70$ & $\mathbf{10.82}$ & \\
			\midrule

			\multirow{8}{*}{\begin{tabular}[c]{@{}l@{}}\textbf{TripClick}\\ $(1.5\mathrm{M},768d)$\end{tabular}} 
			& \multirow{2}{*}{HNSW \cite{malkov2018efficient}} 
			& \multirow{2}{*}{$1211.97$} 
			& WAN & $0.13$ & $4.92$ & -- & $5.05$ & \multirow{2}{*}{$2$} & \multirow{2}{*}{$3.14$} \\
			&  &  & LAN & $0.07$ & $0.12$ & -- & $0.19$ & \\

			& \multirow{2}{*}{No-Noise } 
			& \multirow{2}{*}{$1137.13$} 
			& WAN & $2.88$ & $1206.90$ & $1.32$ & $1211.10$ & \multirow{2}{*}{$2$} & \multirow{2}{*}{$2914.82$} \\
			&  &  & LAN & $2.33$ & $20.82$ & $1.10$ & $24.25$ & \\

            & \multirow{2}{*}{{HE-Cluster}} 
			& \multirow{2}{*}{$-$} 
			& WAN  & $325800.00$ 
            & $29180.00$ 
            & $8620.00$ 
            & $363590.00$ 
            & \multirow{2}{*}{$2$} & \multirow{2}{*}{$782504.63$} \\
            &  &  & LAN & $358757.20$ 
            & $5502.80$ 
            & $8897.02$ 
            & $373157.00$\\

			& \multirow{2}{*}{\textbf{Compass} \cite{zhu2025compass}} 
			& \multirow{2}{*}{$\ge 1211.97$} 
			& WAN & $134.82$ & $4774.15$ & $132.70$ & $5041.67$ & \multirow{2}{*}{$18.0613$} & \multirow{2}{*}{$133042.91$} \\
			&  &  & LAN & $25.45$ & $200.60$ & $114.65$ & $340.70$ & \\

			& \multirow{2}{*}{\textbf{Ours}} 
			& \multirow{2}{*}{$1586.19$} 
			& WAN & $6.46$ & $3563.14$ & $1.59$ & $\mathbf{3571.19}$ & \multirow{2}{*}{$2$} & \multirow{2}{*}{$8936.57$} \\
			&  &  & LAN & $2.97$ & $18.01$ & $1.62$ & $\mathbf{22.60}$ & \\
			\midrule

			\multirow{8}{*}{\begin{tabular}[c]{@{}l@{}}\textbf{MS MARCO}\\ $(8.8\mathrm{M},768d)$\end{tabular}} 
			& \multirow{2}{*}{HNSW \cite{malkov2018efficient}} 
			& \multirow{2}{*}{$10728.84$} 
			& WAN & $0.22$ & $5.05$ & -- & $5.27$ & \multirow{2}{*}{$2$} & \multirow{2}{*}{$3.15$} \\
			&  &  & LAN & $0.18$ & $0.23$ & -- & $0.41$ & \\

			& \multirow{2}{*}{No-Noise } 
			& \multirow{2}{*}{$8676.12$} 
			& WAN & $4.33$ & $1795.45$ & $1.53$ & $1801.31$ & \multirow{2}{*}{$2$} & \multirow{2}{*}{$4468.82$} \\
			&  &  & LAN & $3.03$ & $32.33$ & $1.52$ & $36.88$ & \\

            & \multirow{2}{*}{{HE-Cluster}} 
			& \multirow{2}{*}{$-$} 
			& WAN  & $-$ & $-$ & $-$ & $-$ & \multirow{2}{*}{$-$} & \multirow{2}{*}{$-$}\\
            &  &  & LAN & $-$ & $-$ & $-$ & $-$ & \\

			& \multirow{2}{*}{\textbf{Compass} \cite{zhu2025compass}} 
			& \multirow{2}{*}{$\ge 10728.84$} 
			& WAN & $122.32$ & $7445.57$ & $195.61$ & $7763.50$ & \multirow{2}{*}{$18.0209$} & \multirow{2}{*}{$226657.48$} \\
			&  &  & LAN & $64.01$ & $361.85$ & $181.90$ & $607.76$ & \\

			& \multirow{2}{*}{\textbf{Ours}} 
			& \multirow{2}{*}{$11741.53$} 
			& WAN & $4.20$ & $5997.02$ & $1.45$ & $\mathbf{6002.67}$ & \multirow{2}{*}{$2$} & \multirow{2}{*}{$18066.32$} \\
			&  &  & LAN & $6.98$ & $120.47$ & $2.84$ & $\mathbf{130.29}$ & \\ 
           
			\bottomrule
		\end{tabular}%
	}
\end{table*}

\autoref{fig:recall-latency-four} compares the search quality of \name{} and the baselines. \autoref{comprehensiveeval} breaks down the latency and communication cost. 
For graph-based methods, we report results on configurations selected for high accuracy. The results for HE-Cluster are based on the configurations used in Compass~\cite{zhu2025compass}. 

\smallskip\noindent\textbf{Search quality.}
\autoref{fig:recall-latency-four} shows that \name~remains close to
Plaintext-HNSW: the Recall gap is below \(0.05\), and the MRR gap is below
\(0.03\). We can explain this gap using No-Noise's result. Specifically, although No-Noise removes randomized perturbation, it still searches finite-length binary embeddings in Hamming space, distributes items over several graph shards, aggregates candidates, transfers encrypted payloads, removes
duplicates, and reranks on the client. Its gap to Plaintext-HNSW represents the effect of searching binary multi-graph without privacy noise. Its gap to \name~is attributable to LSHRR and the larger candidate set needed to compensate for randomized bit flips.
We can see that adding LSHRR leads to a quality change of at most 0.011. This small quality change comes at the cost of efficiency, because \name~must retrieve
more candidates to maintain quality, increasing communication cost. 


\smallskip\noindent\textbf{Comparison with cryptographic baselines.}
Compass achieves accuracy close to that of non-private search, but ORAM-protected
adaptive traversal requires repeated accesses to encrypted blocks. \autoref{comprehensiveeval} shows that \name~is \(4.0\)--\(15.1\times\) faster under LAN and \(1.3\)--\(2.9\times\) faster under WAN. The largest LAN improvement appears on TripClick, where the cost of interactive traversal is especially high.
The WAN advantage is smaller because transfer time dominates in both systems as network speed decreases. HE-Cluster shows a different bottleneck. On the three datasets for which Compass reports results, encrypted similarity evaluation and encrypted score transfer require approximately \(20\)--\(373\) seconds and
\(209\)--\(783\) MB per query. Although its quality is competitive on LAION, the results indicate significant computational and communication overhead that is orders of magnitude higher than \name{}. In summary, the two baselines that adopt cryptographic approaches suffer from high overhead: HE-Cluster incurs significant overhead in vector computation and produces large ciphertexts, and Compass incurs large communication overhead.

\begin{figure}
\centering

\begin{tikzpicture}

\begin{axis}[
    width=\linewidth,
    height=0.5625\linewidth,
    ybar,
    bar width=7pt,
    xmin=-0.55,
    xmax=3.55,
    ymin=0,
    ymax=115,
    ytick={0,20,40,60,80,100},
    ylabel={Recall (\%)},
    axis y line*=left,
    axis x line*=bottom,
    xtick={0,1,2,3},
    xticklabels={
        {SIFT\\Pool=1500},
        {LAION\\Pool=700},
        {TripClick\\Pool=3000},
        {MS MARCO\\Pool=7000}
    },
    xticklabel style={
        align=center,
        font=\scriptsize
    },
    tick label style={
        font=\scriptsize
    },
    label style={
        font=\scriptsize
    },
    grid=major,
    grid style={
        gray!25,
        line width=0.25pt
    },
    enlarge x limits=false,
    clip=false,
    legend style={
        font=\scriptsize,
        draw=none,
        fill=white,
        fill opacity=0.9,
        text opacity=1,
        at={(0.5,0.98)},
        anchor=north,
        legend columns=4,
        column sep=4pt,
        inner xsep=2pt,
        inner ysep=1pt
    },
    extra description/.code={
        \draw[
            black,
            line width=0.4pt
        ]
        (rel axis cs:0,1) -- (rel axis cs:1,1);
    },
]

\addplot[
    draw=PaperBlue,
    fill=PaperBlue!45,
    line width=0.6pt,
    bar shift=-10.5pt
]
coordinates {
    (0,57.0)
    (1,57.9)
};
\addlegendentry{$t=1$}

\addplot[
    draw=PaperOrange,
    fill=PaperOrange!50,
    line width=0.6pt,
    bar shift=-3.5pt
]
coordinates {
    (0,77.2)
    (1,76.2)
};
\addlegendentry{$t=4$}

\addplot[
    draw=PaperGreen,
    fill=PaperGreen!50,
    line width=0.6pt,
    bar shift=3.5pt
]
coordinates {
    (0,84.7)
    (1,83.7)
};
\addlegendentry{$t=8$}

\addplot[
    draw=PaperRed,
    fill=PaperRed!45,
    line width=0.6pt,
    bar shift=10.5pt
]
coordinates {
    (0,95.73)
    (1,96.5)
};
\addlegendentry{$t=16$}

\end{axis}

\begin{axis}[
    width=\linewidth,
    height=0.5625\linewidth,
    ybar,
    bar width=7pt,
    xmin=-0.55,
    xmax=3.55,
    ymin=0,
    ymax=0.44,
    ytick={0,0.1,0.2,0.3,0.4},
    ylabel={MRR},
    axis y line*=right,
    axis x line=none,
    xtick=\empty,
    tick label style={
        font=\scriptsize
    },
    label style={
        font=\scriptsize
    },
    enlarge x limits=false,
    clip=false,
]

\addplot[
    draw=PaperBlue,
    fill=PaperBlue!45,
    line width=0.6pt,
    bar shift=-10.5pt
]
coordinates {
    (2,0.249885)
    (3,0.27628)
};

\addplot[
    draw=PaperOrange,
    fill=PaperOrange!50,
    line width=0.6pt,
    bar shift=-3.5pt
]
coordinates {
    (2,0.259944)
    (3,0.307478)
};

\addplot[
    draw=PaperGreen,
    fill=PaperGreen!50,
    line width=0.6pt,
    bar shift=3.5pt
]
coordinates {
    (2,0.265512)
    (3,0.321746)
};

\addplot[
    draw=PaperRed,
    fill=PaperRed!45,
    line width=0.6pt,
    bar shift=10.5pt
]
coordinates {
    (2,0.2898)
    (3,0.3617)
};

\end{axis}

\end{tikzpicture}

\caption{Effect of the number of selected shards \(t\) on search quality
with \(M=64\) and a fixed candidate pool for each dataset. SIFT and LAION are
evaluated using Recall, while TripClick and MS MARCO are evaluated using MRR.
}
\label{fig:t_retrieval_quality_single}

\end{figure}

\begin{table*}
\centering

\caption{SIFT100M performance under different query-side perturbation
parameters.}
\label{tab:sift100m_eval}

\scriptsize
\setlength{\tabcolsep}{4.2pt}
\renewcommand{\arraystretch}{0.95}

\begin{adjustbox}{width=0.86\textwidth,center}
\begin{tabular}{@{}lc|c|c|cccc|cc@{}}
\toprule
\multirow{2}{*}{\textbf{Dataset}}
& \multirow{2}{*}{\textbf{Build Time}}
& \multirow{2}{*}{\textbf{$p$}}
& \multirow{2}{*}{\textbf{Mode}}
& \multicolumn{4}{c|}{\textbf{Online Search Performance}}
& \multirow{2}{*}{\textbf{Candidate}}
& \multirow{2}{*}{\textbf{Communication}} \\
\cmidrule(lr){5-8}
& & &
& \textbf{Server Eval.}
& \textbf{Trans.Lat.}
& \textbf{Client Dec.\&Rank}
& \textbf{Total Time}
& & \\
\midrule

\multirow{8}{*}{\textbf{SIFT100M}}
& \multirow{8}{*}{26.19 h}
& \multirow{2}{*}{$0.00$}
& WAN
& 3.21 & 888.89 & 1.63 & 893.73 & 4000 & 2123.54 \\

& & & LAN
& 1.98 & 21.52 & 1.36 & 24.86 & 4000 & 2123.54 \\

\cmidrule(lr){3-10}

& & \multirow{2}{*}{$0.08$}
& WAN
& 8.61 & 2031.79 & 2.45 & 2042.85 & 9500 & 5069.04 \\

& & & LAN
& 3.43 & 46.91 & 2.19 & 52.53 & 9500 & 5069.04 \\

\cmidrule(lr){3-10}

& & \multirow{2}{*}{$0.10$}
& WAN
& 14.33 & 2666.87 & 2.86 & 2684.06 & 12500 & 6678.79 \\

& & & LAN
& 11.40 & 57.53 & 2.55 & 71.48 & 12500 & 6678.79 \\

\cmidrule(lr){3-10}

& & \multirow{2}{*}{$0.12$}
& WAN
& 41.78 & 4225.49 & 3.90 & 4271.17 & 20000 & 10686.04 \\

& & & LAN
& 10.99 & 100.49 & 5.45 & 116.93 & 20000 & 10686.04 \\

\bottomrule
\end{tabular}
\end{adjustbox}

\end{table*}

{
\smallskip\noindent\textbf{Impact of $t$. }
\autoref{fig:t_retrieval_quality_single} evaluates the impact of $t$ on search quality, when $M$ is fixed at $64$. It can be seen that search 
quality improves as more shards are selected. We also note that larger \(t\) also increases the number of
shard-local entries and the associated storage cost. Increasing \(t\)
from 1 to 16 raises Recall from \(57.0\%\) to \(95.73\%\) on SIFT and from
\(57.9\%\) to \(96.5\%\) on LAION. TripClick MRR increases from \(0.250\) to
\(0.290\), while MS MARCO MRR increases from \(0.276\) to \(0.362\). This is because a larger
\(t\) places each vector in more independently perturbed graph views, increasing
the probability that it retains a favorable rank in at least one
shard. The gains are more pronounced for the Recall metric than for the MRR metric, because the latter considers the vector's final position after exact reranking.
}


\smallskip
\noindent\textbf{Latency breakdown.}
For \name~under LAN, server search remains below \(7\) ms and client decryption
and reranking below \(3\) ms across the four datasets. Therefore, both Hamming
graph traversal and local cryptographic processing are efficient. The remaining cost is encrypted candidate transfer, which becomes significant under WAN: on MS MARCO, this transfer accounts for 
\(5997.02\) ms of the \(6002.67\) ms end-to-end latency. Compass is also 
network-bound under WAN, but incurs high computation cost due to ORAM operations. These results suggest that optimizing candidate set and
payload size are interesting future extensions for \name.


\smallskip\noindent\textbf{Offline cost.}
The offline cost of \name~includes shard-specific hashing, independent
perturbation, payload encryption, and construction of multiple HNSW
shards. It is therefore notably higher than constructing one plaintext, as shown in~\autoref{comprehensiveeval}. However, this cost is incurred once and can be amortized over subsequent queries.

\smallskip\noindent\textbf{Insertion and deletion.}
\name~supports incremental updates without rebuilding the complete index. For insertion, the client computes and perturbs the new item's shard-specific binary embeddings, encrypts its payload, and sends it to the \(t=16\) shards. The server then inserts the perturbed item into the shards. Deletion is performed lazily: the client identifies the shard-local labels, and the server marks the corresponding nodes as inactive, thereby excluding them from future searches. On MS MARCO, insertion takes \(96.16\) ms over LAN and
\(351.29\) ms over WAN, while deletion takes \(1.33\) ms and \(80.35\) ms, respectively. Their communication costs are \(3.09\) KB and \(0.09\) KB. 

\begin{figure}
\centering

\begin{subfigure}{\linewidth}
\centering

\begin{tikzpicture}

\begin{axis}[
    width=\linewidth,
    height=0.5625\linewidth,
    ybar,
    bar width=8pt,
    symbolic x coords={0.00,0.08,0.10,0.12},
    xtick=data,
    xlabel={Query Perturbation Parameter $p$},
    ylabel={WAN Latency (ms)},
    ymin=0,
    ymax=4600,
    ytick={0,1000,2000,3000,4000},
    scaled y ticks=false,
    axis y line*=left,
    axis x line*=bottom,
    enlarge x limits=0.22,
    clip=false,
    tick label style={
        font=\scriptsize
    },
    label style={
        font=\scriptsize
    },
    grid=major,
    grid style={
        gray!25,
        line width=0.2pt
    },
    legend style={
        font=\scriptsize,
        draw=none,
        fill=white,
        at={(0.03,0.97)},
        anchor=north west
    },
    extra description/.code={
        \draw[
            black,
            line width=0.4pt
        ]
        (rel axis cs:0,1) -- (rel axis cs:1,1);
    },
]

\addplot[
    bar shift=-5pt,
    draw=PaperBlue,
    fill=PaperBlue!35,
    line width=0.6pt
]
coordinates {
    (0.00,893.73)
    (0.08,2042.85)
    (0.10,2684.06)
    (0.12,4271.17)
};
\addlegendentry{WAN}

\end{axis}

\begin{axis}[
    width=\linewidth,
    height=0.5625\linewidth,
    ybar,
    bar width=8pt,
    symbolic x coords={0.00,0.08,0.10,0.12},
    xtick=\empty,
    ylabel={LAN Latency (ms)},
    ymin=0,
    ymax=130,
    ytick={0,20,40,60,80,100,120},
    scaled y ticks=false,
    axis y line*=right,
    axis x line=none,
    enlarge x limits=0.22,
    clip=false,
    tick label style={
        font=\scriptsize
    },
    label style={
        font=\scriptsize
    },
    legend style={
        font=\scriptsize,
        draw=none,
        fill=white,
        at={(0.03,0.86)},
        anchor=north west
    },
]

\addplot[
    bar shift=5pt,
    draw=PaperOrange,
    fill=PaperOrange!45,
    line width=0.6pt
]
coordinates {
    (0.00,24.86)
    (0.08,52.53)
    (0.10,71.48)
    (0.12,116.93)
};
\addlegendentry{LAN}

\end{axis}

\end{tikzpicture}

\caption{Total search latency.}
\label{fig:sift100m-dual-axis-latency}
\end{subfigure}

\vspace{-1mm}

\begin{subfigure}{\linewidth}
\centering

\begin{tikzpicture}

\begin{axis}[
    width=\linewidth,
    height=0.5625\linewidth,
    xlabel={Candidate Pool ($\times 10^3$)},
    ylabel={Recall (\%)},
    xmin=800,
    xmax=12200,
    ymin=40,
    ymax=100,
    ytick={40,50,60,70,80,90,100},
    scaled x ticks=false,
    scaled y ticks=false,
    axis y line*=left,
    axis x line*=bottom,
    clip=false,
    tick label style={
        font=\scriptsize
    },
    label style={
        font=\scriptsize
    },
    grid=major,
    grid style={
        gray!25,
        line width=0.2pt
    },
    legend style={
        font=\scriptsize,
        draw=none,
        fill=white
    },
    legend pos=south east,
    xtick={
        1000,2000,3000,4000,5000,6000,
        7000,8000,9000,10000,11000,12000
    },
    xticklabels={1,2,3,4,5,6,7,8,9,10,11,12},
    extra description/.code={
        \draw[
            black,
            line width=0.4pt
        ]
        (rel axis cs:0,1) -- (rel axis cs:1,1);
        \draw[
            black,
            line width=0.4pt
        ]
        (rel axis cs:1,0) -- (rel axis cs:1,1);
    },
]

\addplot[
    thick,
    color=PaperBlue,
    mark=*,
    mark size=1.6pt
]
coordinates {
    (1000,61.9)
    (2000,76.0)
    (3000,82.2)
    (4000,86.7)
    (5000,89.4)
    (6000,91.0)
    (7000,92.6)
    (8000,93.2)
    (9000,93.7)
    (10000,94.2)
    (11000,95.0)
    (12000,95.3)
};
\addlegendentry{$p=0.00$}

\addplot[
    thick,
    color=PaperOrange,
    mark=square*,
    mark size=1.6pt
]
coordinates {
    (1000,53.1)
    (2000,64.5)
    (3000,73.0)
    (4000,73.9)
    (5000,77.4)
    (6000,82.3)
    (7000,83.8)
    (8000,86.0)
    (9000,86.9)
    (10000,87.8)
    (11000,88.8)
    (12000,89.3)
};
\addlegendentry{$p=0.08$}

\addplot[
    thick,
    color=PaperGreen,
    mark=triangle*,
    mark size=1.8pt
]
coordinates {
    (1000,47.7)
    (2000,62.1)
    (3000,65.4)
    (4000,69.5)
    (5000,74.1)
    (6000,77.2)
    (7000,81.2)
    (8000,82.5)
    (9000,83.8)
    (10000,84.4)
    (11000,86.4)
    (12000,87.0)
};
\addlegendentry{$p=0.10$}

\addplot[
    thick,
    color=PaperRed,
    mark=diamond*,
    mark size=1.8pt
]
coordinates {
    (1000,42.2)
    (2000,55.5)
    (3000,62.0)
    (4000,67.1)
    (5000,71.1)
    (6000,72.1)
    (7000,77.2)
    (8000,80.0)
    (9000,80.5)
    (10000,82.6)
    (11000,82.8)
    (12000,83.7)
};
\addlegendentry{$p=0.12$}

\end{axis}
\end{tikzpicture}

\caption{Recall under different candidate-pool sizes.}
\label{fig:sift100m-recall-pool}
\end{subfigure}

\caption{SIFT100M performance under different query-side perturbation
parameters: (a) total search latency under LAN and WAN settings; and
(b) Recall@10 under different candidate-pool sizes.}
\label{fig:sift100m-comparison}
\end{figure}

\subsection{Scalability}

\noindent
We evaluate the performance of \name{} at scale, using a large dataset (SIFT100M) created by selecting the first 100 million vectors from SIFT1B. We note that neither  Compass~\cite{zhu2025compass} nor iptoe~\cite{henzinger2023private} report results at this scale. We focus only on the online-phase performance and note that the offline phase takes 26.19 hours. 

\autoref{fig:sift100m-comparison} and
\autoref{tab:sift100m_eval} show the effect of the query-side
perturbation probability \(p\) on Recall@10, candidate volume, latency, and
communication.  With 12,000 candidates, increasing \(p\)
from \(0\) to \(0.12\) reduces Recall@10 from \(95.3\%\) to \(83.7\%\).
It can be seen that the gap decreases with larger candidate sets. However, as the candidate sets grow from 4,000 to 20,000 candidates, the communication cost increases \(5.0\times\). These results demonstrate the accuracy and efficiency trade-off. 
The latency breakdown in \autoref{tab:sift100m_eval} shows that the costs of graph search and client reranking remain
small compared to that of encrypted candidate transfer. In particular, at \(p=0.08\),
\name~processes SIFT100M in \(52.53\) ms/query under LAN; even at \(p=0.12\),
LAN latency remains below \(120\) ms/query. Under WNA, the latency
increases from \(893.73\) ms to \(4271.17\) ms due to large payloads. 
These results confirm that \name{} is practical for large-scale datasets. They highlight the fundamental trade-off: stronger query perturbation increases rank
uncertainty, therefore requires larger candidate sets and incurs higher communication overhead. The results indicate that reducing encrypted payload size or improving shard-level candidate filtering is more beneficial than optimizing the relatively small graph-search and reranking costs.

\section{Related Work}\label{sec:relatedwork}
\noindent\textbf{Vector Databases and ANN Search.}
Vector databases use approximate nearest-neighbor (ANN) indexes to support large-scale, high-dimensional embedding search. HNSW provides low-latency and high-recall graph traversal, while FAISS and SPANN study GPU acceleration and memory--disk hybrid indexing at billion-vector scale \cite{malkov2018efficient,johnson2021billion,chen2021spann}. Recent systems also consider database-level requirements: HAKES supports high-recall search under concurrent updates, while VBase integrates vector similarity search with relational query processing~\cite{hu2025hakes,zhang2023vbase}. These systems
target performance and scalability of plaintext vector search,
whereas \name targets privacy. While indexes other than HNSW can be used in \name{}, our current security analysis assumes the use of graph indexes. Extending \name{} to support other indexes is left as future work. 

\smallskip
\noindent\textbf{Encrypted Semantic Retrieval.}
Existing works reduce secure candidate-generation costs through clustering, LSH, tree indexes, or stronger deployment assumptions. Tiptoe and Wally use clustering, with Wally additionally combining epochs and differential privacy \cite{henzinger2023private,asi2024scalable}. Preco uses LSH with two
non-colluding servers, while VCC24 combines tree indexing and \(k\)-means under information-theoretic security
\cite{servan2022private,vithana2024private}. SANNS uses DORAM and garbled circuits for secure Top-\(K\) selection~\cite{chen2020sanns}, whereas trusted-hardware approaches introduce additional trust assumptions \cite{amjad2019forward,vo2021towards}.
{
Other works adopt Private Information Retrieval (PIR). In particular,  Preco notes that replacing its non-colluding servers with single-server PIR would increase overhead, while Panther combines PIR2A with secret sharing for private access and secure Top-\(K\) selection
\cite{servan2022private,li2025panther}.
}

Cryptographic solutions use order- or property-preserving encryption for distance comparison, but the ciphertext may leak metric
relations~\cite{wong2009secure,Huang2019Toward,liu2025privacy}. Compass uses ORAM to hide HNSW accesses~\cite{zhu2025compass}, while PACMANN combines PIR with a secure neighborhood graph~\cite{zhou2024pacmann}. These solutions provide stronger obliviousness but suffer significant overhead from cryptographic operations. 
In contrast, \name searches server-visible HNSW shards over perturbed binary embeddings, with well-defined stored-data, access-pattern, and search-pattern leakage.

\section{Conclusion}\label{sec:conclusion}
We presented \name{} that enables semantic search with privacy, accuracy, and efficiency. The system adopts a differential privacy approach, allowing search to be performed over perturbed vectors while bounding leakage of stored data, access patterns, and search patterns. \name{} proposes a multi-graph design that compensates for the accuracy loss from perturbation. The evaluation results show that it outperforms cryptographic baselines, reducing communication costs by up to \(35.28\times\) compared to Compass. The system remains practical at scale, achieving 52.53 ms/query on a dataset of 100M vectors. 
\bibliographystyle{ACM-Reference-Format}
\bibliography{crypto/reference}

%

\appendix
\label{app:Leakage}

\appendix

\section{Detailed Accounting and Empirical Leakage Evaluation}
\label{app:detailed-security-analysis}

This appendix complements Section~\ref{sec:security-analysis}. The XDP
definitions, mechanism-level guarantees, and proofs are given in the main
text. Here, we describe the numerical privacy accountants and evaluate how
much multi-shard XDP evidence can be organized by concrete cross-shard
inference procedures.

\subsection{Implementation of the Privacy Accountants}
\label{app:privacy-accountant-implementation}

\subsubsection{Stored-Data Accountant}

For a real challenge pair $q=(\bm x_0,\bm x_1)$, we compute the fixed-ISO-LSH
distance
$d_s(q)=d_H(H_s(\bm x_0),H_s(\bm x_1))$
for every frozen shard mapping. Under uniform $t$-of-$M$ assignment, let
$w_q(u)$ be the probability that the aggregate distance over the selected
shards equals $u$.

We compute $w_q(u)$ by dynamic programming rather than enumerating all
$\binom{M}{t}$ shard sets. Let $F_i(j,u)$ count the subsets of size $j$
among the first $i$ shards whose aggregate distance is $u$. Starting from
$F_0(0,0)=1$, we apply
$F_i(j,u)=F_{i-1}(j,u)+F_{i-1}(j-1,u-d_i(q))$
and obtain $w_q(u)=F_M(t,u)/\binom{M}{t}$. For identifier-fixed assignment,
$w_q$ is a point mass at the realized aggregate distance.

For every $u$, we evaluate the exact randomized-response privacy-loss
distribution in log space and obtain its hockey-stick profile
$\delta_u^{\mathrm{RR}}(\varepsilon)$. The route-averaged profile is then
formed using $w_q(u)$ and inverted by bisection to obtain the pair-specific
XDP value at target $\delta_0$. Both likelihood-ratio directions are
evaluated.

\subsubsection{Query Accountants}

For access-pattern privacy, the value in
Theorem~\ref{thm:query-transcript} is computed directly from the PRR
probability $a$, IRR probability $b$, repetition count $R$, and fixed-hash
query distance $D_Q$. Log-sum-exp arithmetic is used when evaluating
$\eta_R$, jointly accounting for the memoized PRR state and all fresh IRR
reports.

For search-pattern privacy, the reuse and split distributions depend only on
the number of ones among the first $R-1$ reports and the final report bit. We
enumerate these possibilities to compute $\gamma_{=}$ and $\gamma_{\ne}$
exactly. No additional access- or search-trace experiment is required because,
under the stated system model, the HNSW trace is post-processing of the
perturbed reports and fixed index. This conclusion excludes stable query
identifiers, routing based on clean queries, and external side channels.

\subsection{Empirical Cross-Shard Inference Evaluation}
\label{app:empirical-leakage-evaluation}

\subsubsection{Evaluation Objective}

The formal complete-view theorem grants the server perfect association of all
shard-local entries belonging to the target item. This assumption is
conservative because the implementation exposes no direct cross-shard
correspondence: different shards use different ISO-LSH mappings and local
handles, randomized response is sampled independently, and payload
ciphertexts use fresh encryption randomness.

The experiment asks a more practical question: starting from one known target
entry, how many additional target entries can a specified inference procedure
identify, and how much XDP evidence can it organize after combining the
recovered observations? The experiment does not replace the mechanism-level
XDP theorem. It measures only the leakage achieved by the evaluated
procedures, challenge pairs, auxiliary seeds, and system configuration.

\subsubsection{Neighboring Worlds}

For each challenge pair $q=(\bm x_0,\bm x_1)$, we construct two neighboring
databases with identical background records and identifiers. They differ only
in the embedding of one target identifier $i^\star$: world~0 contains
$(i^\star,\bm x_0)$ and world~1 contains
$(i^\star,\bm x_1)$.
Every trial reruns $t$-of-$M$ assignment, storage-side randomized response,
shard-local graph construction, and cross-shard inference. The assignment
mechanism is identical in the two worlds and independent of the protected
embedding. Calibration and held-out trials use disjoint randomness.
Ground-truth cross-shard correspondences and world labels are never supplied
to the inference procedures.

\subsubsection{Cross-Shard Inference Procedures}

We evaluate four complementary procedures. MDS constructs shard-local
Hamming-distance matrices and obtains low-dimensional coordinates using
classical multidimensional scaling~\cite{torgerson1952multidimensional};
the coordinates are then aligned using fixed background seeds. Topology
compares node degree, neighbor-degree statistics, and multi-hop HNSW
structure, drawing on topology-based de-anonymization and seeded graph
matching~\cite{narayanan2009deanonymizing,fishkind2019seeded,
kazemi2015growing}. GSGM propagates perturbed-code signals over each graph
and compares the resulting graph-filtered
descriptors~\cite{liu2024blind}. Joint standardizes and combines the three
scores and applies cycle-consistency checks across shard
triples~\cite{swoboda2019convex}.

The evaluator provides one known target entry in a reference shard and a fixed
fraction of background correspondences as auxiliary seeds. The remaining
target entries are hidden. For each destination shard, a procedure ranks all
non-seed candidates using only server-visible perturbed codes and graph
structure. It then retains the highest-confidence candidates until the public
selected-shard count $t$ is reached. Incorrect candidates are not removed
using ground truth, and no learned classifier is used. All descriptors,
normalizations, weights, candidate-count grids, and stopping rules are fixed
before held-out evaluation.

\subsubsection{From Cross-Shard Association to XDP}

The central step is to convert the inferred cross-shard
associations into a fixed-length multi-shard observation. For each trial, the
$t$ selected $\kappa$-bit blocks are concatenated into a reconstructed vector
of length $t\kappa$.

If an inferred entry is the correct target entry in shard $s$, its block
contains genuine randomized-response evidence about the challenge pair and
contributes the actual fixed-hash distance $d_s(q)$. If the inferred entry is
incorrect, the selected background block remains in the reconstruction. Such
a block is expected to behave like an unrelated random code, whose normalized
Hamming distance is approximately $1/2$. It therefore contributes no direct
target-dependent RR distance under the common-null model.

Ground truth is used only after inference to define
$
C_{\mathcal A,s}=1
$
when method $\mathcal A$ correctly identifies the target in shard $s$, and
zero otherwise. The direct target distance organized by the method is
$
U_{\mathcal A,q}
=
\sum_s C_{\mathcal A,s}d_s(q).
$
Thus, a successful association contributes its actual shard-specific
distance, while a false or missing association contributes no target
distance. False blocks are nevertheless retained when checking whether their
empirical normalized distance is close to the random-code reference $1/2$.

Across held-out trials, let
$\widehat w_{\mathcal A,q}(u)$ be the empirical distribution of
$U_{\mathcal A,q}$. We combine this distribution with the exact RR profile:
\[
\begin{aligned}
\delta_{\mathcal A,q}^{\mathrm{link}}(\varepsilon)
&=
\sum_u
\widehat w_{\mathcal A,q}(u)
\delta_u^{\mathrm{RR}}(\varepsilon),\\
\xi_{\mathcal A,q}^{\mathrm{link}}(\delta_0)
&=
\inf\left\{
\varepsilon:
\delta_{\mathcal A,q}^{\mathrm{link}}(\varepsilon)
\leq\delta_0
\right\}.
\end{aligned}
\]

This construction directly reflects the experimental objective. If only the
supplied reference entry is retained, the result contains approximately
single-shard XDP evidence. Each additional correct association adds its actual
fixed-hash distance and increases the reconstructed multi-shard XDP. If all
target entries are identified, the reconstructed view approaches the
idealized complete-view observation.
Importantly, the association success rate is not treated as a subsampling
probability and is not multiplied directly by a privacy budget. Instead, the
experiment first constructs the empirical distribution of the total recovered
distance and then applies the exact RR privacy-loss accountant. Consequently,
$\xi_{\mathcal A,q}^{\mathrm{link}}$ remains a distance-aware XDP quantity.

Because the candidate-selection procedure is data dependent,
$\xi^{\mathrm{link}}$ should be interpreted as the direct RR evidence organized
by the specified inference method, rather than a mechanism-level guarantee
against arbitrary algorithms.

\subsubsection{Configuration}

The experiments use $M=64$ shards, $t=16$ selected shards per item,
$\kappa=128$ bits per shard, and storage-side flip probability $p_D=0.08$.
At angular radius $d_\theta\leq0.05$, each valid configuration contains up to
five real challenge pairs, one known target entry, and $20\%$ auxiliary
background seeds. Every challenge pair uses 30 calibration trials and 300
held-out trials per world.

We evaluate index prefixes containing 5,000, 10,000, and 20,000 vectors.
MS MARCO and SIFT contain five eligible pairs in each valid configuration,
whereas TripClick at 10k contains two. The prior random-LSH regression value
$\xi=20.041$ for $p=0.21$, $\kappa=128$, $d_\theta=0.05$, and
$\delta_0=0.01$ is used only as an implementation regression test. It is not
substituted for the fixed-ISO-LSH accountant used in the evaluation.

\subsubsection{Results and Analysis}

At $d_\theta\leq0.05$ and $\delta_0=0.01$, the largest reconstructed XDP value
over MDS, Topology, GSGM, and Joint ranges from $31.586$ to $78.110$ across
the valid configurations. These values are above the corresponding
single-shard diagnostic range of $24.423$--$36.635$, showing that the evaluated
procedures can sometimes organize evidence from more than the supplied
reference shard. However, they remain far below the idealized complete-$16$-
shard privacy-loss thresholds of $200.272$--$410.314$. The practical inference
procedures therefore recover only part of the information granted to the
perfect-association adversary in the formal theorem.

For MS MARCO, the largest reconstructed value is $71.812$ at 5k using MDS,
$60.444$ at 10k using MDS, and $70.588$ at 20k using GSGM. These results show
that residual geometry is effective at smaller scales, while graph-signal
information becomes competitive at 20k. The decrease from 5k to 10k and the
subsequent increase at 20k also show that reconstructed leakage is not
determined by database size alone. For SIFT, the maximum increases from
$60.921$ at 5k to $69.692$ at 10k and $78.110$ at 20k. GSGM produces the
largest values at 5k and 10k, whereas Joint is strongest at 20k. SIFT therefore
exhibits the clearest increase in organized cross-shard evidence as the
evaluated prefix grows, although this observation is specific to the sampled
challenge pairs and graph instances. TripClick produces a maximum of $31.586$
at 10k using Joint, close to the single-shard range and substantially below the
MS MARCO and SIFT results. Because only two qualifying challenge pairs are
available, this value should not be interpreted as a general property of the
complete TripClick dataset.

No inference procedure dominates all datasets and scales. MDS is strongest in
two valid configurations, GSGM in three, and Joint in two. Taking the maximum
over the four procedures is therefore necessary, since reporting Joint alone
would underestimate the leakage achieved in several configurations. The
normalized distances of incorrectly selected blocks are generally close to
$1/2$, supporting the common-null assumption that failed associations behave
approximately like unrelated binary codes. The reconstructed XDP is also
non-monotone in database size because the eligible challenge pairs, background
candidate pools, selected shards, and independently constructed graphs change
across configurations.

LAION does not contribute a valid value because its 5k, 10k, and 20k runs
terminated on an invalid angular-distance value. TripClick at 5k contains no
eligible real pair, while its 20k run terminated on the same validation error.
These configurations are excluded rather than assigned extrapolated privacy
values.

Overall, the experiment supports two conclusions. First, cross-shard inference
can increase the XDP evidence available beyond a single known shard. Second,
the evaluated procedures remain substantially weaker than the perfect
association assumed by the complete-view theorem. The reported
$\xi^{\mathrm{link}}$ values therefore quantify the practical leakage achieved
by the specified inference procedures, while the mechanism-level theorem
remains the formal security statement.

\subsection{Experimental Parameters}
\label{app:experimental-parameters}

\noindent
For a fair ablation comparison, the No-Noise baseline and \name~use the same
multi-graph construction and HNSW parameters. Unless otherwise specified, we set
the number of graph shards to \(M=64\), and route each database point to
\(t=16\) selected shards. The HNSW construction parameter is fixed to
\(\texttt{efConstruction}=300\) across all four datasets. Compass is evaluated
using the configuration reported in the original Compass paper.

For candidate generation, No-Noise uses smaller candidate pools because it
searches over unperturbed binary embeddings. Its candidate pool ranges are
\(250\)--\(400\) on SIFT, \(200\)--\(300\) on LAION,
\(900\)--\(1050\) on TripClick, and \(1400\)--\(1600\) on MS MARCO. In contrast,
\name~uses larger candidate pools to compensate for query-side and storage-side
perturbation: \(1400\)--\(1600\) on SIFT, \(700\)--\(800\) on LAION,
\(2900\)--\(3200\) on TripClick, and \(6000\)--\(6500\) on MS MARCO.

\begin{figure*}[!t]
\centering
\noindent\fbox{%
\begin{minipage}{\dimexpr\textwidth-2\fboxsep-2\fboxrule\relax}
\vspace{1.5mm}

\noindent \textbf{Public Parameters:} dataset size $N$, vector dimension $d$,
number of graph shards $M$, routing multiplicity $t$, shard hash lengths
$\{\kappa_m\}_{m=1}^{M}$, storage-side flip probability $p_D$, query-side PRR
flip probability $p_{\mathrm{PRR}}$, query-side IRR flip probability
$p_{\mathrm{IRR}}$, per-shard candidate budget $P$, final result size $K$,
and HNSW parameters.\\
\textbf{Entities:} Client $\mathcal C$ and cloud server $\mathcal S$ in the
semi-honest setting.\\
\textbf{Input:} At setup, $\mathcal C$ holds a plaintext database
$\mathcal D=\{(\bm x_i,\mathrm{ID}_i)\}_{i=1}^{N}$, a fixed mapping-training
set $\mathcal D_{\mathrm{train}}$ that is public or disjoint from
$\mathcal D$, shard seeds $\{\sigma_m\}_{m=1}^{M}$, a routing seed
$\sigma_{\mathrm{route}}$, and shard encryption keys
$\{sk_m\}_{m=1}^{M}$. At query time, $\mathcal C$ holds a query embedding
$\bm q$ and a local memoization cache $\mathcal C_{\mathrm{memo}}$.\\
\textbf{Server State:} After setup, $\mathcal S$ stores the outsourced
multi-graph index $\mathbb G=\{\mathcal G^{(m)}\}_{m=1}^{M}$ and the attached
encrypted payloads.\\
\textbf{Output:} $\mathcal C$ outputs the identifiers and exact $\ell_2$
distances of the Top-$K$ reranked candidates.

\vspace{1.5mm}
\hrule
\vspace{1.5mm}

\noindent \textbf{[Initialization]}
\begin{enumerate}[label=\arabic*:, leftmargin=0.75cm, itemsep=0.8mm, topsep=1mm]
    \item For each shard $m$, $\mathcal C$ invokes $\mathsf{IsoHash}$ on
    $\mathcal D_{\mathrm{train}}$ using shard seed $\sigma_m$ to derive the
    shard-specific hash state
    $\mathsf{st}_{\mathrm{hash}}^{(m)}$. It obtains
    $\mathsf{st}_{\mathrm{hash}}
    =\{\mathsf{st}_{\mathrm{hash}}^{(m)}\}_{m=1}^{M}$.

    \item For each database point $(\bm x_i,\mathrm{ID}_i)$, $\mathcal C$
    obtains its routed shard set
    $\mathcal S_i\leftarrow
    \mathsf{Route}(\mathrm{ID}_i,M,t;\sigma_{\mathrm{route}})$.
    For each shard $m\in\mathcal S_i$, $\mathcal C$ computes
    $\bm v_{i,m}\leftarrow
    \mathsf{Hash}(\bm x_i;\mathsf{st}_{\mathrm{hash}}^{(m)})$,
    perturbs it as
    $\widetilde{\bm v}_{i,m}\leftarrow
    \mathsf{LSHRR}(\bm v_{i,m},p_D)$,
    assigns a fresh unique shard-local label $\ell_{i,m}$, samples fresh
    encryption randomness $\rho_{i,m}$, and computes
    $C_{i,m}\leftarrow
    \mathsf{Enc}_{sk_m}(\mathrm{ID}_i\Vert\bm x_i;\rho_{i,m})$.
    It uploads
    $(\ell_{i,m},\widetilde{\bm v}_{i,m},C_{i,m})$ to $\mathcal S$
    and stores $\{(m,\ell_{i,m})\}_{m\in\mathcal S_i}$ locally for deletion.

    \item For each shard $m$, $\mathcal S$ constructs an HNSW graph shard
    $\mathcal G^{(m)}$ over the received perturbed binary embeddings using
    Hamming distance. Each node is indexed by its shard-local label and
    stores the corresponding encrypted payload.
\end{enumerate}

\vspace{1.5mm}
\hrule
\vspace{1.5mm}

\noindent \textbf{[Query Generation]}
\begin{enumerate}[label=\arabic*:, leftmargin=0.75cm, itemsep=0.8mm,
topsep=1mm, start=4]
    \item Given a query embedding $\bm q$, $\mathcal C$ computes a
    shard-specific query binary embedding for each shard:
    $\bm v_q^{(m)}\leftarrow
    \mathsf{Hash}(\bm q;\mathsf{st}_{\mathrm{hash}}^{(m)})$
    for $m\in[M]$.

    \item For each shard $m$, $\mathcal C$ checks the local memoization cache.
    If a cached PRR representation exists, it retrieves
    $\overline{\bm v}_q^{(m)}
    \leftarrow\mathcal C_{\mathrm{memo}}[\bm q,m]$.
    Otherwise, it samples
    $\overline{\bm v}_q^{(m)}
    \leftarrow\mathsf{LSHRR}
    (\bm v_q^{(m)},p_{\mathrm{PRR}})$
    and stores it in $\mathcal C_{\mathrm{memo}}$.

    \item Before transmission, $\mathcal C$ applies instantaneous
    perturbation to each memoized representation:
    $\widetilde{\bm v}_q^{(m)}
    \leftarrow\mathsf{LSHRR}
    (\overline{\bm v}_q^{(m)},p_{\mathrm{IRR}})$.
    It sends the perturbed query binary embeddings
    $\{\widetilde{\bm v}_q^{(m)}\}_{m=1}^{M}$ to $\mathcal S$ as the
    retrieval request.
\end{enumerate}

\vspace{1.5mm}
\hrule
\vspace{1.5mm}

\noindent \textbf{[Server-Side Search]}
\begin{enumerate}[label=\arabic*:, leftmargin=0.75cm, itemsep=0.8mm,
topsep=1mm, start=7]
    \item For each shard $m$, $\mathcal S$ performs HNSW search over
    $\mathcal G^{(m)}$ using $\widetilde{\bm v}_q^{(m)}$ as the query binary
    embedding and obtains a shard-level candidate-label set
    $\mathrm{Cand}^{(m)}$ with budget $P$.

    \item $\mathcal S$ collects the encrypted payloads associated with the
    returned labels and sends the aggregated response
    $\mathrm{Cand}
    =\bigcup_{m=1}^{M}
    \{(m,C_{i,m}):\ell_{i,m}\in\mathrm{Cand}^{(m)}\}$
    to $\mathcal C$. Each ciphertext is accompanied by its shard index for
    selecting the corresponding decryption key.
\end{enumerate}

\vspace{1.5mm}
\hrule
\vspace{1.5mm}

\noindent \textbf{[Local Decryption and Exact Reranking]}
\begin{enumerate}[label=\arabic*:, leftmargin=0.75cm, itemsep=0.8mm,
topsep=1mm, start=9]
    \item For each pair $(m,C_{i,m})$ in the returned response, $\mathcal C$
    decrypts the payload using the corresponding shard key:
    $(\mathrm{ID}_i,\bm x_i)
    \leftarrow\mathsf{Dec}_{sk_m}(C_{i,m})$.

    \item $\mathcal C$ removes duplicate candidates caused by multi-graph
    routing, computes the exact Euclidean distance between $\bm q$ and each
    recovered embedding $\bm x_i$, and reranks the candidates according to
    their exact distances.

    \item Finally, $\mathcal C$ outputs the Top-$K$ reranked candidates
    together with their identifiers and exact distances.
\end{enumerate}

\vspace{1mm}
\end{minipage}%
}
\caption{The complete protocol of \name, including initialization, PRR/IRR
query generation, server-side HNSW search, and client-side decryption and
reranking.}
\label{protocolflow}
\end{figure*}

\section{Correctness of the MESS}
\subsection{Distribution of the Shard-Level Rank}
\label{app:rank-derivation}

\noindent
In this appendix, we derive the mean, variance, and asymptotic normal
approximation of the shard-level rank used in the correctness analysis. We
also explain its connection to the empirical Recall and MRR models.

\begin{proposition}[Poisson Binomial Model of Shard-Level Rank]
\label{prop:appendix-shard-rank}
\normalfont
Consider one queried shard \(s\). Let \(Rank_s\) denote the absolute rank of the target
item after storage-side and query-side perturbation. Let \(N(d)\) be the number
of non-target database items whose clean Hamming distance to the query is
\(d\), where \(d\in\{0,\ldots,\kappa\}\). Since every item is routed uniformly
to \(t\) of the \(M\) shards, the expected number of distance-\(d\) items in
one shard is
\(\mathbb E[N_s(d)]=(t/M)N(d)\).

For a background item at distance \(d\), let
\(I_{i,d}=\mathbf 1\{\widetilde D_{i,d}<\widetilde D_{d_s^*}\}\) be its
overtaking indicator, where \(d_s^*\) is the clean distance of the target.
The effective query flip probability is
\(p_q=p_{\mathrm{PRR}}(1-p_{\mathrm{IRR}})
+p_{\mathrm{IRR}}(1-p_{\mathrm{PRR}})\), and the compound flip probability is
\(p_{\oplus}=p_D(1-p_q)+p_q(1-p_D)\).
The Gaussian approximation gives
\[
P_{\mathrm{overtake}}(d)
\approx
\Phi\!\left(
\frac{(d_s^*-d)(1-2p_{\oplus})}
{\sqrt{2\kappa p_{\oplus}(1-p_{\oplus})}}
\right),
\]
consistent with Proposition~\ref{prop:candidate-correctness}.

Under the expected-occupancy and independent-overtaking approximations, the
overtaking count is Poisson binomial with
\[
\begin{aligned}
\mu_{\mathrm{shard}}
&=
\frac{t}{M}\sum_{d=0}^{\kappa}
N(d)P_{\mathrm{overtake}}(d),\\
\sigma_{\mathrm{shard}}^2
&=
\frac{t}{M}\sum_{d=0}^{\kappa}
N(d)P_{\mathrm{overtake}}(d)
\bigl(1-P_{\mathrm{overtake}}(d)\bigr).
\end{aligned}
\]
When sufficiently many background items contribute and no individual term
dominates the variance,
\(Rank_s\approx
\mathcal N(1+\mu_{\mathrm{shard}},\sigma_{\mathrm{shard}}^2)\).
\end{proposition}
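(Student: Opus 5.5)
We will follow the structure of the proof sketch of Proposition~\ref{prop:candidate-correctness}, adding the derivation of the effective flip probabilities and the shard-occupancy step.

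\textbf{Step 1: compose the flips.} For a single coordinate, the query bit goes through two independent randomized-response stages, PRR and IRR. The composed bit is flipped exactly when one stage flips and the other does not. This gives \(p_q=p_{\mathrm{PRR}}(1-p_{\mathrm{IRR}})+p_{\mathrm{IRR}}(1-p_{\mathrm{PRR}})\). The stored bit is flipped independently with probability \(p_D\). The bit of \(\widetilde{\bm v}_q\oplus\widetilde{\bm v}_{i}\) therefore differs from the clean XOR bit with probability \(p_{\oplus}\), by the same parity argument. Independence across coordinates and shards is part of the mechanism as specified.

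\textbf{Step 2: perturbed distance and overtaking.} Take an item at clean distance \(d\). Its perturbed distance is \(\widetilde D_d=X_d+Y_d\), with \(X_d\sim\mathrm{Bin}(d,1-p_{\oplus})\) and \(Y_d\sim\mathrm{Bin}(\kappa-d,p_{\oplus})\). Hence \(\mathbb E\widetilde D_d=\kappa p_{\oplus}+d(1-2p_{\oplus})\) and \(\mathrm{Var}\,\widetilde D_d=\kappa p_{\oplus}(1-p_{\oplus})\).

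Overtaking is the event \(\widetilde D_d-\widetilde D_{d_s^*}<0\). We treat the background and target distances as independent. Strictly, they share the query noise, and this is the approximation we will flag. Under that treatment the difference has mean \((d-d_s^*)(1-2p_{\oplus})\) and variance \(2\kappa p_{\oplus}(1-p_{\oplus})\). The de Moivre--Laplace CLT, applied as \(\kappa\) grows, then gives the stated \(\Phi\) expression. The \(\Phi\) argument has the sign \((d_s^*-d)\), because \(\Pr[Z<0]=\Phi(-\text{mean}/\text{sd})\).

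\textbf{Step 3: rank and occupancy.} Write \(Rank_s=1+\sum_d\sum_{i\le N_s(d)}I_{i,d}\). The count \(N_s(d)\) is a sum over the \(N(d)\) items of independent routing indicators. Each item lands in shard \(s\) with marginal probability \(t/M\), because \(\mathsf{Route}\) picks a uniform \(t\)-subset. Linearity therefore gives \(\mathbb E[N_s(d)]=(t/M)N(d)\).

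We then replace \(N_s(d)\) by its expectation (the expected-occupancy approximation). Assuming the \(I_{i,d}\) are independent, the count is Poisson binomial. Its mean and variance are sums of \(P\) and \(P(1-P)\), each scaled by \((t/M)N(d)\), which gives \(\mu_{\mathrm{shard}}\) and \(\sigma^2_{\mathrm{shard}}\).

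\textbf{Step 4: normal limit.} The indicators are independent but not identically distributed and bounded by \(1\). The Lindeberg--Feller CLT (equivalently, Lyapunov with third moments bounded by the variance) applies once \(\sigma^2_{\mathrm{shard}}\to\infty\). This is precisely the condition \emph{sufficiently many items contribute and no term dominates}. It yields \(Rank_s\approx\mathcal N(1+\mu_{\mathrm{shard}},\sigma^2_{\mathrm{shard}})\).

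\textbf{Main obstacle.} The difficult step is justifying the independence approximations. The overtaking indicators are positively correlated, because all of them compare against the same random \(\widetilde D_{d_s^*}\), and all of them share the query noise. Replacing the random \(N_s(d)\) by its mean also ignores the routing variance.

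We would handle this by conditioning. First condition on the target's perturbed distance and on the query perturbation. Conditionally, the indicators are genuinely independent, so the conditional Poisson-binomial CLT holds exactly in the limit. The routing is independent of everything else, so we also condition on the occupancy. The stated formulas then come from replacing the conditioning variables by their typical values.

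The residual variance from the shared noise contributes a term of order \(\kappa p_{\oplus}(1-p_{\oplus})\) times the squared derivative of \(\mu\). We would argue this term is lower order when \(1-2p_{\oplus}\) is bounded away from zero relative to the spread of the \(N(d)\) mass. We will state this explicitly as an approximation rather than prove it sharply, matching the \(\approx\) in the proposition.
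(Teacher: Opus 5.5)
Your Steps~1--4 are essentially the paper's own proof: parity composition for \(p_q\) and \(p_{\oplus}\), the split \(\widetilde D_d=X_d+Y_d\), the independence-plus-Gaussian approximation for the pairwise difference, \(Rank_s=1+\sum_d\sum_i I_{i,d}\) with expected occupancy \((t/M)N(d)\), and then Poisson-binomial moments with a normal limit. Since the statement takes expected occupancy and independent overtaking as hypotheses, the paper does nothing more, and you do not need to either.

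You should drop or correct your closing attempt to justify those hypotheses. Conditioning on the target's perturbed distance \(T\) and plugging in a typical \(T\) gives the denominator \(\sqrt{\kappa p_{\oplus}(1-p_{\oplus})}\), not the stated \(\sqrt{2\kappa p_{\oplus}(1-p_{\oplus})}\); the stated \(P_{\mathrm{overtake}}(d)\) \emph{averages} over \(T\) rather than fixing it. Moreover, the between-condition term \(\mathrm{Var}(\mu(T))\approx\mu'(T)^2\kappa p_{\oplus}(1-p_{\oplus})\) that you identify, with \(\mu(T)\) the conditional mean overtaking count, scales quadratically in the number of background items, while \(\sigma_{\mathrm{shard}}^2\) scales only linearly. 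So once many items are expected to overtake, this term dominates rather than being lower order, and the independent-overtaking model has to be assumed rather than recovered by this conditioning argument.
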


\noindent\textit{Proof.}
Since every item is routed uniformly to \(t\) of the \(M\) shards, it appears
in any fixed shard with probability \(t/M\). Thus, a distance layer containing
\(N(d)\) background items contributes an expected
\((t/M)N(d)\) items to that shard.

For a background item at clean distance \(d\), the storage-side and query-side
perturbations form an effective binary symmetric channel with crossover
probability \(p_{\oplus}\). Among the \(d\) initially different coordinates,
the remaining mismatch count follows
\(X_d\sim\operatorname{Binomial}(d,1-p_{\oplus})\). Among the other
\(\kappa-d\) coordinates, newly created mismatches follow
\(Y_d\sim\operatorname{Binomial}(\kappa-d,p_{\oplus})\).
Therefore, \(\widetilde D_d=X_d+Y_d\), with mean
\(\kappa p_{\oplus}+d(1-2p_{\oplus})\) and variance
\(\kappa p_{\oplus}(1-p_{\oplus})\).

Under the standard independence approximation, the difference between the
perturbed distances of the background item and the target has mean
\((d-d_s^*)(1-2p_{\oplus})\) and variance
\(2\kappa p_{\oplus}(1-p_{\oplus})\). Applying the Gaussian approximation
gives \(P_{\mathrm{overtake}}(d)\). Hence,
\(I_{i,d}\sim
\operatorname{Bernoulli}(P_{\mathrm{overtake}}(d))\), and the target rank is
\(Rank_s=1+\sum_d\sum_i I_{i,d}\).

Because the Bernoulli parameters depend on \(d\), the overtaking count is
Poisson binomial. Linearity of expectation gives
\(\mathbb E[Rank_s]=1+(t/M)\sum_dN(d)P_{\mathrm{overtake}}(d)\).
Under the independence approximation, its variance is
\((t/M)\sum_dN(d)P_{\mathrm{overtake}}(d)
(1-P_{\mathrm{overtake}}(d))\), giving
\(\mu_{\mathrm{shard}}\) and \(\sigma_{\mathrm{shard}}^2\).
The central limit approximation for Poisson binomial variables then yields
\(Rank_s\approx
\mathcal N(1+\mu_{\mathrm{shard}},\sigma_{\mathrm{shard}}^2)\).
\hfill\(\square\)

The factor \(t/M\) above models expected shard occupancy for correctness. It
is not treated as a subsampling probability in the complete-server-view
privacy analysis.

\smallskip\noindent
\textbf{Candidate Recovery.}
For a per-shard candidate budget \(P\), the target-inclusion probability is
approximately
\(P_{\mathrm{hit},s}(P)=
\Phi((P+\frac12-(1+\mu_{\mathrm{shard}}))
/\sigma_{\mathrm{shard}})\), where \(1/2\) is the continuity correction.
If the experiment specifies a total candidate pool \(B\) distributed evenly
across \(M\) shards, the per-shard budget is \(\lfloor B/M\rfloor\); we write
\(P_{\mathrm{hit},s}(B)=
P_{\mathrm{hit},s}(\lfloor B/M\rfloor)\).

For an item stored in the shard set \(\mathcal S_i\), conditional independence
across shards gives
\(P_{\mathrm{hit},i}(B)\approx
1-\prod_{s\in\mathcal S_i}(1-P_{\mathrm{hit},s}(B))\).
Thus, the item is missed only when it is not recovered from any shard
containing it.

The implementation returns candidates in multiples of \(M\) before duplicate
removal. Therefore, the realized aggregate pool corresponding to configured
budget \(B\) is \(B_{\mathrm{eff}}(B)=M\lfloor B/M\rfloor+1/2\), including
the continuity correction. The validation plots use this aggregate threshold
to fit an effective normal-CDF model to the observed end-to-end metric. The
parameters \(\mu_R\) and \(\sigma_R\) reported in the plots are fitted
aggregate-rank parameters; they need not equal the analytical shard-level
moments above.

\smallskip\noindent
\textbf{Recall Model.}
Absolute rank alone does not fully determine empirical Recall. The target must
also be reached by HNSW traversal, returned in a shard-level candidate set,
survive aggregation and duplicate removal, and enter exact reranking.
Therefore, empirical Recall may saturate below one.

We introduce a coverage coefficient
\(\gamma_{\mathrm{cov}}\in[0,1]\) to capture these implementation-level
effects. The fitted Recall curve is
\[
R(B)
=
\gamma_{\mathrm{cov}}
\Phi\!\left(
\frac{B_{\mathrm{eff}}(B)-\mu_R}{\sigma_R}
\right).
\]
Here, \(\gamma_{\mathrm{cov}}\) captures implementation-level candidate
coverage and the empirical saturation level. It is not a parameter of the
analytical shard-level rank distribution. The Recall panels of Figure~\ref{fig:rank-metric-fits} compare this model
with the measurements.

\smallskip\noindent
\textbf{MRR Model.}
For MS MARCO and TripClick, MRR@10 depends not only on whether a relevant
document is recovered, but also on its final position after exact reranking.
For a query \(q\), let \(r_q\) be the rank of its first relevant document.
Its reciprocal-rank contribution is \(1/r_q\) when \(r_q\leq10\) and zero
otherwise. The dataset-level metric is
\(\mathrm{MRR@10}=|\mathcal Q|^{-1}
\sum_{q\in\mathcal Q}\mathrm{RR@10}(q)\).

Let \(H_q(B)\) be the probability that at least one relevant document is
recovered, and let \(c_q(B)\) denote the conditional expectation of
\((1/r_q)\mathbf 1\{r_q\leq10\}\) given this recovery event. Then
\[
\mathbb E[\mathrm{RR@10}(q)]
=
H_q(B)c_q(B).
\]
Approximating the conditional ranking contribution by an empirical coefficient
\(\gamma_{\mathrm{mrr}}\in[0,1]\) gives the fitted curve
\[
\mathrm{MRR@10}(B)
=
\gamma_{\mathrm{mrr}}
\Phi\!\left(
\frac{B_{\mathrm{eff}}(B)-\mu_R}{\sigma_R}
\right).
\]
The coefficient \(\gamma_{\mathrm{mrr}}\) captures empirical effects from
candidate coverage, aggregation, duplicate removal, exact reranking, and the
Top-10 evaluation window. Like \(\gamma_{\mathrm{cov}}\), it is not a
parameter of the rank distribution or a formal correctness guarantee.
The MRR panels of Figure~\ref{fig:rank-metric-fits} compare this model
with the measurements. For every dataset and perturbation configuration,
\(\gamma\), \(\mu_R\), and \(\sigma_R\) are jointly estimated from all
measured points by nonlinear least squares.

The derivation relies on Gaussian distance approximation, independent
overtaking events, and conditional independence across selected shards. It
therefore models the observed retrieval trends rather than exactly
characterizing HNSW execution.

\begin{figure*}[p]
\centering
\captionsetup[subfigure]{font=scriptsize,skip=2pt}
\definecolor{ObsBlue}{RGB}{64,105,166}
\definecolor{FitRed}{RGB}{196,78,82}

\begin{subfigure}[t]{0.45\textwidth}
\centering
\pgfmathsetmacro{\gammaSiftA}{0.908057}
\pgfmathsetmacro{\muSiftA}{113.488662}
\pgfmathsetmacro{\sigmaSiftA}{499.221899}

\begin{tikzpicture}
\begin{axis}[
    width=\linewidth,
    height=0.5625\linewidth,
    xlabel={Candidate Pool},
    ylabel={Recall},
    xmin=0,
    xmax=2600,
    ymin=0.28,
    ymax=0.96,
    grid=major,
    grid style={gray!25, line width=0.2pt},
    axis lines=box,
    tick label style={font=\scriptsize},
    label style={font=\scriptsize},
    legend to name=rankfitlegend,
    legend columns=2,
    legend style={
        font=\scriptsize,
        draw=none,
        column sep=8pt
    }
]
\addplot[
    only marks,
    mark=*,
    mark size=1.15pt,
    color=ObsBlue
]
coordinates {
(50,0.32100000)
    (100,0.32100000)
    (150,0.45500000)
    (200,0.54400000)
    (250,0.54400000)
    (300,0.60100000)
    (350,0.63800000)
    (400,0.67600000)
    (450,0.70600000)
    (500,0.70600000)
    (550,0.73400000)
    (600,0.75700000)
    (650,0.76400000)
    (700,0.76400000)
    (750,0.78700000)
    (800,0.80600000)
    (850,0.81900000)
    (900,0.83100000)
    (950,0.83100000)
    (1000,0.84300000)
    (1050,0.84900000)
    (1100,0.85900000)
    (1150,0.85900000)
    (1200,0.86500000)
    (1250,0.87400000)
    (1300,0.87800000)
    (1350,0.88700000)
    (1400,0.88700000)
    (1450,0.89400000)
    (1500,0.89600000)
    (1550,0.89700000)
    (1600,0.90000000)
    (1650,0.90000000)
    (1700,0.90300000)
    (1750,0.90500000)
    (1800,0.90700000)
    (1850,0.90700000)
    (1900,0.91300000)
    (1950,0.91800000)
    (2000,0.91800000)
    (2050,0.92200000)
    (2100,0.92200000)
    (2150,0.92400000)
    (2200,0.92900000)
    (2250,0.93400000)
    (2300,0.93400000)
    (2350,0.93400000)
    (2400,0.93500000)
    (2450,0.93800000)
    (2500,0.93900000)
};
\addlegendentry{Measured}

\addplot[
    color=FitRed,
    line width=0.9pt,
    domain=0:2500,
    samples=350
]
{\gammaSiftA*0.5*
 (1+erf((x-\muSiftA)/(\sigmaSiftA*sqrt(2))))};
\addlegendentry{Normal-CDF fit}

\end{axis}
\end{tikzpicture}
\caption{SIFT, $(p_D,p_q)=(0.08,0.08)$.}
\label{fig:rank-fit-sift-008-008}
\end{subfigure}
\hfill
\begin{subfigure}[t]{0.45\textwidth}
\centering
\pgfmathsetmacro{\gammaSiftB}{0.893925}
\pgfmathsetmacro{\muSiftB}{233.578403}
\pgfmathsetmacro{\sigmaSiftB}{745.936176}

\begin{tikzpicture}
\begin{axis}[
    width=\linewidth,
    height=0.5625\linewidth,
    xlabel={Candidate Pool},
    ylabel={Recall},
    xmin=0,
    xmax=4100,
    ymin=0.20,
    ymax=0.96,
    grid=major,
    grid style={gray!25, line width=0.2pt},
    axis lines=box,
    tick label style={font=\scriptsize},
    label style={font=\scriptsize}
]
\addplot[
    only marks,
    mark=*,
    mark size=1.15pt,
    color=ObsBlue
]
coordinates {
(50,0.24800000)
    (100,0.24800000)
    (150,0.35800000)
    (200,0.43200000)
    (250,0.43200000)
    (300,0.47800000)
    (350,0.52400000)
    (400,0.55500000)
    (450,0.58700000)
    (500,0.58700000)
    (550,0.61600000)
    (600,0.64000000)
    (650,0.65800000)
    (700,0.65800000)
    (750,0.68000000)
    (800,0.70200000)
    (850,0.72100000)
    (900,0.73900000)
    (950,0.73900000)
    (1000,0.75700000)
    (1050,0.76600000)
    (1100,0.77200000)
    (1150,0.77200000)
    (1200,0.78300000)
    (1250,0.79500000)
    (1300,0.80300000)
    (1350,0.81000000)
    (1400,0.81000000)
    (1450,0.81700000)
    (1500,0.81800000)
    (1550,0.82100000)
    (1600,0.82700000)
    (1650,0.82700000)
    (1700,0.83400000)
    (1750,0.83600000)
    (1800,0.83800000)
    (1850,0.83800000)
    (1900,0.84700000)
    (1950,0.85400000)
    (2000,0.85900000)
    (2050,0.85900000)
    (2100,0.85900000)
    (2150,0.86600000)
    (2200,0.87000000)
    (2250,0.87000000)
    (2300,0.87000000)
    (2350,0.87300000)
    (2400,0.87600000)
    (2450,0.88000000)
    (2500,0.88600000)
    (2550,0.88600000)
    (2600,0.89000000)
    (2650,0.89200000)
    (2700,0.89800000)
    (2750,0.89800000)
    (2800,0.90100000)
    (2850,0.90300000)
    (2900,0.90400000)
    (2950,0.90600000)
    (3000,0.90600000)
    (3050,0.90700000)
    (3100,0.90800000)
    (3150,0.91000000)
    (3200,0.91000000)
    (3250,0.91000000)
    (3300,0.91000000)
    (3350,0.91400000)
    (3400,0.91500000)
    (3450,0.91500000)
    (3500,0.91900000)
    (3550,0.92200000)
    (3600,0.92200000)
    (3650,0.92100000)
    (3700,0.92100000)
    (3750,0.92300000)
    (3800,0.92500000)
    (3850,0.92500000)
    (3900,0.92500000)
    (3950,0.92900000)
    (4000,0.93200000)
};

\addplot[
    color=FitRed,
    line width=0.9pt,
    domain=0:4000,
    samples=450
]
{\gammaSiftB*0.5*
 (1+erf((x-\muSiftB)/(\sigmaSiftB*sqrt(2))))};

\end{axis}
\end{tikzpicture}
\caption{SIFT, $(p_D,p_q)=(0.12,0.08)$.}
\label{fig:rank-fit-sift-012-008}
\end{subfigure}

\par\vspace{-1mm}

\begin{subfigure}[t]{0.45\textwidth}
\centering
\pgfmathsetmacro{\gammaLaionA}{0.916696}
\pgfmathsetmacro{\muLaionA}{13.256833}
\pgfmathsetmacro{\sigmaLaionA}{306.073697}

\begin{tikzpicture}
\begin{axis}[
    width=\linewidth,
    height=0.5625\linewidth,
    xlabel={Candidate Pool},
    ylabel={Recall},
    xmin=0,
    xmax=2100,
    ymin=0.40,
    ymax=0.96,
    grid=major,
    grid style={gray!25, line width=0.2pt},
    axis lines=box,
    tick label style={font=\scriptsize},
    label style={font=\scriptsize}
]
\addplot[
    only marks,
    mark=*,
    mark size=1.15pt,
    color=ObsBlue
]
coordinates {
(50,0.43500000)
    (100,0.43500000)
    (150,0.62400000)
    (200,0.70200000)
    (250,0.70200000)
    (300,0.74800000)
    (350,0.78700000)
    (400,0.81100000)
    (450,0.82900000)
    (500,0.82900000)
    (550,0.84500000)
    (600,0.85900000)
    (650,0.87100000)
    (700,0.87100000)
    (750,0.88100000)
    (800,0.88800000)
    (850,0.89500000)
    (900,0.89700000)
    (950,0.89700000)
    (1000,0.90400000)
    (1050,0.91100000)
    (1100,0.91600000)
    (1150,0.91600000)
    (1200,0.91900000)
    (1250,0.92100000)
    (1300,0.92100000)
    (1350,0.92300000)
    (1400,0.92300000)
    (1450,0.92300000)
    (1500,0.92400000)
    (1550,0.92500000)
    (1600,0.92700000)
    (1650,0.92700000)
    (1700,0.93000000)
    (1750,0.93300000)
    (1800,0.93400000)
    (1850,0.93400000)
    (1900,0.93600000)
    (1950,0.93800000)
    (2000,0.93900000)
};

\addplot[
    color=FitRed,
    line width=0.9pt,
    domain=0:2000,
    samples=350
]
{\gammaLaionA*0.5*
 (1+erf((x-\muLaionA)/(\sigmaLaionA*sqrt(2))))};

\end{axis}
\end{tikzpicture}
\caption{LAION, $(p_D,p_q)=(0.08,0.08)$.}
\label{fig:rank-fit-laion-008-008}
\end{subfigure}
\hfill
\begin{subfigure}[t]{0.45\textwidth}
\centering
\pgfmathsetmacro{\gammaLaionB}{0.924645}
\pgfmathsetmacro{\muLaionB}{-5.027759}
\pgfmathsetmacro{\sigmaLaionB}{445.006330}

\begin{tikzpicture}
\begin{axis}[
    width=\linewidth,
    height=0.5625\linewidth,
    xlabel={Candidate Pool},
    ylabel={Recall},
    xmin=0,
    xmax=3100,
    ymin=0.38,
    ymax=0.98,
    grid=major,
    grid style={gray!25, line width=0.2pt},
    axis lines=box,
    tick label style={font=\scriptsize},
    label style={font=\scriptsize}
]
\addplot[
    only marks,
    mark=*,
    mark size=1.15pt,
    color=ObsBlue
]
coordinates {
(50,0.42000000)
    (100,0.42000000)
    (150,0.59000000)
    (200,0.66500000)
    (250,0.66500000)
    (300,0.71400000)
    (350,0.75200000)
    (400,0.77800000)
    (450,0.79500000)
    (500,0.79500000)
    (550,0.81200000)
    (600,0.82200000)
    (650,0.83500000)
    (700,0.83500000)
    (750,0.84300000)
    (800,0.85300000)
    (850,0.86300000)
    (900,0.87600000)
    (950,0.87600000)
    (1000,0.88400000)
    (1050,0.88800000)
    (1100,0.89200000)
    (1150,0.89200000)
    (1200,0.89400000)
    (1250,0.89700000)
    (1300,0.90000000)
    (1350,0.90600000)
    (1400,0.90600000)
    (1450,0.90900000)
    (1500,0.91000000)
    (1550,0.91300000)
    (1600,0.91700000)
    (1650,0.91700000)
    (1700,0.92000000)
    (1750,0.92200000)
    (1800,0.92300000)
    (1850,0.92300000)
    (1900,0.92400000)
    (1950,0.92800000)
    (2000,0.93200000)
    (2050,0.93300000)
    (2100,0.93300000)
    (2150,0.93500000)
    (2200,0.93700000)
    (2250,0.93800000)
    (2300,0.93800000)
    (2350,0.94000000)
    (2400,0.94000000)
    (2450,0.94000000)
    (2500,0.94200000)
    (2550,0.94200000)
    (2600,0.94400000)
    (2650,0.94400000)
    (2700,0.94800000)
    (2750,0.94800000)
    (2800,0.94800000)
    (2850,0.95200000)
    (2900,0.95300000)
    (2950,0.95400000)
    (3000,0.95400000)
};

\addplot[
    color=FitRed,
    line width=0.9pt,
    domain=0:3000,
    samples=400
]
{\gammaLaionB*0.5*
 (1+erf((x-\muLaionB)/(\sigmaLaionB*sqrt(2))))};

\end{axis}
\end{tikzpicture}
\caption{LAION, $(p_D,p_q)=(0.12,0.08)$.}
\label{fig:rank-fit-laion-012-008}
\end{subfigure}

\par\vspace{-1mm}

\begin{subfigure}[t]{0.45\textwidth}
\centering
\pgfmathsetmacro{\gammaMarcoA}{0.363058}
\pgfmathsetmacro{\muMarcoA}{-448.038615}
\pgfmathsetmacro{\sigmaMarcoA}{3472.259966}

\begin{tikzpicture}
\begin{axis}[
    width=\linewidth,
    height=0.5625\linewidth,
    xlabel={Candidate Pool},
    ylabel={MRR@10},
    xmin=0,
    xmax=15500,
    ymin=0.18,
    ymax=0.39,
    grid=major,
    grid style={gray!25, line width=0.2pt},
    axis lines=box,
    tick label style={font=\scriptsize},
    label style={font=\scriptsize}
]
\addplot[
    only marks,
    mark=*,
    mark size=1.15pt,
    color=ObsBlue
]
coordinates {
(500,0.206314)
    (1000,0.240152)
    (2000,0.288121)
    (3000,0.312823)
    (4000,0.324926)
    (5000,0.337466)
    (6000,0.341349)
    (7000,0.348931)
    (8000,0.351484)
    (8500,0.353013)
    (9000,0.355467)
    (9500,0.360504)
    (10000,0.363586)
    (10500,0.359360)
    (11000,0.360236)
    (11500,0.363402)
    (12000,0.364476)
    (12500,0.365778)
    (13000,0.369288)
    (13500,0.370305)
    (14000,0.368200)
    (14500,0.370747)
    (15000,0.373648)
};

\addplot[
    color=FitRed,
    line width=0.9pt,
    domain=0:15000,
    samples=550
]
{\gammaMarcoA*0.5*
 (1+erf((x-\muMarcoA)/(\sigmaMarcoA*sqrt(2))))};

\end{axis}
\end{tikzpicture}
\caption{MS MARCO, $(p_D,p_q)=(0.08,0.08)$.}
\label{fig:rank-fit-msmarco-008-008}
\end{subfigure}
\hfill
\begin{subfigure}[t]{0.45\textwidth}
\centering
\pgfmathsetmacro{\gammaMarcoB}{0.360700}
\pgfmathsetmacro{\muMarcoB}{312.804425}
\pgfmathsetmacro{\sigmaMarcoB}{2406.874878}

\begin{tikzpicture}
\begin{axis}[
    width=\linewidth,
    height=0.5625\linewidth,
    xlabel={Candidate Pool},
    ylabel={MRR@10},
    xmin=0,
    xmax=15500,
    ymin=0.14,
    ymax=0.39,
    grid=major,
    grid style={gray!25, line width=0.2pt},
    axis lines=box,
    tick label style={font=\scriptsize},
    label style={font=\scriptsize}
]
\addplot[
    only marks,
    mark=*,
    mark size=1.15pt,
    color=ObsBlue
]
coordinates {
(500,0.162141)
    (1000,0.240152)
    (2000,0.288121)
    (3000,0.312823)
    (4000,0.324926)
    (5000,0.337466)
    (6000,0.341349)
    (7000,0.348931)
    (8000,0.351484)
    (8500,0.353013)
    (9000,0.355467)
    (9500,0.360504)
    (10000,0.363586)
    (10500,0.359360)
    (11000,0.360236)
    (11500,0.363402)
    (12000,0.364476)
    (12500,0.365778)
    (13000,0.369288)
    (13500,0.370305)
    (14000,0.368200)
    (14500,0.370747)
    (15000,0.373648)
};

\addplot[
    color=FitRed,
    line width=0.9pt,
    domain=0:15000,
    samples=550
]
{\gammaMarcoB*0.5*
 (1+erf((x-\muMarcoB)/(\sigmaMarcoB*sqrt(2))))};

\end{axis}
\end{tikzpicture}
\caption{MS MARCO, $(p_D,p_q)=(0.08,0.12)$.}
\label{fig:rank-fit-msmarco-008-012}
\end{subfigure}

\par\vspace{-1mm}

\begin{subfigure}[t]{0.45\textwidth}
\centering
\pgfmathsetmacro{\gammaTripA}{0.292444}
\pgfmathsetmacro{\muTripA}{-584.845263}
\pgfmathsetmacro{\sigmaTripA}{1497.588938}

\begin{tikzpicture}
\begin{axis}[
    width=\linewidth,
    height=0.5625\linewidth,
    xlabel={Candidate Pool},
    ylabel={MRR@10},
    xmin=0,
    xmax=8200,
    ymin=0.20,
    ymax=0.31,
    grid=major,
    grid style={gray!25, line width=0.2pt},
    axis lines=box,
    tick label style={font=\scriptsize},
    label style={font=\scriptsize}
]
\addplot[
    only marks,
    mark=*,
    mark size=1.15pt,
    color=ObsBlue
]
coordinates {
(500,0.216611)
    (1000,0.256473)
    (1500,0.263782)
    (2000,0.285668)
    (2500,0.277446)
    (3000,0.286121)
    (3500,0.292065)
    (4000,0.287024)
    (4500,0.290556)
    (5000,0.293597)
    (5500,0.291062)
    (6000,0.293692)
    (6500,0.294386)
    (7000,0.295844)
    (7500,0.293397)
    (8000,0.298063)
};

\addplot[
    color=FitRed,
    line width=0.9pt,
    domain=0:8000,
    samples=450
]
{\gammaTripA*0.5*
 (1+erf((x-\muTripA)/(\sigmaTripA*sqrt(2))))};

\end{axis}
\end{tikzpicture}
\caption{TripClick, $(p_D,p_q)=(0.08,0.08)$.}
\label{fig:rank-fit-tripclick-008-008}
\end{subfigure}
\hfill
\begin{subfigure}[t]{0.45\textwidth}
\centering
\pgfmathsetmacro{\gammaTripB}{0.290457}
\pgfmathsetmacro{\muTripB}{-264.952740}
\pgfmathsetmacro{\sigmaTripB}{1571.091781}

\begin{tikzpicture}
\begin{axis}[
    width=\linewidth,
    height=0.5625\linewidth,
    xlabel={Candidate Pool},
    ylabel={MRR@10},
    xmin=0,
    xmax=8200,
    ymin=0.16,
    ymax=0.31,
    grid=major,
    grid style={gray!25, line width=0.2pt},
    axis lines=box,
    tick label style={font=\scriptsize},
    label style={font=\scriptsize}
]
\addplot[
    only marks,
    mark=*,
    mark size=1.15pt,
    color=ObsBlue
]
coordinates {
(500,0.187442)
    (1000,0.239034)
    (1500,0.255096)
    (2000,0.268584)
    (2500,0.268710)
    (3000,0.279821)
    (3500,0.286313)
    (4000,0.286638)
    (4500,0.286781)
    (5000,0.288990)
    (5500,0.293226)
    (6000,0.293044)
    (6500,0.294028)
    (7000,0.291057)
    (7500,0.295190)
    (8000,0.292808)
};

\addplot[
    color=FitRed,
    line width=0.9pt,
    domain=0:8000,
    samples=450
]
{\gammaTripB*0.5*
 (1+erf((x-\muTripB)/(\sigmaTripB*sqrt(2))))};

\end{axis}
\end{tikzpicture}
\caption{TripClick, $(p_D,p_q)=(0.08,0.12)$.}
\label{fig:rank-fit-tripclick-008-012}
\end{subfigure}

\par\vspace{-1mm}
\ref{rankfitlegend}

\caption{Measured retrieval quality and fitted normal-CDF curves. Markers show
measurements and lines show nonlinear least-squares fits over all measured
points using $B_{\mathrm{eff}}(B)=64\lfloor B/64\rfloor+1/2$. Smooth
continuous envelopes are drawn for readability.}
\label{fig:rank-metric-fits}
\end{figure*}

\clearpage
\begin{figure*}[p]
\centering

\begin{minipage}[t]{0.485\textwidth}
\vspace{0pt}
\begin{algorithm}[H]
\caption{MDS-Based Cross-Shard Linkage}
\label{alg:mds-linkage}
\begin{algorithmic}[1]

\Require Server-visible shards
$\mathsf V=\{\mathcal G^{(s)}=(V_s,E_s)\}_{s=1}^{M}$,
where $V_s$ and $E_s$ are the entries and HNSW edges
of shard $s$; perturbed code
$\widetilde{\bm v}_{u,s}$ of each $u\in V_s$;
fixed seed pairs
$\mathcal Z_{s_0,s}\subseteq V_{s_0}\times V_s$;
reference shard $s_0$ and target entry $a\in V_{s_0}$;
MDS dimension $r$ and threshold $\tau_{\mathrm{mds}}$

\Ensure Predicted corresponding-entry set
$\widehat{\mathcal C}_{\mathrm{mds}}(a)$ and
distinguishing score $T_{\mathrm{mds}}$

\For{$s=1,\ldots,M$}
    \State Let $D_s$ be the shard-local distance matrix,
    where
    $D_s[u,v]\gets
    d_H(\widetilde{\bm v}_{u,s},
        \widetilde{\bm v}_{v,s})$
    and $d_H$ denotes Hamming distance
    \State Let
    $X_s\gets\Call{ClassicalMDS}{D_s,r}$
    be the resulting $r$-dimensional coordinates
\EndFor

\State Initialize
$\widehat{\mathcal C}_{\mathrm{mds}}(a)
\gets\{(s_0,a)\}$
\State Let $\mathcal L$ store the accepted matching
scores associated with $a$

\For{$s\in[M]\setminus\{s_0\}$}
    \State Use $\mathcal Z_{s_0,s}$ to obtain an
    orthogonal rotation $R_s$ and translation $\bm b_s$
    \State Let
    $\widehat X_s\gets X_sR_s+\bm 1\bm b_s^T$
    be the coordinates aligned with shard $s_0$

    \ForAll{$(u,v)\in V_{s_0}\times V_s$}
        \State Define the MDS matching score as
        $S^{\mathrm{mds}}_{s_0,s}(u,v)
        \gets-\|X_{s_0}[u]-\widehat X_s[v]\|_2$
    \EndFor

    \State Let $\pi_{s_0,s}$ be the one-to-one
    maximum-score assignment that preserves
    $\mathcal Z_{s_0,s}$
    \State Let $v^*\gets\pi_{s_0,s}(a)$ be the entry
    assigned to $a$, or $\bot$ if no entry is accepted

    \If{$v^*\neq\bot$ and
    $S^{\mathrm{mds}}_{s_0,s}(a,v^*)
    \geq\tau_{\mathrm{mds}}$}
        \State Add $(s,v^*)$ to
        $\widehat{\mathcal C}_{\mathrm{mds}}(a)$
        \State Add
        $S^{\mathrm{mds}}_{s_0,s}(a,v^*)$
        to $\mathcal L$
    \EndIf
\EndFor

\State Let $T_{\mathrm{mds}}$ be the fixed scalar
aggregation of the scores in $\mathcal L$
\State \Return
$\widehat{\mathcal C}_{\mathrm{mds}}(a)$ and
$T_{\mathrm{mds}}$

\end{algorithmic}
\end{algorithm}
\end{minipage}
\hfill
\begin{minipage}[t]{0.485\textwidth}
\vspace{0pt}
\begin{algorithm}[H]
\caption{Topology-Based Cross-Shard Linkage}
\label{alg:topology-linkage}
\begin{algorithmic}[1]

\Require Server-visible shards
$\mathsf V=\{\mathcal G^{(s)}=(V_s,E_s)\}_{s=1}^{M}$;
fixed seed pairs $\mathcal Z_{s_0,s}$;
reference shard $s_0$ and target entry $a\in V_{s_0}$;
maximum hop count $h_{\max}$ and matching threshold
$\tau_{\mathrm{top}}$

\Ensure Predicted corresponding-entry set
$\widehat{\mathcal C}_{\mathrm{top}}(a)$ and
distinguishing score $T_{\mathrm{top}}$

\For{$s=1,\ldots,M$}
    \ForAll{$u\in V_s$}
        \State Let $\bm h_s(u)$ be the topology
        descriptor containing the HNSW level, node
        degree, neighbor-degree statistics, and
        structural statistics within $h_{\max}$ hops
    \EndFor
    \State Normalize all descriptors $\bm h_s(u)$
    using fixed calibration statistics
\EndFor

\State Initialize
$\widehat{\mathcal C}_{\mathrm{top}}(a)
\gets\{(s_0,a)\}$
\State Let $\mathcal L$ store the accepted matching
scores associated with $a$

\For{$s\in[M]\setminus\{s_0\}$}
    \ForAll{$(u,v)\in V_{s_0}\times V_s$}
        \State Define the initial topology score as
        $S^{\mathrm{top}}_{s_0,s}(u,v)
        \gets-\|\bm h_{s_0}(u)-\bm h_s(v)\|_2$
        \State Increase the score according to the
        fixed number of seed-supported neighboring
        correspondences
    \EndFor

    \State Initialize the correspondence map
    $\pi_{s_0,s}$ with the fixed seeds
    $\mathcal Z_{s_0,s}$

    \Repeat
        \State Find mutually best unmatched pairs
        according to $S^{\mathrm{top}}_{s_0,s}$
        \State Add pairs whose scores exceed
        $\tau_{\mathrm{top}}$ to $\pi_{s_0,s}$
        as pseudo-seeds
    \Until{no new pair is added}

    \State Let $v^*\gets\pi_{s_0,s}(a)$ be the entry
    assigned to $a$, or $\bot$ if no entry is accepted

    \If{$v^*\neq\bot$ and
    $S^{\mathrm{top}}_{s_0,s}(a,v^*)
    \geq\tau_{\mathrm{top}}$}
        \State Add $(s,v^*)$ to
        $\widehat{\mathcal C}_{\mathrm{top}}(a)$
        \State Add
        $S^{\mathrm{top}}_{s_0,s}(a,v^*)$
        to $\mathcal L$
    \EndIf
\EndFor

\State Let $T_{\mathrm{top}}$ be the fixed scalar
aggregation of the scores in $\mathcal L$
\State \Return
$\widehat{\mathcal C}_{\mathrm{top}}(a)$ and
$T_{\mathrm{top}}$

\end{algorithmic}
\end{algorithm}
\end{minipage}

\end{figure*}
\clearpage

\begin{figure*}[p]
\centering

\begin{minipage}[t]{0.485\textwidth}
\vspace{0pt}
\begin{algorithm}[H]
\caption{Graph-Signal-Based Cross-Shard Linkage}
\label{alg:gsgm-linkage}
\begin{algorithmic}[1]

\Require Server-visible shards
$\mathsf V=\{\mathcal G^{(s)}=(V_s,E_s)\}_{s=1}^{M}$;
perturbed code $\widetilde{\bm v}_{u,s}$ of each
$u\in V_s$; fixed seed pairs $\mathcal Z_{s_0,s}$;
reference shard $s_0$ and target entry $a\in V_{s_0}$;
filter depth $H$, filter coefficient $\alpha$, and
matching threshold $\tau_{\mathrm{gsgm}}$

\Ensure Predicted corresponding-entry set
$\widehat{\mathcal C}_{\mathrm{gsgm}}(a)$ and
distinguishing score $T_{\mathrm{gsgm}}$

\For{$s=1,\ldots,M$}
    \State Let $A_s$ be the adjacency matrix of
    $\mathcal G^{(s)}$
    \State Let $\widehat A_s$ be the degree-normalized
    adjacency matrix derived from $A_s$
    \State Let $F_s^{(0)}$ be the initial node-signal
    matrix obtained from fixed summaries of the
    perturbed binary embeddings

    \For{$h=1,\ldots,H$}
        \State Compute the signal matrix after the
        $h$-th filtering step as
        $F_s^{(h)}\gets
        \alpha F_s^{(0)}
        +(1-\alpha)\widehat A_sF_s^{(h-1)}$
    \EndFor

    \ForAll{$u\in V_s$}
        \State Let $\bm g_s(u)$ be the graph-signal
        descriptor obtained by concatenating
        $F_s^{(0)}[u],\ldots,F_s^{(H)}[u]$
    \EndFor
\EndFor

\State Initialize
$\widehat{\mathcal C}_{\mathrm{gsgm}}(a)
\gets\{(s_0,a)\}$
\State Let $\mathcal L$ store the accepted matching
scores associated with $a$

\For{$s\in[M]\setminus\{s_0\}$}
    \State Use $\mathcal Z_{s_0,s}$ to align
    $\bm g_s$ with the graph-signal descriptor space
    of shard $s_0$
    \State Denote the aligned descriptor by
    $\widehat{\bm g}_s$

    \ForAll{$(u,v)\in V_{s_0}\times V_s$}
        \State Define the graph-signal score as
        $S^{\mathrm{gsgm}}_{s_0,s}(u,v)
        \gets
        -\|\bm g_{s_0}(u)-\widehat{\bm g}_s(v)\|_2$
    \EndFor

    \State Let $\pi_{s_0,s}$ be the one-to-one
    maximum-score assignment that preserves
    $\mathcal Z_{s_0,s}$
    \State Let $v^*\gets\pi_{s_0,s}(a)$ be the entry
    assigned to $a$, or $\bot$ if no entry is accepted

    \If{$v^*\neq\bot$ and
    $S^{\mathrm{gsgm}}_{s_0,s}(a,v^*)
    \geq\tau_{\mathrm{gsgm}}$}
        \State Add $(s,v^*)$ to
        $\widehat{\mathcal C}_{\mathrm{gsgm}}(a)$
        \State Add
        $S^{\mathrm{gsgm}}_{s_0,s}(a,v^*)$
        to $\mathcal L$
    \EndIf
\EndFor

\State Let $T_{\mathrm{gsgm}}$ be the fixed scalar
aggregation of the scores in $\mathcal L$
\State \Return
$\widehat{\mathcal C}_{\mathrm{gsgm}}(a)$ and
$T_{\mathrm{gsgm}}$

\end{algorithmic}
\end{algorithm}
\end{minipage}
\hfill
\begin{minipage}[t]{0.485\textwidth}
\vspace{0pt}
\begin{algorithm}[H]
\caption{Cycle-Consistent Joint Cross-Shard Linkage}
\label{alg:joint-linkage}
\begin{algorithmic}[1]

\Require Pairwise scores
$S^{\mathcal A}_{s,r}(u,v)$ obtained by applying the
scoring stages of MDS, topology, and GSGM to every
shard pair, where
$\mathcal A\in
\{\mathrm{mds},\mathrm{top},\mathrm{gsgm}\}$;
calibration mean $\mu_{\mathcal A}$, standard deviation
$\sigma_{\mathcal A}$, and fixed weight
$\omega_{\mathcal A}$ for each method;
fixed seed pairs $\mathcal Z_{s,r}$;
reference shard $s_0$ and target entry $a$;
thresholds $\tau_{\mathrm{joint}}$ and
$\tau_{\mathrm{cyc}}$; maximum iteration count
$I_{\max}$

\Ensure Jointly predicted corresponding-entry set
$\widehat{\mathcal C}_{\mathrm{joint}}(a)$ and
distinguishing score $T_{\mathrm{joint}}$

\ForAll{shard pairs $(s,r)$}
    \ForAll{candidate pairs $(u,v)\in V_s\times V_r$}
        \ForAll{$\mathcal A\in
        \{\mathrm{mds},\mathrm{top},\mathrm{gsgm}\}$}
            \State Let
            $\overline S^{\mathcal A}_{s,r}(u,v)
            \gets
            \frac{S^{\mathcal A}_{s,r}(u,v)
            -\mu_{\mathcal A}}
            {\sigma_{\mathcal A}}$
            be the standardized score of method
            $\mathcal A$
        \EndFor

        \State Define the combined score as
        $S^{\mathrm{joint}}_{s,r}(u,v)
        \gets
        \sum_{\mathcal A}
        \omega_{\mathcal A}
        \overline S^{\mathcal A}_{s,r}(u,v)$
    \EndFor

    \State Let $\pi_{s,r}$ be the one-to-one
    maximum-score assignment that preserves
    $\mathcal Z_{s,r}$
\EndFor

\For{$i=1,\ldots,I_{\max}$}
    \State Let $\mathsf{changed}$ indicate whether a
    new pseudo-seed is added in this iteration
    \State $\mathsf{changed}\gets\mathsf{false}$

    \ForAll{predicted pairs $v=\pi_{s,r}(u)$}
        \State Let $\rho_{s,r}(u,v)$ be the fraction of
        available third shards $g$ for which the
        indirect mapping satisfies
        $\pi_{g,r}(\pi_{s,g}(u))=v$

        \If{$S^{\mathrm{joint}}_{s,r}(u,v)
        \geq\tau_{\mathrm{joint}}$ and
        $\rho_{s,r}(u,v)\geq\tau_{\mathrm{cyc}}$}
            \If{$(u,v)$ is not already a seed}
                \State Add $(u,v)$ as a pseudo-seed
                \State $\mathsf{changed}
                \gets\mathsf{true}$
            \EndIf
        \EndIf
    \EndFor

    \If{not $\mathsf{changed}$}
        \State \textbf{break}
    \EndIf

    \State Recompute each affected assignment
    $\pi_{s,r}$ using the updated seeds
\EndFor

\State Initialize
$\widehat{\mathcal C}_{\mathrm{joint}}(a)
\gets\{(s_0,a)\}$
\State Let $\mathcal L$ store the accepted joint
matching scores associated with $a$

\For{$s\in[M]\setminus\{s_0\}$}
    \State Let $v^*\gets\pi_{s_0,s}(a)$ be the entry
    jointly assigned to $a$

    \If{$v^*\neq\bot$,
    $S^{\mathrm{joint}}_{s_0,s}(a,v^*)
    \geq\tau_{\mathrm{joint}}$, and
    $\rho_{s_0,s}(a,v^*)\geq\tau_{\mathrm{cyc}}$}
        \State Add $(s,v^*)$ to
        $\widehat{\mathcal C}_{\mathrm{joint}}(a)$
        \State Add
        $S^{\mathrm{joint}}_{s_0,s}(a,v^*)$
        to $\mathcal L$
    \EndIf
\EndFor

\State Let $T_{\mathrm{joint}}$ be the fixed scalar
aggregation of the scores in $\mathcal L$
\State \Return
$\widehat{\mathcal C}_{\mathrm{joint}}(a)$ and
$T_{\mathrm{joint}}$

\end{algorithmic}
\end{algorithm}
\end{minipage}

\end{figure*}
\clearpage

\end{document}